\newcommand{\nats}{\mathbb{N}}
\newcommand{\ints}{\mathbb{Z}}
\newcommand{\nnints}{\mathbb{Z}_{\ge0}}
\newcommand{\reals}{\mathbb{R}}
\newcommand{\nnreals}{\mathbb{R}_{\ge0}}
\newcommand{\mathone}{\mathds{1}} 
\renewcommand{\epsilon}{\varepsilon}
\newcommand{\calA}{\mathcal{A}}
\newcommand{\calD}{\mathcal{D}}
\newcommand{\calG}{\mathcal{G}}
\newcommand{\calM}{\mathcal{M}}
\newcommand{\calR}{\mathcal{R}}
\newcommand{\calS}{\mathcal{S}}
\newcommand{\calT}{\mathcal{T}}
\newcommand{\calY}{\mathcal{Y}}
\newcommand{\E}{\operatorname{\mathbb{E}}}
\newcommand{\V}{\operatorname{\mathbb{V}}}
\newcommand{\bmf}{\mathbf{f}}
\newcommand{\bmx}{\mathbf{x}}
\newcommand{\bmy}{\mathbf{y}}
\newcommand{\bmz}{\mathbf{z}}
\newcommand{\hf}{\hat{f}}
\newcommand{\tf}{\tilde{f}}
\newcommand{\tih}{\tilde{h}}
\newcommand{\tn}{\tilde{n}}
\newcommand{\tx}{\tilde{x}}
\newcommand{\bmhf}{\mathbf{\hf}}
\newcommand{\bmth}{\mathbf{\tilde{h}}}
\newcommand{\GMGA}{G_{\text{MGA}}}
\def\SBin{\textsf{SBin-Shuffle}}
\def\SAGeo{\textsf{SAGeo-Shuffle}}
\def\SOGeo{\textsf{S1Geo-Shuffle}}
\def\GRRS{\textsf{GRR-Shuffle}}
\def\OUES{\textsf{OUE-Shuffle}}
\def\OLHS{\textsf{OLH-Shuffle}}
\def\RAPS{\textsf{RAPPOR-Shuffle}}
\def\Shuffle{\textsf{-Shuffle}}
\def\BC{\textsf{BC20}}
\def\CM{\textsf{CM22}}
\def\LWY{\textsf{LWY22}}
\def\Bin{\textsf{Bin}}
\def\AGeo{\textsf{AGeo}}
\def\OGeo{\textsf{1Geo}}
\def\ageo{\textsf{ageo}}
\def\nonageo{\textsf{non-ageo}}
\def\onegeo{\textsf{1geo}}
\newcommand{\colorB}[1]{\textcolor{black}{#1}}
\newtheorem{definition}{Definition}
\newtheorem{theorem}{Theorem}
\newtheorem{lemma}{Lemma}
\newtheorem{proposition}{Proposition}
\newif\ifconferenceon\conferenceontrue
\newcommand{\conference}[1]{#1}
\newcommand{\arxiv}[1]{}
\newcommand{\conference}[1]{}
\newcommand{\arxiv}[1]{#1}
\begin{document}
\title{Augmented Shuffle Protocols for Accurate and Robust Frequency Estimation\\ under Differential Privacy}

\author{\IEEEauthorblockN{Takao Murakami}
\IEEEauthorblockA{ISM/AIST\\
Email: tmura@ism.ac.jp}
\and
\IEEEauthorblockN{Yuichi Sei}
\IEEEauthorblockA{UEC\\
Email: seiuny@uec.ac.jp}
\and
\IEEEauthorblockN{Reo Eriguchi}
\IEEEauthorblockA{AIST\\
eriguchi-reo@aist.go.jp}}

\maketitle

\begin{abstract}
The shuffle model of DP (Differential Privacy) provides high utility by introducing a shuffler that randomly shuffles noisy data sent from users. 
However, recent studies show that existing shuffle protocols suffer from the following two major drawbacks. 
First, they are vulnerable to local data poisoning attacks, which manipulate the statistics about input data by sending crafted data, especially when the privacy budget $\epsilon$ is small. 
Second, the actual value of $\epsilon$ is increased by collusion attacks by the data collector and users. 

In this paper, we 
address these two issues by 
thoroughly exploring the potential of 
the \textit{augmented shuffle model}, which allows the shuffler to perform additional operations, such as random sampling and dummy data addition. 
Specifically, we propose a generalized framework for \textit{local-noise-free protocols} in which users send (encrypted) input data to the shuffler without adding noise. 
We show that this generalized protocol provides DP and is robust to 
the above two attacks 
if a simpler mechanism that performs the same process on binary input data provides DP. 
Based on this framework, 
we propose three concrete protocols providing DP and robustness against the two attacks. 
Our first protocol 
\colorB{generates the number of dummy values for each item from a binomial distribution}
and provides higher utility than several state-of-the-art existing shuffle protocols. 
Our second protocol significantly improves the utility of our first protocol by introducing a novel \colorB{dummy-count} distribution: \textit{asymmetric two-sided geometric distribution}. 
Our third protocol is a special case of our second protocol and provides pure $\epsilon$-DP. 
We show the effectiveness of our protocols through theoretical analysis and comprehensive experiments. 
\end{abstract}

\IEEEpeerreviewmaketitle

\section{Introduction}
\label{sec:intro}
DP (Differential Privacy)~\cite{Dwork_ICALP06,DP} is well known as the gold standard for private data analysis. 
It offers strong privacy guarantees when a parameter (privacy budget) $\epsilon$ is small. 
To date, numerous studies have been made on 
central DP or LDP (Local DP)~\cite{Kasiviswanathan_FOCS08,Duchi_FOCS13}. 
Central DP assumes a model in which a single central server has personal data of all users and obfuscates some statistics (e.g., mean, frequency distribution) about the data. 
The major drawback of this model is that 
all personal data are held by a single server and, therefore, 
might be leaked from the server by cyber-attacks~\cite{data_breach}. 
LDP addresses this issue by introducing a model in which users obfuscate their personal data by themselves before sending them to a data collector. 
Since LDP does not assume any trusted third party, it does not suffer from data leakage issues. 
However, LDP destroys data utility in many practical scenarios, as a large amount of noise is added to each user's data. 

Since Google implemented Prochlo~\cite{Bittau_SOSP17}, shuffle DP~\cite{Balcer_ITC20,Balcer_SODA21,Balle_CRYPTO19,Balle_CCS20,Balle_NeurIPS20,Cheu_EUROCRYPT19,Cheu_SP22,Erlingsson_SODA19,Feldman_FOCS21,Ghazi_EUROCRYPT21,Girgis_AISTATS21,Girgis_CCS21,Girgis_NeurIPS21,Imola_CCS22,Li_TDSC23,Liu_AAAI21,Luo_CCS22,Meehan_ICLR22,Wang_PVLDB20} has been 
extensively 
studied to bridge the gap between central DP and LDP. 
Shuffle DP introduces an intermediate server called the \textit{shuffler}, and works as follows. 
First, each user adds noise to her input value, typically using an LDP mechanism. 
Then, she encrypts her noisy value and sends it to the shuffler. 
The shuffler randomly shuffles the (encrypted) noisy values of all users and sends them to the data collector. Finally, the data collector decrypts them. 
Although most shuffle protocols only allow shuffling operations on the shuffler side, some shuffle protocols~\cite{Bittau_SOSP17,Girgis_NeurIPS21,Wang_PVLDB20} (including 
Prochlo~\cite{Bittau_SOSP17}) allow additional operations, such as random sampling and dummy data addition, to the shuffler. 
To differentiate between the two, we refer to the former model as a \textit{pure shuffle model} and the latter as an \textit{augmented shuffle model}. 

Since the shuffling 
removes the linkage between users and noisy values, it amplifies the privacy guarantees. 
Specifically, 
it can significantly reduce the privacy budget 
$\epsilon$ under the assumption that the data collector does not collude with the shuffler. 
Thus, the shuffle model can significantly improve the utility of LDP at the same value of $\epsilon$.

However, the existing shuffle protocols suffer 
from two major drawbacks. 
First, 
they are 
vulnerable to \textit{local data poisoning attacks}~\cite{Cao_USENIX21}, 
which inject some fake users and craft 
data sent from the fake users 
to increase the estimated frequencies of some target items (i.e., to promote 
these items). 
In particular, 
Cao \textit{et al.}~\cite{Cao_USENIX21} show 
that LDP 
protocols 
are less secure against the poisoning attacks 
when the privacy budget $\epsilon$ is smaller. 
This is called a fundamental security-privacy trade-off~\cite{Cao_USENIX21} 
-- LDP 
protocols 
cannot provide high privacy and high robustness against data poisoning simultaneously. 
This is caused by the fact that genuine users need to add LDP noise to their input values, whereas fake users do not. 
Since most existing shuffle protocols apply LDP mechanisms on the user side, they also suffer from the fundamental security-privacy trade-off. 
Some studies~\cite{Balcer_ITC20,Luo_CCS22} propose multi-message protocols, where each user sends multiple noisy values that do not provide LDP. 
However, as this paper shows, these protocols are also vulnerable to the poisoning attacks for the same reason -- 
genuine users need to add noise, whereas fake users do not. 

Second, 
the existing shuffle protocols are 
vulnerable to \textit{collusion attacks by the data collector and users} 
(referred to as \textit{collusion with users} for shorthand). 
Specifically, 
Wang \textit{et al.}~\cite{Wang_PVLDB20} point out 
that the data collector may collude with some users (or compromise some user accounts) to obtain their 
noisy values. 
In this case, the privacy budget $\epsilon$ 
increases with increase in the number of colluding users (as formally proved in 
Section~\ref{sub:poisoning}), 
as the data collector can reduce the number of shuffled values. 
This issue is crucial because it is difficult to know the number of colluding users in practice. 
In other words, it is difficult to know the actual value of $\epsilon$ in the existing shuffle model. 

Note that these two issues are inevitable 
in the pure shuffle model. 
Specifically, the shuffler only performs random shuffling in the pure shuffle model. 
Thus, users need to add noise to their input values to ensure privacy. 
Then, this model is vulnerable to the poisoning and collusion attacks, as (i) fake users do not need to add noise, and (ii) the data collector can increase $\epsilon$ by obtaining some noisy values. 

In this work, we make the first attempt to address the above two issues by 
thoroughly exploring the potential of the \textit{augmented shuffle model}. 
In particular, we propose new protocols in the augmented shuffle model: \textbf{local-noise-free protocols}. 
In our 
protocols, users encrypt their input values \textit{without adding any noise} and send them to the shuffler. 
The shuffler randomly shuffles the encrypted values of all users and sends them to the data collector, who then decrypts them. 
Since the random shuffling alone cannot provide DP in this case, 
we introduce two additional operations on the shuffler side before shuffling: 
random sampling and dummy data addition. 
The former randomly selects each encrypted value with some probability, whereas the latter adds encrypted dummy values. 
The shuffler can easily perform these two operations because they can be done without decrypting data sent from users. 
Our protocols are 
robust to 
both data poisoning and collusion with users, 
as 
they do not 
add any noise on the user side -- 
they add 
noise on the shuffler side without seeing the input values of users. 
To our knowledge, we are the first to introduce such 
protocols (see Section~\ref{sec:related} for more details). 

We focus on frequency estimation~\cite{Erlingsson_CCS14,Fanti_PoPETs16,Kairouz_ICML16,Wang_USENIX17,Wang_PVLDB20}, in which the data collector estimates a frequency of input values for each item \colorB{(i.e., histogram)}, and propose 
a generalized framework for local-noise-free protocols. 
In our framework, 
the shuffler performs random sampling and dummy data addition, where the number of dummy values for each item follows a pre-determined distribution \colorB{called a \textit{dummy-count distribution}}. 
\colorB{Our key insight is that random sampling and adding dummies, when followed by shuffling, are equivalent to adding discrete noise to each bin of the histogram.} 
We reduce DP of this generalized protocol to that of a simpler mechanism called a \textit{binary input mechanism}, which performs 
the same random sampling and dummy data addition process on binary input data. 
Specifically, we prove that if the 
binary input mechanism 
provides DP, our generalized protocol also provides DP. 
Moreover, we prove that, in this case, our generalized protocol is robust to both the poisoning and collusion attacks. 
In particular, we prove that our generalized protocol 
does not suffer from the security-privacy trade-off (i.e., the robustness 
to 
poisoning does \textit{not} depend on $\epsilon$) and that the value of $\epsilon$ is \textit{not} increased by collusion with users. 
We also analyze the estimation error and communication cost of our generalized protocol. 

Based on our theoretical analysis, we propose three concrete local-noise-free protocols in our framework. 
In each of the three cases, the corresponding binary input mechanism provides DP. 
Thus, all of our three protocols provide DP and are robust to the poisoning and collusion attacks. 

Our first protocol, 
\SBin{} (Sample, Binomial Dummies, and Shuffle), is a simple local-noise-free protocol 
\colorB{that uses a binomial distribution as a dummy-count distribution}. 
For this protocol, 
we present a tighter bound on the number of trials in the binomial distribution 
than the previous results~\cite{Dwork_EUROCRYPT06,Agarwal_NeurIPS18}.
Using our bound, 
we show that 
\SBin{} achieves a smaller estimation error than 
state-of-the-art shuffle protocols, including the ones based on the GRR (Generalized Randomized Response)~\cite{Kairouz_ICML16,Wang_USENIX17}, 
OUE (Optimized Unary Encoding)~\cite{Wang_USENIX17}, OLH (Optimized Local Hashing)~\cite{Wang_USENIX17}, and RAPPOR~\cite{Erlingsson_CCS14}, and three multi-message protocols in~\cite{Balcer_ITC20,Cheu_SP22,Luo_CCS22}. 
This means that \SBin{} outperforms all of these seven existing shuffle protocols in terms of both robustness and accuracy. 

However, \SBin{} has room for improvement in two aspects. 
First, the accuracy can still be significantly improved, as shown in this paper. 
Second, \SBin{} provides only $(\epsilon,\delta)$-DP (i.e., approximate DP)~\cite{DP} and cannot provide $\epsilon$-DP (i.e., pure DP). 
Note that all of the existing shuffle protocols explained above have the same issue. 
That is, how to provide $\epsilon$-DP in the shuffle model remains open.

We address these two issues by introducing two additional protocols. 
Specifically, our second protocol, \SAGeo{} (Sample, Asymmetric Two-Sided Geometric Dummies, and Shuffle), 
significantly improves the accuracy of 
\SBin{} by introducing a novel \colorB{dummy-count} distribution: \textit{asymmetric two-sided geometric distribution}. 
To our knowledge, we are the first to introduce this distribution to provide DP. 
As the sampling probability in random sampling decreases, the left-hand curve in this distribution becomes steeper, reducing the variance of the \colorB{dummy-count} distribution. 
We show that \SAGeo{} is very promising in terms of accuracy. 
For example, our experimental results show that \SAGeo{} outperforms \SBin{} (resp.~the existing shuffle protocols) by one or two (resp.~two to four) orders of magnitude in terms of squared error.

Our third protocol, \SOGeo{} (Sample, One-Sided Geometric Dummies, and Shuffle), is a special case of \SAGeo{} where the sampling probability is $1 - e^{-\frac{\epsilon}{2}}$. 
\colorB{In this case, a one-sided geometric distribution is used as a dummy-count distribution}. 
A notable feature of \SOGeo{} is that it provides pure $\epsilon$-DP, offering one solution to the open problem explained above.

\smallskip{}
\noindent{\textbf{Our Contributions.}}~~Our contributions are as follows: 
\begin{itemize}
\item We propose a generalized framework for local-noise-free protocols in the augmented shuffle model. 
We 
rigorously 
analyze the privacy, robustness, estimation error, and communication cost of our generalized protocol. 
In particular, we prove that if 
a simpler binary input mechanism 
provides DP, 
our generalized protocol also provides DP and, moreover, is robust to both data poisoning and collusion with users. 
\item We propose three DP protocols in our framework: \SBin{}, \SAGeo{}, and \SOGeo{}. 
We prove that 
\SBin{} provides higher accuracy than seven state-of-the-art shuffle protocols: four single-message protocols based on the GRR, OUE, OLH, and RAPPOR, and three multi-message protocols in~\cite{Balcer_ITC20,Cheu_SP22,Luo_CCS22}. 
Furthermore, we prove that \SAGeo{} significantly improves the accuracy of \SBin{} and that \SOGeo{} provides pure $\epsilon$-DP. 
\item We show the effectiveness of our protocols through comprehensive experiments, which compare our protocols with seven existing protocols explained above and three existing defenses against the poisoning and collusion attacks~\cite{Cao_USENIX21,Sun_ICDE24,Wang_PVLDB20} using four real datasets. 
\end{itemize}
We published our source code on GitHub~\cite{SDPwoLN}.

\section{Related Work}
\label{sec:related}

\noindent{\textbf{Pure/Augmented Shuffle DP.}}~~Most 
existing shuffle protocols assume the pure shuffle model, where the shuffler only shuffles data. 
For example, privacy amplification bounds in this model are analyzed in~\cite{Erlingsson_SODA19,Balle_CRYPTO19,Cheu_EUROCRYPT19,Feldman_FOCS21,Girgis_CCS21}. 
The pure shuffle model is applied to tabular data in~\cite{Li_TDSC23,Meehan_ICLR22,Wang_PVLDB20}, 
federated learning in~\cite{Balle_NeurIPS20,Girgis_AISTATS21,Girgis_NeurIPS21,Liu_AAAI21}, and graph data in~\cite{Imola_CCS22}. 
Multi-message protocols, in which a user sends multiple noisy values to the shuffler, 
are studied for real summation~\cite{Balle_CCS20} and frequency estimation~\cite{Balcer_ITC20,Cheu_SP22,Ghazi_EUROCRYPT21,Luo_CCS22} in the pure shuffle model. 
The connection between the pure shuffle model and pan-privacy is studied in~\cite{Balcer_SODA21}. 

A handful of studies assume the augmented shuffle model, which allows additional operations, such as random sampling and dummy data addition, to the shuffler. 
For example, 
Prochlo~\cite{Bittau_SOSP17} introduces randomized thresholding, where the shuffler discards reports from users if the number of reports is below a randomized threshold. 
Girgis \textit{et al.}~\cite{Girgis_NeurIPS21} introduce a subsampled shuffle protocol in which the shuffler randomly samples users to improve privacy amplification bounds. 
Wang \textit{et al.}~\cite{Wang_PVLDB20} propose an augmented shuffle protocol 
that injects uniformly sampled dummy values on the shuffler side as a defense against collusion with users 
(see ``Collusion Attacks'' in this section for more details). 

All of the above studies on the pure/augmented shuffle model 
assume that users add 
noise to their input values. 
Thus, they are vulnerable to 
poisoning and collusion attacks, 
as 
explained in Section~\ref{sec:intro}. 
Our local-noise-free protocols address this issue by not requiring users to add noise.

\smallskip{}
\noindent{\textbf{Data Poisoning Attacks.}}~~Data poisoning attacks against LDP 
protocols 
have been recently studied~\cite{Cao_USENIX21,Wu_USENIX22,Cheu_SP21,Imola_arXiv22}. 
Their results can be applied to most existing shuffle protocols, as they use LDP mechanisms on the user side. 
For example, targeted and untargeted poisoning attacks for categorical data are proposed in~\cite{Cao_USENIX21} and~\cite{Cheu_SP21}, respectively. 
Targeted poisoning attacks for key-value data and graph data are studied in~\cite{Wu_USENIX22} and~\cite{Imola_arXiv22}, respectively. 
We focus on targeted attacks for categorical data in the same way as~\cite{Cao_USENIX21}. 

Cao \textit{et al.}~\cite{Cao_USENIX21} explore some defenses against poisoning attacks. 
Among them, a normalization technique, which normalizes a minimum estimate to $0$, performs the best when the number of targeted items is one or two. 
In addition, only the normalization technique is applicable to the GRR. 
Sun \textit{et al.}~\cite{Sun_ICDE24} propose LDPRecover, which recovers the frequencies of genuine users by eliminating the impact of fake users. 
Although they also propose another defense called LDPRecover*, it assumes that the server perfectly knows the set of target items selected by the adversary in advance, which does not hold in practice. 
A detection method in~\cite{Huang_TKDE24} has the same issue.

In our experiments, we evaluate the normalization technique~\cite{Cao_USENIX21} and LDPRecover~\cite{Sun_ICDE24} as defenses against poisoning attacks. 
Our results show that 
their effectiveness is limited, especially when $\epsilon$ is small.

\smallskip{}
\noindent{\textbf{Collusion Attacks.}}~~Wang \textit{et al.}~\cite{Wang_PVLDB20} point out that the data collector can collude with users to increase $\epsilon$ in the shuffle model. 
They also propose an augmented shuffle protocol, where the shuffler injects dummy values uniformly sampled from the range of LDP mechanisms, 
as a defense against this collusion attack. 
However, their protocol is still vulnerable to 
this collusion, 
as it adds LDP noise on the user side. 

In our experiments, we evaluate the defense in~\cite{Wang_PVLDB20} and show that $\epsilon$ in their defense is still increased (and $\epsilon$ in
our protocols is not increased) by collusion with users. 

\smallskip{}
\noindent{\textbf{MPC-DP.}}~~Finally, we note that our local-noise-free protocols are related to the 
MPC (Multi-Party Computation)-DP model based on two servers~\cite{Bell_CCS22}. 
Specifically, in their MPC-DP protocol, users do not add noise to their input values, and two servers calculate sparse histograms from their encrypted input values by using homomorphic encryption. 

Our local-noise-free protocols differ from their MPC-DP protocol in two ways. 
First, the protocol in~\cite{Bell_CCS22} requires as many as four rounds of interaction between the two servers and is, therefore, inefficient. 
In contrast, our protocols need only one round between the shuffler and the data collector. 
Second, the protocol in~\cite{Bell_CCS22} assumes that an underlying public-key encryption scheme is \textit{homomorphic}, i.e., it needs to support computation over encrypted data.
It limits the class of encryption schemes that the protocol can be based on.
In contrast, our protocols can be based on any public-key encryption scheme \colorB{(e.g., RSA, 
ECIES, EC-ElGamal).}

\section{Preliminaries}
\label{sec:preliminaries}

\subsection{Notations}
\label{sub:notations}
Let $\nats$, $\ints$, $\nnints$, $\reals$, $\nnreals$ be the sets of natural numbers, integers, non-negative integers, real numbers, and non-negative real numbers, respectively. 
For $a \in \nats$, let $[a] = \{1,2,\ldots,a\}$. 
We denote the expectation and the variance of a random variable $X$ by $\E[X]$ and $\V[X]$, respectively. 
We denote the natural logarithm by $\log$ throughout the paper. 

We focus on frequency estimation as a task of the data collector. 
Let $n \in \nats$ be the number of users. 
For $i \in [n]$, let $u_i$ be the $i$-th user. 
Let $d \in \nats$ be the number of items. 
We express a user's input value as an index of the corresponding item, i.e., natural number from $1$ to $d$. 
Let $x_i \in [d]$ be an input value of user $u_i$. 
For $i \in [d]$, let 
$f_i \in [0,1]$ 
be the relative frequency\footnote{Following~\cite{Cao_USENIX21,Wang_PVLDB20}, we consider relative frequency. 
We can easily calculate absolute frequency and its estimate by multiplying $f_i$ and $\tf_i$, respectively, by $n$.} of the $i$-th item; i.e., 
$f_i = \frac{1}{n} \sum_{j=1}^n \mathone_{x_j = i}$, 
where $\mathone_{x_j = i}$ 
takes $1$ if $x_j = i$ and $0$ otherwise. 
Note that 
$\sum_{i=1}^d f_i = 1$. 
We denote the frequency distribution by $\bmf = (f_1, \cdots, f_d)$. 
Let $\hf_i \in \reals$ be an estimate of $f_i$. 
We denote the estimate of $\bmf$ by $\bmhf = (\hf_1, \cdots, \hf_d)$.

\subsection{DP (Differential Privacy)}
\label{sub:DP}

\noindent{\textbf{Neighboring Databases and DP.}}~~In this paper, we use $(\epsilon,\delta)$-DP~\cite{DP} as a privacy notion. 
We first introduce the notion of $\Omega$-neighboring databases~\cite{Beimel_CRYPTO08}: 

\begin{definition} [$\Omega$-neighboring databases~\cite{Beimel_CRYPTO08}] 
Let $\Omega \subsetneq [n]$ be a 
strict subset of $[n]$. 
We say that two databases $D = (x_1, \cdots, x_n) \in 
[d]^n$ and $D' = (x'_1, \cdots, x'_n) \in [d]^n$ are \emph{$\Omega$-neighboring} if they differ on a single entry whose index $i$ is \textit{not} in $\Omega$; i.e., $x_i \ne x'_i$ for some $i \in [n] \setminus \Omega$ and $x_j = x'_j$ for any $j \in [n] \setminus \{i\}$. 
We simply say that $D$ and $D'$ are \emph{neighboring} 
if $\Omega = \emptyset$. 
\end{definition}
The notion of $\Omega$-neighboring databases is useful when the adversary 
colludes with users $\{u_i | i \in \Omega\}$ 
and attempts to infer input values of other users. 
Balcer \textit{et al.}~\cite{Balcer_SODA21} define robust shuffle DP, which expresses DP guarantees as a function of honest users. 
However, robust shuffle DP focuses on the pure shuffle model, where the honest users execute LDP mechanisms and the shuffler only shuffles their noisy data. 
We use the notion of $\Omega$-neighboring databases in~\cite{Beimel_CRYPTO08} because it is more flexible and can be applied to both the pure and augmented shuffle models. 
Using $\Omega$-neighboring databases, we formally prove the robustness of the existing protocols and our protocols against the adversary colluding with users (Proposition~\ref{prop:shuffle_collusion} and Theorem~\ref{thm:generalization}, respectively). 

We can define $(\epsilon,\delta)$-DP using $\Omega$-neighboring databases: 

\begin{definition} [$(\epsilon,\delta)$-DP] \label{def:DP} 
Let $\epsilon \in \nnreals$ and $\delta \in [0,1]$. 
A randomized algorithm $\calM$ with domain $[d]^n$ provides \emph{$(\epsilon,\delta)$-DP for $\Omega$-neighboring databases} if for any $\Omega$-neighboring databases $D,D' \in [d]^n$ 
and any $S \subseteq \mathrm{Range}(\calM)$, 
\begin{align}
\Pr[\calM(D) \in S] \leq e^\epsilon \Pr[\calM(D') \in S] + \delta.
\label{eq:DP_inequality}
\end{align}
We simply say that $\calM$ provides \emph{$(\epsilon,\delta)$-DP} if $\Omega = \emptyset$. 
\end{definition}

The parameter $\epsilon$ is called the privacy budget. 
Both $\epsilon$ and $\delta$ should be small. 
For example, $\epsilon \geq 5$ is considered unsuitable in most use cases~\cite{DP_Li}. 
$\delta$ should be much smaller than $\frac{1}{n}$ to 
rule out 
the release-one-at-random mechanism~\cite{Cummings_arXiv23}. 

Note that Definition~\ref{def:DP} considers bounded DP~\cite{Kifer_SIGMOD11}, where $D'$ is obtained from $D$ by changing one record. 
The existing shuffle protocols cannot provide unbounded DP~\cite{Kifer_SIGMOD11}, where $D'$ is obtained from $D$ by adding or removing one record, as the shuffler knows the number $n$ of users. 
The same applies to our proposed protocols.

\smallskip{}
\noindent{\textbf{LDP.}}~~LDP~\cite{Kasiviswanathan_FOCS08} is a special case of DP where $n=1$. 
In LDP, a user adds 
local noise to her input value using an obfuscation mechanism:\begin{definition} 
[$(\epsilon,\delta)$-LDP] 
\label{def:LDP} 
Let $\epsilon \in \nnreals$. 
An obfuscation mechanism 
$\calR$ with domain $[d]$ provides 
\emph{$(\epsilon,\delta)$-LDP} 
if for any input values $x,x' \in [d]$ and any $S \subseteq \mathrm{Range}(\calR)$, 
\begin{align*}
\Pr[\calR(x) \in S] \leq e^\epsilon \Pr[\calR(x') \in S] + \delta.
\end{align*}
We simply say that $\calR$ provides \emph{$\epsilon$-LDP} if it provides $(\epsilon,0)$-LDP. 
\end{definition}

Examples of 
obfuscation mechanisms 
$\calR$ providing 
LDP 
include 
the GRR~\cite{Kairouz_ICML16,Wang_USENIX17}, OUE~\cite{Wang_USENIX17}, OLH~\cite{Wang_USENIX17}, 
RAPPOR~\cite{Erlingsson_CCS14}, and the mechanism in~\cite{Cheu_SP22}. 

\subsection{Pure Shuffle DP Model}
\label{sub:existing_shuffle}

\noindent{\textbf{Assumptions and Privacy Amplification.}}~~Below, we explain the pure shuffle protocols. 
Assume that the data collector has a secret key and publishes the corresponding public key. 
Each user $u_i$ perturbs her input value $x_i$ using an obfuscation mechanism $\calR$ 
common to all users. 
Then, user $u_i$ encrypts the noisy value $\calR(x_i)$ using the public key and sends it to the shuffler. 
The shuffler randomly shuffles the noisy values $\calR(x_1),\ldots,\calR(x_n)$ and sends them to the data collector. 
Finally, the data collector decrypts the shuffled noisy values using the secret key. 

Assume 
that the data collector does not collude with the shuffler and does not have access to any non-shuffled noisy value 
$\calR(x_i)$. 
Then, the data collector cannot link the shuffled noisy values to the users. 
In addition, the shuffler cannot access the noisy values $\calR(x_i)$, as she cannot decrypt them. 
Under this assumption, the shuffled values are protected with $(\epsilon,\delta)$-DP, where $\epsilon$ can be expressed as a function of $n$ and $\delta$, i.e., 
\begin{align}
\epsilon = g(n, \delta). 
\label{eq:shuffle_epsilon_f}
\end{align}
The function $g$ is \textit{monotonically decreasing} with respect to $n$ and $\delta$. 
The definition of $g$ depends on the protocol. 
For example, 
if $\calR$ provides $\epsilon_L$-LDP, 
the following guarantees hold: 

\begin{theorem} [Privacy amplification 
by shuffling~\cite{Feldman_FOCS21}] \label{thm:shuffle}
Let 
$\epsilon_L \in \nnreals$. 
Let $D = (x_1, \cdots, x_n) \in [d]^n$. 
Let $\calR: [d] \rightarrow \calY$ be an obfuscation mechanism. 
Let $\calM_S: [d]^n \rightarrow \calY^n$ be a pure shuffle algorithm that given a dataset $D$, 
outputs shuffled values $\calM_S(D) = (\calR(x_{\pi(1)}), \ldots, \calR(x_{\pi(n)}))$, where $\pi$ is a uniform random permutation over $[n]$. 
If $\calR$ provides $\epsilon_L$-LDP, then 
for any $\delta \in [0,1]$, 
$\calM_S$ provides $(\epsilon, \delta)$-DP with 
$\epsilon = g(n, \delta)$, where 
\begin{align}
g(n, 
\delta) = 
\textstyle{
\log \left( 1 + \frac{e^{\epsilon_L}-1}{e^{\epsilon_L}+1} \left( \frac{8\sqrt{e^{\epsilon_L} \log(4/\delta)}}{\sqrt{n}} + \frac{8 e^{\epsilon_L}}{n} \right) \right) 
}
\label{eq:shuffle_epsilon}
\end{align}
if $\epsilon_L \leq \log (\frac{n}{16 \log (2/\delta)})$ and $g(n,\delta) = \epsilon_L$ otherwise. 
\end{theorem}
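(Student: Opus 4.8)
The plan is to follow the ``clones'' reduction behind this privacy amplification result: among the $n$ reports that the data collector receives in random order, one identifies a random pool of reports that are mutually statistically indistinguishable, and then argues that the single report differing between $D$ and $D'$ is hidden inside that pool.

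First I would reduce the arbitrary $\epsilon_L$-LDP randomizer $\calR$ to a canonical three-symbol one. Let $i^\star\in[n]$ be the index on which $D$ and $D'$ differ, with inputs $x=x_{i^\star}$ and $x'=x'_{i^\star}$. Since $\calM_S(D)$ is a post-processing of the unordered tuple $(\calR(x_1),\dots,\calR(x_n))$ and the $\epsilon_L$-LDP property forces every output distribution to lie pointwise within a factor $e^{\epsilon_L}$ of every other, a decomposition/coupling argument shows that it suffices to prove the bound when $\calR$ emits, on every input, a common ``blanket'' symbol $\bot$ with probability $q=\tfrac{2}{e^{\epsilon_L}+1}$ (the worst-case common mass $1-\mathrm{TV}(\calR(x),\calR(x'))$ over $\epsilon_L$-LDP output pairs) and otherwise a symbol revealing the input; note the reveal probability $1-q=\tfrac{e^{\epsilon_L}-1}{e^{\epsilon_L}+1}$ is exactly the factor in~\eqref{eq:shuffle_epsilon}. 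This step uses only the defining inequalities of $\epsilon_L$-LDP and the fact that mixing and post-processing cannot increase the privacy loss.

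Next I would condition on the number of clones. For the $n-1$ users other than $i^\star$ the events ``emit $\bot$'' are independent, so the clone count $C$ is $\mathrm{Bin}(n-1,q)$. Given $C=c$, the data collector's view is a shuffled multiset of $c$ clones together with user $i^\star$'s report, and the only information about $x$ versus $x'$ is carried by $i^\star$'s report, which, when it is itself a clone, occupies a uniformly random position among $c+1$ identical symbols --- a ``hide one item among a crowd of $c$'' instance. A direct likelihood-ratio computation (essentially privacy amplification by subsampling followed by shuffling) then gives, for each $c$, a bound $e^{\epsilon(c)}-1\le\tfrac{e^{\epsilon_L}-1}{e^{\epsilon_L}+1}\cdot O(1/\sqrt c)$ with explicit constants. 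Finally I would combine this with a Chernoff bound showing $C\ge\tfrac12(n-1)q\ge\tfrac{n}{e^{\epsilon_L}+1}$ except with probability at most a constant multiple of $\delta$; the side condition $\epsilon_L\le\log\!\bigl(\tfrac{n}{16\log(2/\delta)}\bigr)$ is what makes $(n-1)q$ large enough to absorb this tail. On the good event, substituting $c\gtrsim n/(e^{\epsilon_L}+1)$ into $\epsilon(c)$ produces the main term $\tfrac{8\sqrt{e^{\epsilon_L}\log(4/\delta)}}{\sqrt n}$ and the lower-order residual $\tfrac{8e^{\epsilon_L}}{n}$; the bad event is charged to $\delta$. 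When the side condition fails I would not claim amplification at all: $\calM_S(D)$ is then a post-processing of a tuple differing from that of $D'$ in a single $\epsilon_L$-LDP coordinate, hence is already $\epsilon_L$-DP, so $g(n,\delta)=\epsilon_L$.

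I expect the main obstacle to be the per-$c$ likelihood-ratio estimate together with the exact constants in~\eqref{eq:shuffle_epsilon} (the $8$'s and the $\log(4/\delta)$): the canonical-randomizer reduction and the Chernoff step are routine, but pinning down the closed form requires carefully splitting the budget $\delta$ between the clone-count tail and the amplification term and tracking the second-order contributions, which is the technical heart of the argument in~\cite{Feldman_FOCS21}. As a structural sanity check I would compare against the ``privacy blanket'' decomposition of~\cite{Balle_CRYPTO19}, which yields a bound of the same shape but with weaker constants.
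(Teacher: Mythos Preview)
The paper does not prove this theorem: Theorem~\ref{thm:shuffle} is quoted verbatim from~\cite{Feldman_FOCS21} as a preliminary result (note the attribution ``Privacy amplification by shuffling~\cite{Feldman_FOCS21}'' in the theorem header), and no proof or proof sketch appears anywhere in the paper or its appendices. So there is nothing in the paper to compare your proposal against.

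That said, your sketch is a faithful high-level outline of the argument in~\cite{Feldman_FOCS21}: the reduction to a canonical randomizer with a ``blanket'' symbol of mass $2/(e^{\epsilon_L}+1)$, the binomial count of clones among the other $n-1$ users, the per-$c$ likelihood-ratio bound, the Chernoff tail charged to $\delta$, and the fallback $g(n,\delta)=\epsilon_L$ when the clone pool is too small. Your identification of the main technical difficulty (tracking the exact constants $8$ and $\log(4/\delta)$ through the split of $\delta$) is also accurate. If you intend to actually carry this out, be aware that the original proof handles the canonical-randomizer reduction via a somewhat delicate coupling that simultaneously treats all $n-1$ non-differing users, not just user $i^\star$; your one-line ``decomposition/coupling argument'' hides real work there.
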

It follows from (\ref{eq:shuffle_epsilon}) that $\epsilon \ll \epsilon_L$ for a large value of $n$. 
Feldman \textit{et al.}~\cite{Feldman_FOCS21} also propose a method for numerically computing a tighter upper bound than the closed-form upper bound in Theorem~\ref{thm:shuffle}. 
We use the numerical upper bound in our experiments. 
Moreover, Feldman \textit{et al.}~\cite{Feldman_FOCS21} propose a closed-form upper bound specific to the GRR~\cite{Kairouz_ICML16,Wang_USENIX17}, which is tighter than the general bound in Theorem~\ref{thm:shuffle} and the numerical bound. 
We use this tighter bound for the GRR. 
We use these bounds because they are tighter than other bounds, such as~\cite{Erlingsson_SODA19,Balle_CRYPTO19,Cheu_EUROCRYPT19} (and~\cite{Girgis_CCS21} when 
the mechanism $\calR$ is applied to each input value $x_i$ once). 

Note that multi-message protocols \cite{Balcer_ITC20,Cheu_SP22,Luo_CCS22} can provide privacy guarantees different from Theorem~\ref{thm:shuffle} 
($\calR$ does not even provide LDP in \cite{Balcer_ITC20,Luo_CCS22}). 
Thus, the function $g$ is different for these protocols. 
See Appendix~\ref{sec:existing_mechanisms} for details. 

\subsection{Data Poisoning and Collusion with Users}
\label{sub:poisoning}

\noindent{\textbf{Data Poisoning.}}~~For data poisoning attacks, we consider the same threat model as~\cite{Cao_USENIX21}. 
We assume that $n$ users are genuine and that the adversary injects $n' \in \nats$ fake users. 
Thus, there are $n+n'$ users in total after data poisoning. 
The fake users can send arbitrary messages to the shuffler. 
This is called a general attack model in~\cite{Cheu_SP21}. 
The adversary's goal is to promote some target items (e.g., products of a specific company). 
To achieve this goal, the adversary attempts to increase the estimated frequencies of the target items. 

Formally, let 
$\calT \subseteq [d]$ be the set of target items. 
Let $\hf'_i \in \reals$ be an estimate of the relative frequency after data poisoning, and $\Delta \hf_i = \hf'_i - \hf_i$ be the frequency gain for the $i$-th item. 
Let $\bmy' = (y'_1, \cdots, y'_{n'})$ be messages sent from $n'$ fake users to the shuffler. 
Let $G(\bmy')$ be the adversary's overall gain given by the sum of the expected frequency gains over all target items, i.e.,
$G(\bmy') = \sum_{i \in \calT} \E[\Delta \hf_i]$. 
Note that $G(\bmy')$ depends on $\bmy'$, as the estimate $\hf'_i$ depends on the value of $\bmy'$. 

For an attack algorithm, we consider the MGA (Maximal Gain Attack)~\cite{Cao_USENIX21} because it is optimal. 
The MGA determines $\bmy'$ so that it maximizes the overall gain. 
That is, it solves the following optimization problem: 
\begin{align*}
\max_{\bmy'} G(\bmy').
\end{align*}
We denote the overall gain of the MGA by $\GMGA$ ($=\max_{\bmy'} G(\bmy')$).

\smallskip{}
\noindent{\textbf{Collusion with Users.}}~~Section~\ref{sub:existing_shuffle} 
assumes that the data collector does not have access to non-shuffled noisy values $\calR(x_i)$. 
However, this assumption may not hold in practice, as pointed out in~\cite{Wang_PVLDB20}. 
Specifically, when the data collector colludes with some users (or compromises their user accounts), she may obtain their 
noisy values 
$\calR(x_i)$. 
For example, assume that the data collector obtains 
noisy values 
$\calR(x_i)$ of $n-1$ other users (except for the victim). 
In this case, the data collector can easily link the remaining 
noisy value 
to the victim. 
Thus, no privacy amplification is obtained in this case. 

More generally, 
assume that the data collector 
obtains shuffled values $\calM_S(D)$ and 
colludes with users $\{u_i | i \in \Omega \}$, where $\Omega \subsetneq [n]$. 
Their 
noisy values 
can be expressed as 
\colorB{a tuple $(\calR(x_i))_{i \in \Omega}$}. 
Then, the privacy against the above data collector can be quantified using an algorithm $\calM_S^*$ that, given a dataset $D$, outputs 
$\calM_S^*(D) = (\calM_S(D), \colorB{(\calR(x_i))_{i \in \Omega}})$:

\begin{proposition}
\label{prop:shuffle_collusion}
Let $\Omega \subsetneq [n]$. 
Let $D = (x_1, \cdots, x_n) \in [d]^n$. 
Let $\calR: [d] \rightarrow \calY$ be an obfuscation mechanism. 
Let $\calM_S: [d]^n \rightarrow \calY^n$ be a pure shuffle algorithm (see Theorem~\ref{thm:shuffle}). 
Let $\calM_S^*$ be an algorithm that, given a dataset $D$, outputs 
$\calM_S^*(D) = (\calM_S(D), \colorB{(\calR(x_i))_{i \in \Omega}})$. 
If $\calM_S$ provides $(\epsilon, \delta)$-DP with $\epsilon = g(n, \delta)$, then 
$\calM_S^*$ provides $(\epsilon^*, \delta)$-DP for $\Omega$-neighboring databases, 
where 
\begin{align}
\epsilon^* = g(n-|\Omega|, 
\delta).
\label{eq:shuffle_collude_epsilon}
\end{align}
\end{proposition}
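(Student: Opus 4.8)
The plan is to reduce the claim to the pure shuffle guarantee over $n-|\Omega|$ users by showing that the adversary's joint observation $\calM_S^*(D)$ is equivalent — up to randomized post-processing in both directions — to the pair formed by (a) the colluding users' noisy values $(\calR(x_i))_{i\in\Omega}$ and (b) the \emph{multiset} $M_{\bar\Omega}$ of the noisy values of the non-colluding users, where $\bar\Omega = [n]\setminus\Omega$. Since post-processing preserves $(\epsilon,\delta)$-DP, it then suffices to bound the $(\epsilon,\delta)$-DP of that pair for $\Omega$-neighboring databases.

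For the equivalence, write $a_i = \calR(x_i)$. The shuffled sequence $\calM_S(D) = (a_{\pi(1)},\dots,a_{\pi(n)})$ and the multiset $\{\!\{a_1,\dots,a_n\}\!\}$ are mutually obtainable (forget the order in one direction; apply an independent uniform random ordering in the other), so revealing $\calM_S(D)$ is the same, distributionally, as revealing that multiset, call it $M$. Because $M = M_\Omega \uplus M_{\bar\Omega}$ with $M_\Omega = \{\!\{a_i : i\in\Omega\}\!\}$ computable from $(a_i)_{i\in\Omega}$, multiset subtraction recovers $M_{\bar\Omega} = M \setminus M_\Omega$ exactly, so $(M_{\bar\Omega},(a_i)_{i\in\Omega})$ is a deterministic function of $\calM_S^*(D)$; conversely, from $(M_{\bar\Omega},(a_i)_{i\in\Omega})$ one forms $M = M_{\bar\Omega}\uplus\{\!\{a_i:i\in\Omega\}\!\}$ and outputs a uniform random ordering of $M$ together with $(a_i)_{i\in\Omega}$ — conditioning on the realized $a_1,\dots,a_n$, both this and $\calM_S^*(D)$ become a uniform random ordering of the same $n$-element multiset paired with the same colluding tuple, so the distributions coincide.

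Next I would fix $\Omega$-neighboring $D,D'$, which by definition agree except at one index $i^\star\in\bar\Omega$. Two facts follow. First, the colluding tuple has the same distribution under $D$ and $D'$ (the inputs indexed by $\Omega$ are unchanged) and is independent of $M_{\bar\Omega}$, since the two are built from the independent randomness of disjoint sets of users. Second, $D|_{\bar\Omega}$ and $D'|_{\bar\Omega}$ are neighboring databases of $n-|\Omega|$ entries, and $M_{\bar\Omega}$ is a post-processing of the pure shuffle run on them; invoking that the protocol's privacy is $g(\cdot,\delta)$ as a function of the number of users (cf.~$(\ref{eq:shuffle_epsilon_f})$, with $g$ monotone), $M_{\bar\Omega}$ satisfies $(g(n-|\Omega|,\delta),\delta)$-DP. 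Combining the two — condition on the commonly distributed colluding tuple, apply the DP inequality for $M_{\bar\Omega}$ pointwise in that conditioning, and integrate back — lifts the bound to the pair with $\epsilon^* = g(n-|\Omega|,\delta)$ and the same $\delta$; pulling this through the post-processing equivalence gives the stated guarantee for $\calM_S^*$.

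The hard part will be the equivalence step: one must argue at the level of distributions (not merely of "what the adversary can compute") that observing the shuffled sequence together with the colluding users' individual values leaks precisely $(M_{\bar\Omega},(\calR(x_i))_{i\in\Omega})$ — no more, because the uniform independent permutation destroys all positional information, and no less, because the multiset difference is exact with no collision bookkeeping needed. Once this is pinned down, the invariance and independence of the colluding component and the reading-off of $g(n-|\Omega|,\delta)$ from the shuffle guarantee are routine, the latter being exactly where monotonicity of $g$ forces $\epsilon^*\ge\epsilon$.
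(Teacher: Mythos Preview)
Your proposal is correct and follows essentially the same argument as the paper: both proofs identify the shuffled sequence with the multiset (the paper says ``histogram'') of all noisy values, subtract off the colluding users' contributions to isolate the non-colluding multiset $M_{\bar\Omega}$, and then invoke the shuffle guarantee on the remaining $n-|\Omega|$ users. Your treatment is in fact more careful than the paper's---you spell out the bidirectional post-processing equivalence and the conditioning/independence step explicitly, whereas the paper's proof is an informal sketch of the same ideas.
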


The proof is given in Appendix~\ref{sub:prop_shuffle_collusion}. 
Proposition~\ref{prop:shuffle_collusion} states that 
the privacy budget $\epsilon$ is increased from (\ref{eq:shuffle_epsilon_f}) to (\ref{eq:shuffle_collude_epsilon}) after colluding with $|\Omega|$ users. 
For example, 
when $n=6\times 10^5$, $|\Omega|=6\times 10^4$, and $\delta=10^{-12}$, $\epsilon$ in the pure shuffle protocols can be increased from $1$ to $7.2$ (see Appendix~\ref{sec:additional_exp} for details).

\subsection{Utility and Communication Cost}
\label{sub:utility_communication}
\noindent{\textbf{Utility.}}~~Following~\cite{Kairouz_ICML16,Wang_USENIX17,Wang_PVLDB20}, we use the expected $l_2$ loss (i.e., squared error) as a utility metric in our theoretical analysis. 
Specifically, the expected $l_2$ loss of the estimate $\bmhf$ can be expressed as $\E\left[ \sum_{i=1}^d (\hf_i - f_i)^2 \right]$, where the expectation is over the randomness of the estimator. 
By the bias-variance decomposition~\cite{mlpp}, the expected $l_2$ loss is equal to the sum of the squared bias $(\E[\hf_i] - f_i)^2$ and the variance $\sum_{i=1}^d \V[\hf_i]$. 
If the estimate $\bmhf$ is unbiased, then the expected $l_2$ loss is equal to the variance. 

In our experiments, 
we use the MSE (Mean Squared Error) as a utility metric. The MSE is the sample mean of the squared error $\sum_{i=1}^d (\hf_i - f_i)^2$ over multiple realization of $\bmhf$. 

\smallskip{}
\noindent{\textbf{Communication Cost.}}~~For a communication cost, we evaluate an expected value of the total number $C_{tot}\in\nnreals$ of 
bits 
sent from one party 
to another. 
Specifically, let $C_{U-S}, 
C_{S-D} \in \nnreals$ be the expected number of 
bits 
sent from users to the shuffler and from the shuffler to the data collector, 
respectively. 
Then, $C_{tot}$ is written as 
\begin{align*}
C_{tot} = C_{U-S} + 
C_{S-D} \hspace{4mm} \text{(bits)}. 
\end{align*}
For example, 
if 
the 2048-bit RSA is used to encrypt 
each noisy 
value of size 2048 bits or less in the pure shuffle model, 
then 
$C_{U-S} = C_{S-D} = 2048n$ 
and $C_{tot} = 4096n$. 

\section{Proposed Protocols}
\label{sec:shuffle}

In this work, we propose local-noise-free protocols in the augmented shuffle model. 
We first explain the motivation and overview of our 
protocols 
in Section~\ref{sub:motivation}. 
Then, we introduce 
a generalized framework for our local-noise-free protocols and theoretically analyze our framework 
in Sections~\ref{sub:proposed_model} and \ref{sub:theoretical_framework}, respectively. 
We propose 
three protocols in our framework in Sections~\ref{sub:SBin}, \ref{sub:SAGeo}, and \ref{sub:SOGeo}. 
Finally, we compare our protocols with existing shuffle protocols in Section~\ref{sub:comparison}. 
The proofs of all statements in this section are given in Appendices~\ref{sub:lem_equivalence_add_noise} 
to \ref{sub:thm_SAGeo_squared_error}. 

\subsection{Motivation and Overview}
\label{sub:motivation}

\noindent{\textbf{Motivation.}}~~A major drawback of the existing shuffle protocols is that they are vulnerable to local data poisoning attacks in Section~\ref{sub:poisoning}. 
As explained in Section~\ref{sec:intro}, the existing protocols require users to add noise. 
Consequently, they are vulnerable to data poisoning, especially when $\epsilon$ is small, as fake users do not need to add noise.

Another major drawback of the existing shuffle protocols is that they are vulnerable to collusion attacks by the data collector and users. 
As shown in Proposition~\ref{prop:shuffle_collusion}, 
the data collector who obtains 
noisy values 
of $|\Omega|$ users can reduce the number of shuffled values from $n$ to $n - |\Omega|$, thereby increasing the privacy budget $\epsilon$. 
In practice, it is difficult to know an actual value of $\epsilon$, as it is difficult to know the number $|\Omega|$ of users colluding with the data collector. 

Both of the two issues explained above are inevitable when users add noise to their input data. 
Thus, one might think that these issues could be addressed by not allowing users to add noise. 
However, we note that users \textit{must} add noise to their input data in the pure shuffle model where the shuffler performs only random shuffling. 
For example, 
suppose that user $u_i$'s input value $x_i$ is an outlier (e.g., user $u_i$'s age is $x_i = 115$) and that all input values are randomly shuffled without adding noise. 
In this case, the data collector can easily link $x_i$ to user $u_i$ (i.e., no privacy amplification), as it is unique. 
This example shows that each user must add noise to their data when the shuffler performs only shuffling. 

Therefore, to address the above two issues, we focus on the augmented shuffle model and propose local-noise-free protocols in this model.

\smallskip{}
\noindent{\textbf{Overview.}}~~Below, we explain a high-level overview of our local-noise-free protocol, which is shown in Figure~\ref{fig:proposed_model}. 
Our protocol does not require each user $u_i$ to perturb her input value $x_i$. 
Since random shuffling alone cannot provide privacy amplification in this case, 
our protocol introduces two additional operations on the shuffler side: random sampling and dummy data addition. 

\begin{figure}[t]
  \centering
  \includegraphics[width=0.99\linewidth]{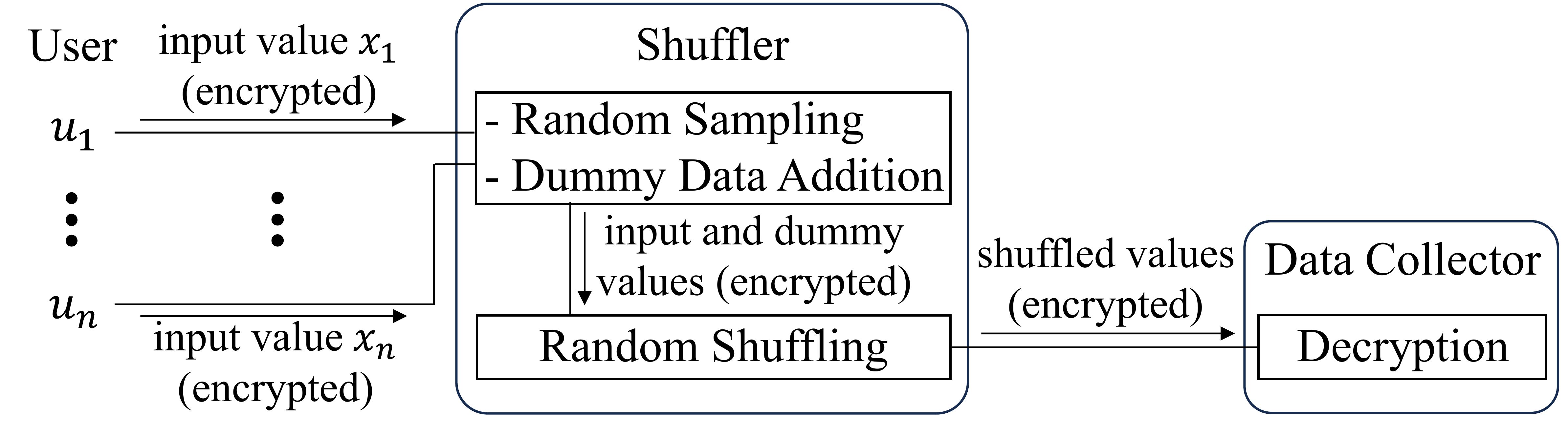}
  \vspace{-7mm}
  \caption{Overview of our local-noise-free protocol.}
  \label{fig:proposed_model}
\end{figure}

Specifically, our model works as follows. 
First, the data collector has a secret key and publishes a public key. 
Each user $u_i$ encrypts her input value $x_i$ using the public key and sends it to the shuffler. 
Note that user $u_i$ does not add any noise to $x_i$. 
Then, the shuffler performs 
\colorB{the following 
three 
operations in this order}: 
\begin{itemize}
\item \textbf{Random Sampling.} For each encrypted 
value, the shuffler randomly selects it with some probability (and discards it with the remaining probability).
\item \textbf{Dummy Data Addition.} The shuffler generates dummy values and encrypts them using the public key. 
Then, the shuffler adds them to the encrypted input values. 
\item \colorB{\textbf{Random Shuffling.} The shuffler randomly shuffles the encrypted 
values (not discarded).} 
\end{itemize}
\colorB{After the shuffler sends the shuffled values to the data collector,} 
the data collector decrypts them using the secret key to obtain the shuffled values. \colorB{Note that random sampling is not strictly necessary to provide DP. 
We introduce random sampling for two reasons. 
First, random sampling amplifies privacy; i.e., it reduces $\epsilon$ and $\delta$. 
Second, it reduces the communication cost, as shown in our experiments.} 

\colorB{Our key insight is that random sampling and adding dummies, followed by shuffling, are equivalent to adding discrete noise to each bin of input data's histogram. 
Specifically, random sampling reduces the value of each bin in the histogram. 
Adding dummies increases the value of each bin. 
Random shuffling makes data sent to the data collector equivalent to their histogram~\cite{Balle_CRYPTO19}.
Thus, these three operations are equivalent to adding discrete noise to each bin of the histogram. 
We formalize this claim in Section~\ref{sub:theoretical_framework}.}

The shuffler can perform these 
three 
operations 
without decrypting input values sent by users. 
Since the shuffler does not have a secret key, she cannot decrypt the input values. 
Nonetheless, the shuffler can 
\colorB{generate dummy values and encrypt them}, 
as she has the public key.

As with the existing shuffle protocols, our protocols provide DP even if either the data collector or the shuffler is corrupted. 
Thus, our protocols offer a lower risk of data leakage than central DP. 

In the rest of Section~\ref{sec:shuffle}, we omit the encryption and decryption processes, as they are clear from the context. 

\setlength{\algomargin}{5mm}
\begin{algorithm}[t]
  \SetAlgoLined
  \KwData{Input values $(x_1, \cdots, x_n) \in [d]^n$, 
  \colorB{dummy-count} distribution $\calD$ over $\nnints$ (mean: $\mu$, variance: $\sigma^2$), sampling probability $\beta \in [0,1]$.}
  \KwResult{Shuffled values $(\tx_{\pi(1)}, \cdots, \tx_{\pi(\tn^*)})$.}
  \tcc{Random sampling 
  ($\tx_i$: $i$-th sampled value, $\tn$: \#sampled values)}
  $(\tx_1, \cdots, \tx_{\tn}) \leftarrow$\texttt{Sample}$((x_1, \cdots, x_n), \beta)$\;
  \tcc{For each item, sample $z_i$ from $\calD$ and add $z_i$ dummy values.}
  \ForEach{$i \in [d]$}{
    $z_i \sim \calD$\;
    \ForEach{$j \in [z_i]$}{
      $\tx_{\tn+(\sum_{k=1}^{i-1} z_k)+j} \leftarrow i$\;
    }
  }
  $\tn^* \leftarrow \tn + \sum_{i=1}^{d} z_i$\;
  \tcc{Shuffle input \& dummy values}
  Sample a random permutation $\pi$ over $[\tn^*]$\;
  \KwRet{$(\tx_{\pi(1)}, \cdots, \tx_{\pi(\tn^*)})$}
  \caption{\colorB{Augmented shuffler part $\calG_{\calD,\beta}$ of our generalized local-noise-free protocol $\calS_{\calD,\beta}$.} 
  }\label{alg:framework}
\end{algorithm}

\begin{algorithm}[t]
  \SetAlgoLined
  \KwData{Shuffled values $(\tx_{\pi(1)}, \cdots, \tx_{\pi(\tn^*)})$, 
  \colorB{dummy-count} distribution $\calD$ over $\nnints$ (mean: $\mu$, variance: $\sigma^2$), sampling probability $\beta \in [0,1]$.}
  \KwResult{Estimate $\bmhf = (\hf_1, \cdots, \hf_d)$.}
  \tcc{Compute an unbiased estimate}
  $\colorB{\bmth = (\tih_1, \cdots, \tih_d)} 
  \allowbreak ~~~ \leftarrow$\texttt{AbsoluteFrequency}$(\tx_{\pi(1)}, \cdots, \tx_{\pi(\tn^*)})$\;
  \ForEach{$i \in [d]$}{
    $\hf_i \leftarrow \frac{1}{n\beta}(\tih_i-\mu)$\;
  }
  \KwRet{$\bmhf = (\hf_1, \cdots, \hf_d)$}
  \caption{\colorB{Analyzer part $\calA_{\calD,\beta}$ of our generalized local-noise-free protocol $\calS_{\calD,\beta}$.} 
  }\label{alg:framework_analyzer}
\end{algorithm}

\subsection{Framework for Local-Noise-Free Protocols}
\label{sub:proposed_model}

\colorB{Algorithms~\ref{alg:framework} and \ref{alg:framework_analyzer} show 
an algorithmic description of our framework for local-noise-free protocols. 
Our generalized protocol, denoted by $\calS_{\calD,\beta}$, consists of the augmented shuffler part $\calG_{\calD,\beta}$ and the analyzer part $\calA_{\calD,\beta}$, i.e., $\calS_{\calD,\beta} = (\calG_{\calD,\beta}, \calA_{\calD,\beta})$.} 
$\calS_{\calD,\beta}$ 
has two parameters: 
a \colorB{dummy-count} distribution $\calD$ 
and a sampling probability $\beta \in [0,1]$. 
The \colorB{dummy-count} distribution $\calD$ is defined over $\nnints$ and has mean $\mu \in \nnreals$ and variance $\sigma^2 \in \nnreals$. 
In our framework, the shuffler randomly selects each input value with probability $\beta$. 
Then, for each item, the shuffler samples $z_i$ from $\calD$ and add $z_i$ dummy values ($i\in[d]$). 

Specifically, our framework works as follows. 
\colorB{After receiving each user's input value $x_i$ 
($i \in [n]$), 
the 
shuffler samples each input value with probability $\beta$ 
(line 1). 
For $i\in[d]$, let $z_i \in \nnints$ be a random variable representing the number of dummy values for item $i$. 
For each item, the shuffler generates the number $z_i$ 
of dummy values from $\calD$ ($z_i\sim\calD$) and adds $z_i$ dummy values 
(lines 2-7). 
The shuffler randomly shuffles all of the input and dummy values (i.e., $\tilde{n}+\sum_{i=1}^d z_i$ values in total, where $\tilde{n}$ represents the number of input values after sampling) and sends them to the data collector 
(lines 8-10). 
Then, 
the data collector calls the  function $\texttt{AbsoluteFrequency}$, which calculates the absolute frequency 
$\tih_i \in \nnints$ 
($i \in [d]$) of the $i$-th item from the shuffled values (Algorithm~\ref{alg:framework_analyzer}, line 1). 
Let $\bmth = (\tih_1, \cdots, \tih_d)$ be a histogram calculated by the data collector. 
The data collector calculates the estimate $\bmhf = (\hf_1, \cdots, \hf_d)$ from $\bmth$ 
as 
\begin{align}
\textstyle{\hf_i = \frac{1}{n\beta}(\tih_i - \mu)}
\label{eq:generalized_protocol_hf_i}
\end{align}
(lines 2-5).  
As shown later, 
$\bmhf$ is an unbiased estimate of $\bmf$.} 

\smallskip{}
\noindent{\colorB{\textbf{Toy Example.}}}~~\colorB{To facilitate understanding, we give a toy example of our generalized protocol. 
Assume that users' input values are $(x_1, x_2, x_3, x_4, x_5) = (1, 2, 1, 3, 2)$ ($n=5$, $d=3$) and that a binomial distribution $B(3, \frac{1}{2})$ with parameters $3$ and $\frac{1}{2}$ is used as a dummy-count distribution $\calD$. 
After receiving input values, 
the shuffler samples 
each input value with probability $\beta$. 
For example, assume that the sampled values are $(x_1, x_3, x_4) = (1, 1, 3)$. 
Then, the shuffler adds 
$z_1, z_2, z_3 \sim B(3, \frac{1}{2})$ dummies for items $1$, $2$, and $3$, respectively. For example, assume that $(z_1, z_2, z_3) = (2, 1, 1)$. 
Finally, the shuffler shuffles 
input and dummy values. 
In the above examples, the input values are $(1, 1, 3)$, and the dummies are $(1, 1, 2, 3)$. Thus, the shuffled values are, e.g., $(3, 1, 2, 1, 3, 1, 1)$. 
The shuffler sends them to the data collector. 
Finally, the data collector calculates a histogram as $\bmth = (\tih_1, \tih_2, \tih_3) = (4,1,2)$ and $\bmhf$ by (\ref{eq:generalized_protocol_hf_i}), where $\mu=\frac{3}{2}$.}

\smallskip{}
\noindent{\textbf{Remark.}}~~Our 
framework performs dummy data addition \textit{after} random sampling. 
We could consider a protocol that performs dummy data addition \textit{before} random sampling. 
However, the latter ``dummy-then-sampling'' protocol can be expressed using the former ``sampling-then-dummy'' protocol. 
Specifically, 
let $\calD^\beta$ be a \colorB{dummy-count} distribution that generates 
\colorB{the number of dummy values} by sampling $z_i\sim\calD$ dummy values and selecting each dummy value with probability $\beta$. 
Then, the ``dummy-then-sampling'' protocol 
with parameters $\calD$ and $\beta$ 
is equivalent to the ``sampling-then-dummy'' protocol 
with parameters $\calD^\beta$ and $\beta$ 
in that both of them output the same estimate $\bmhf$. 
The ``sampling-then-dummy'' protocol is more efficient because it does not need to perform random sampling for dummy values. 
Therefore, 
we adopt 
the ``sampling-then-dummy'' approach.

\subsection{Theoretical Properties of Our Framework}
\label{sub:theoretical_framework}
Below, we analyze the privacy, robustness, utility, and communication cost of our framework. 

\smallskip{}
\noindent{\textbf{Privacy and Robustness.}}~~In our framework, the \colorB{dummy-count} distribution $\calD$ plays a key role in providing DP. 
We reduce DP of our generalized local-noise-free protocol $\calS_{\calD,\beta}$ to that of a simpler mechanism with binary input, which we call a \textit{binary input mechanism}. 
Specifically, define a binary input mechanism $\calM_{\calD,\beta}$ with domain $\{0,1\}$ 
as follows: 
\begin{align*}
    \calM_{\calD,\beta}(x)=ax+z,~x\in\{0,1\}, 
\end{align*}
where $a\sim\mathrm{Ber}(\beta)$, $z\sim\calD$, and $\mathrm{Ber}(\beta)$ is the Bernoulli distribution with parameter $\beta$. 
\colorB{We prove that if the binary input mechanism $\calM_{\calD,\beta}$ provides DP, then $\calS_{\calD,\beta}$ also provides DP and is also robust to collusion with users.} 

\colorB{We first formally state our key insight that 
random sampling and adding dummies, followed by shuffling, 
are equivalent to adding discrete noise to 
the histogram:} 

\begin{lemma}\label{lem:equivalence_add_noise}
\colorB{For $i\in[d]$, let $\bmx_i\in\{0,1\}^d$ be a binary vector whose entries are $1$ only at position $x_i$. 
Let $a_1,\ldots,a_n \sim \mathrm{Ber}(\beta)$ be independent Bernoulli samples used for sampling $x_1,\ldots,x_n$, respectively; i.e., if $a_i = 1$, then $x_i$ is selected by random sampling; otherwise, $x_i$ is not selected.
Let $\bmz = (z_1,\ldots,z_d)$. 
Then, the histogram $\bmth$ (Algorithm~\ref{alg:framework_analyzer}, line 1) calculated by the data collector is: $\bmth = (\sum_{i=1}^n a_i \bmx_i) + \bmz$.}
\end{lemma}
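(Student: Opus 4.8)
The plan is simply to unwind the definitions in Algorithms~\ref{alg:framework} and \ref{alg:framework_analyzer}: the claimed identity is pure bookkeeping about which values land in the multiset that the data collector counts, and it carries no probabilistic content beyond naming the random variables involved.

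First I would fix notation for the sampling step. The call $\texttt{Sample}((x_1,\ldots,x_n),\beta)$ retains each $x_i$ independently with probability $\beta$; let $a_i\sim\mathrm{Ber}(\beta)$ be the indicator that $x_i$ is retained, so that $\tn=\sum_{i=1}^n a_i$ and the retained entries form the multiset $R$ consisting of one copy of $x_i$ for each $i$ with $a_i=1$. By lines 2--7 of Algorithm~\ref{alg:framework}, the dummy-addition step appends, for every $i\in[d]$, exactly $z_i$ copies of the value $i$, where $z_i\sim\calD$; writing these copies into the block of positions indexed by $\tn+(\sum_{k=1}^{i-1}z_k)+j$, $j\in[z_i]$, is merely one concrete way of storing $z_i$ copies of $i$. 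Hence, before shuffling, the shuffler holds a multiset $M$ equal to $R$ together with $z_i$ copies of $i$ for each $i\in[d]$, of total size $\tn^*=\tn+\sum_{i=1}^d z_i$.

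Next I would invoke permutation invariance. The shuffled output $(\tx_{\pi(1)},\ldots,\tx_{\pi(\tn^*)})$ is merely a reordering of $M$, and $\texttt{AbsoluteFrequency}$ sets $\tih_i$ to the number of coordinates of its argument equal to $i$; since counting is invariant under the permutation $\pi$, $\tih_i$ equals the multiplicity of $i$ in $M$. That multiplicity is the number of retained input values equal to $i$, namely $\sum_{j=1}^n a_j\mathone_{x_j=i}$, plus the number of dummies equal to $i$, namely $z_i$. Therefore $\tih_i=\sum_{j=1}^n a_j\mathone_{x_j=i}+z_i$ for every $i\in[d]$.

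Finally I would pass to vectors. Since $\bmx_j$ is the one-hot vector whose only nonzero entry is a $1$ at position $x_j$, its $i$-th coordinate equals $\mathone_{x_j=i}$, so the $i$-th coordinate of $\sum_{j=1}^n a_j\bmx_j$ is exactly $\sum_{j=1}^n a_j\mathone_{x_j=i}$. Combining this with the previous display and the definition $\bmz=(z_1,\ldots,z_d)$ yields $\bmth=\bigl(\sum_{j=1}^n a_j\bmx_j\bigr)+\bmz$ coordinatewise, which is the assertion. I do not anticipate any genuine obstacle: the statement is essentially definitional, and the only point demanding a moment's care is confirming that the offset-indexed placement of dummies in Algorithm~\ref{alg:framework} contributes nothing beyond ``$z_i$ copies of $i$'' once the random permutation is applied.
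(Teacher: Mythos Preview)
Your proposal is correct and follows essentially the same approach as the paper's proof, which simply observes that random sampling turns the input histogram $\sum_{i=1}^n \bmx_i$ into $\sum_{i=1}^n a_i\bmx_i$, dummy addition adds $\bmz$, and shuffling leaves the histogram unchanged. Your version is more explicit about the multiset bookkeeping and the permutation invariance of \texttt{AbsoluteFrequency}, but the underlying argument is identical.
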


\begin{lemma}\label{lem:equivalence_shuffle}
\colorB{Let $\calG_{\calD,\beta}'$ be an algorithm that, given a dataset $D$, outputs the histogram $\bmth$. 
Then, the privacy of $\calG_{\calD,\beta}$ (Algorithm~\ref{alg:framework}) is equivalent to that of $\calG_{\calD,\beta}'$; i.e., $\calG_{\calD,\beta}$ provides $(\epsilon,\delta)$-DP if and only if $\calG_{\calD,\beta}'$ provides $(\epsilon,\delta)$-DP.} 
\end{lemma}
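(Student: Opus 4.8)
The plan is to exhibit each of $\calG_{\calD,\beta}$ and $\calG_{\calD,\beta}'$ as a data-oblivious post-processing of the other, and then invoke the post-processing invariance of $(\epsilon,\delta)$-DP. The easy direction is that $\calG_{\calD,\beta}'$ is a deterministic post-processing of $\calG_{\calD,\beta}$: given the shuffled sequence $(\tx_{\pi(1)},\ldots,\tx_{\pi(\tn^*)})$, one recovers the histogram coordinate-wise by counting, $\tih_i = \#\{\,k \in [\tn^*] : \tx_{\pi(k)} = i\,\}$ for $i \in [d]$, which is exactly what \texttt{AbsoluteFrequency} computes. So there is a fixed map $\phi$ with $\calG_{\calD,\beta}'(D) = \phi(\calG_{\calD,\beta}(D))$ for every dataset $D$.

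For the reverse direction I would show that $\calG_{\calD,\beta}$ is a \emph{randomized} post-processing of $\calG_{\calD,\beta}'$. The key fact is that applying a uniform random permutation over $[m]$ to a fixed length-$m$ sequence produces a distribution that depends only on the multiset of that sequence (every distinct arrangement of the multiset is equally likely, up to the multiplicities of repeated entries). In Algorithm~\ref{alg:framework} the pre-shuffle sequence — the sampled inputs followed by the dummies — has multiset described precisely by the counts $\bmth = (\tih_1,\ldots,\tih_d)$, and $\pi$ is drawn independently of the sampling variables $a_i$ and the dummy counts $z_i$. Hence there is a randomized map $\psi$, using only fresh independent randomness (lay out $\tih_i$ copies of item $i$ for each $i\in[d]$, then apply a uniform random permutation), such that $\calG_{\calD,\beta}(D)$ and $\psi(\calG_{\calD,\beta}'(D))$ are identically distributed for every $D$.

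Finally I would combine the two directions: post-processing preserves $(\epsilon,\delta)$-DP (if $\calM$ provides $(\epsilon,\delta)$-DP and $F$ is any randomized map whose internal randomness is independent of $\calM$'s input, then $F\circ\calM$ provides $(\epsilon,\delta)$-DP). Thus if $\calG_{\calD,\beta}'$ provides $(\epsilon,\delta)$-DP then $\calG_{\calD,\beta} = \psi\circ\calG_{\calD,\beta}'$ does too, and if $\calG_{\calD,\beta}$ provides $(\epsilon,\delta)$-DP then $\calG_{\calD,\beta}' = \phi\circ\calG_{\calD,\beta}$ does too, giving the stated equivalence. The only step needing real care — and the main (if minor) obstacle — is the rigorous argument that the law of the permuted sequence is a function of $\bmth$ alone and is generated with randomness independent of $D$; this is the standard ``shuffling is equivalent to releasing the histogram'' observation (cf.~\cite{Balle_CRYPTO19}), but one must be explicit that $\pi$ is independent of the $a_i$ and $z_i$ so that $\psi$ is genuinely data-oblivious.
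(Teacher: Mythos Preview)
Your proposal is correct and follows essentially the same approach as the paper: both rest on the standard observation (attributed to~\cite{Balle_CRYPTO19}) that a uniformly shuffled sequence and its histogram are informationally equivalent. The paper's own proof is considerably more informal---it simply asserts the equivalence with an illustrative example---whereas your version spells out the two post-processing maps $\phi$ and $\psi$ and invokes post-processing invariance explicitly, which is a cleaner and more rigorous presentation of the same idea.
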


\colorB{
Lemmas~\ref{lem:equivalence_add_noise} and \ref{lem:equivalence_shuffle} state that shuffled data sent to the data collector are equivalent to the histogram $\bmth = (\sum_{i=1}^n a_i \bmx_i) + \bmz$, where discrete noise is added by random sampling and dummy data addition. 
Based on these lemmas, 
we prove the privacy and robustness of 
$\calG_{\calD,\beta}$ (Algorithm~\ref{alg:framework}):} 

\begin{theorem}\label{thm:generalization}
    Let $\Omega \subsetneq [n]$. 
    Let 
    \colorB{$\calG_{\calD,\beta}^*$} 
    be an algorithm that, given a dataset $D$, outputs 
    \colorB{$\calG_{\calD,\beta}^*(D) = (\calG_{\calD,\beta}(D), (x_i)_{i \in \Omega})$}. 
    If $\calM_{\calD,\beta}$ provides 
    $(\frac{\epsilon}{2},\frac{\delta}{2})$-DP, 
    then 
    \colorB{$\calG_{\calD,\beta}$} 
    provides 
    $(\epsilon,\delta)$-DP 
    and 
    \colorB{$\calG_{\calD,\beta}^*$} 
    provides 
    $(\epsilon,\delta)$-DP 
    for $\Omega$-neighboring databases. 
\end{theorem}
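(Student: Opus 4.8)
The plan is to peel off the encryption and shuffling using Lemmas~\ref{lem:equivalence_add_noise} and \ref{lem:equivalence_shuffle}, reducing the statement to an analysis of the histogram map, and then to recognize the effect of a single user's change as a coordinatewise post-processing of the binary input mechanism $\calM_{\calD,\beta}$. By Lemma~\ref{lem:equivalence_shuffle} it suffices to prove the claimed DP bounds for the algorithm $\calG_{\calD,\beta}'$ that outputs the histogram $\bmth$, and by Lemma~\ref{lem:equivalence_add_noise} we may write $\bmth = \sum_{i=1}^n a_i\bmx_i + \bmz$ with $a_1,\dots,a_n \sim \mathrm{Ber}(\beta)$ independent and $\bmz = (z_1,\dots,z_d)$, $z_i \sim \calD$, all independent. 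Fix $\Omega$-neighboring databases $D,D'$ differing only at index $j \notin \Omega$, and write $k = x_j \ne x'_j = k'$.

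The key structural observation I would establish first is that $\bmth(D)$ and $\bmth(D')$ agree on every coordinate except $k$ and $k'$: setting $W_m = \sum_{i \ne j,\, x_i = m} a_i$ for $m \in \{k,k'\}$ (these are the same random variable for $D$ and $D'$, since only entry $j$ changes), coordinate $k$ equals $W_k + z_k + a_j$ under $D$ and $W_k + z_k$ under $D'$, whereas coordinate $k'$ equals $W_{k'} + z_{k'}$ under $D$ and $W_{k'} + z_{k'} + a_j$ under $D'$. Moreover, for each fixed database the three blocks --- coordinate $k$, coordinate $k'$, and the remaining coordinates --- depend on pairwise-disjoint subsets of the underlying variables $\{a_i\}_{i=1}^n$ and $\{z_m\}_{m=1}^d$ (in particular, within a fixed database the shared coin $a_j$ feeds only one of the two changed coordinates), so the law of $\bmth(D)$ factors as a product over these three blocks, and likewise for $\bmth(D')$, with the third block carrying identical laws under $D$ and $D'$.

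Next I would bound each block. For the coordinate-$k$ block, the laws of $a_j + z_k$ and $z_k$ coincide with those of $\calM_{\calD,\beta}(1)$ and $\calM_{\calD,\beta}(0)$; since $W_k$ is independent of $(a_j,z_k)$, adding $W_k$ is randomized post-processing, which preserves DP, so the $(\frac{\epsilon}{2},\frac{\delta}{2})$-DP hypothesis gives that the laws of $W_k + z_k + a_j$ and $W_k + z_k$ are $(\frac{\epsilon}{2},\frac{\delta}{2})$-indistinguishable in both directions. The coordinate-$k'$ block is symmetric, and the third block is $(0,0)$-indistinguishable. Basic composition over the three independent product components then yields $(\frac{\epsilon}{2} + \frac{\epsilon}{2} + 0,\ \frac{\delta}{2} + \frac{\delta}{2} + 0) = (\epsilon,\delta)$-indistinguishability of $\bmth(D)$ and $\bmth(D')$; via Lemma~\ref{lem:equivalence_shuffle} this gives $(\epsilon,\delta)$-DP of $\calG_{\calD,\beta}$ (the special case $\Omega = \emptyset$) and, for general $\Omega$, $(\epsilon,\delta)$-DP for $\Omega$-neighboring databases. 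For $\calG_{\calD,\beta}^*(D) = (\calG_{\calD,\beta}(D), (x_i)_{i\in\Omega})$ the collusion claim is then immediate: because $D$ and $D'$ are $\Omega$-neighboring, $(x_i)_{i\in\Omega} = (x'_i)_{i\in\Omega}$ is the same fixed tuple, so $\calG_{\calD,\beta}^*$ merely appends a constant to $\calG_{\calD,\beta}$, which is post-processing and preserves the $(\epsilon,\delta)$-DP bound for $\Omega$-neighboring databases.

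I expect the main obstacle to be the independence analysis of the two changed coordinates in the second step. A single user's change perturbs both bin $k$ and bin $k'$, and both perturbations are driven by the \emph{same} sampling coin $a_j$, which superficially looks like it should correlate the two coordinates and obstruct basic composition. The resolution the argument must make precise is that within any \emph{fixed} database $a_j$ enters exactly one of the two bins ($k$ in $D$, $k'$ in $D'$), so the two coordinates really are independent for each of $D$ and $D'$, and the factor-$2$ loss incurred by composing across the two bins is precisely what the $(\frac{\epsilon}{2},\frac{\delta}{2})$ hypothesis on $\calM_{\calD,\beta}$ is calibrated to absorb. The per-bin reduction to $\calM_{\calD,\beta}$ by adding the independent background count $W_k$, and the collusion extension, should both be routine by comparison.
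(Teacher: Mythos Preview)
Your proposal is correct and follows essentially the same route as the paper: reduce to the histogram via Lemmas~\ref{lem:equivalence_add_noise} and~\ref{lem:equivalence_shuffle}, observe that only bins $k$ and $k'$ change, identify each changed bin with $\calM_{\calD,\beta}(1)$ versus $\calM_{\calD,\beta}(0)$, and combine the two $(\frac{\epsilon}{2},\frac{\delta}{2})$ guarantees. The only cosmetic differences are that the paper first strips off the entire common contribution $\sum_{i\neq j} a_i\bmx_i$ (as an independent convolution) and then invokes ``group privacy'' on the residual vector $a_j\bmx_j+\bmz$, whereas you keep the background counts $W_k,W_{k'}$ and invoke coordinatewise post-processing followed by basic composition of independent product components; and the paper reduces at the outset to the worst case $\Omega=[n]\setminus\{1\}$, while you handle general $\Omega$ directly and dispatch $\calG_{\calD,\beta}^*$ by noting that $(x_i)_{i\in\Omega}$ is a fixed constant on $\Omega$-neighbors. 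Your explicit justification that, for each \emph{fixed} database, $a_j$ enters exactly one of the two changed bins (so the three blocks are genuinely independent and product composition applies) is actually more careful than the paper's one-line appeal to ``group privacy''---which, read literally, would yield $(1+e^{\epsilon/2})\frac{\delta}{2}$ rather than $\delta$, but is clearly intended to mean the independent-product composition you spell out.
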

\colorB{Since the analyzer part $\calA_{\calD,\beta}$ (Algorithm~\ref{alg:framework_analyzer}) is post-processing, our entire protocol $\calS_{\calD,\beta}$ also provides $(\epsilon,\delta)$-DP and is robust to collusion with users.} 
By Theorem~\ref{thm:generalization}, 
the privacy budget of $\calS_{\calD,\beta}$ is twice that of $\calM_{\calD,\beta}$. 
This is because $\calM_{\calD,\beta}$ deals with one-dimensional data. 
Specifically, neighboring data $x,x' \in \{0,1\}$ for $\calM_{\calD,\beta}$ differ by 1 in one dimension. 
In contrast, neighboring databases $D,D' \in [d]^n$ for $\calS_{\calD,\beta}$ differ by 1 in two dimensions. 
Thus, by group privacy, the privacy budget is doubled in $\calS_{\calD,\beta}$. 

Theorem~\ref{thm:generalization} shows that $\epsilon$ and $\delta$ are \textit{not} increased even when the data collector colludes with other users. 
This robustness property follows from the fact that $\calS_{\calD,\beta}$ obfuscates input data on the shuffler side rather than the user side. 

We also show that $\calS_{\calD,\beta}$ is robust to data poisoning: 
\begin{theorem}
\label{thm:generalization_poisoning}
Let $\lambda = \frac{n'}{n+n'}$ be the fraction of fake users, and $f_T = \sum_{i \in \calT} f_i$ be the sum of the frequencies over all target items. 
Then, $\calS_{\calD,\beta}$ provides the following robustness guarantee: 
\begin{align}
    \GMGA = \lambda (1 - f_T).
    \label{eq:GMGA_generalization}
\end{align}
\end{theorem}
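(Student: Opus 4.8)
The plan is to follow the frequency of each item through the protocol, reusing the linearity of the recovered histogram from Lemma~\ref{lem:equivalence_add_noise}. First I would fix the attack model in the local-noise-free setting. After poisoning, the shuffler receives $n$ genuine (encrypted) values $x_1,\dots,x_n\in[d]$ together with $n'$ (encrypted) fake values $y'_1,\dots,y'_{n'}$, one per fake user, as in the general attack model of~\cite{Cao_USENIX21}; I may assume without loss of generality that each $y'_k$ encodes some item in $[d]$, since an out-of-range message contributes to no bin and can only decrease the adversary's gain. Because the shuffler cannot tell genuine and fake submissions apart, it applies the same random sampling, dummy addition, and shuffling to all $n+n'$ submissions, and the data collector---unaware of the injection---normalizes its estimate by the number $n+n'$ of submissions it observes, i.e.\ $\hf'_i=\frac{1}{(n+n')\beta}(\tih_i-\mu)$.

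By the same argument behind Lemmas~\ref{lem:equivalence_add_noise} and~\ref{lem:equivalence_shuffle}, applied to the combined list of $n+n'$ submissions, the histogram recovered by the data collector is
\begin{align*}
\tih_i=\sum_{j=1}^{n}a_j\mathone_{x_j=i}+\sum_{k=1}^{n'}a'_k\mathone_{y'_k=i}+z_i,\qquad i\in[d],
\end{align*}
where $a_1,\dots,a_n,a'_1,\dots,a'_{n'}\sim\mathrm{Ber}(\beta)$ are independent and $z_i\sim\calD$. Write $g_i=\frac{1}{n'}\sum_{k=1}^{n'}\mathone_{y'_k=i}$ for the fraction of fake users reporting item $i$ and $\lambda=\frac{n'}{n+n'}$. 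Taking expectations and using $\E[z_i]=\mu$ gives $\E[\hf'_i]=\frac{nf_i+n'g_i}{n+n'}=(1-\lambda)f_i+\lambda g_i$, and since $\calS_{\calD,\beta}$ is unbiased (the $n'=0$ case of this computation, cf.~(\ref{eq:generalized_protocol_hf_i})) we have $\E[\hf_i]=f_i$. Hence $\E[\Delta\hf_i]=\E[\hf'_i]-\E[\hf_i]=\lambda(g_i-f_i)$; note both $\calD$ and $\beta$ cancel, which is precisely why the bound will not depend on $\epsilon$ or $\delta$.

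It then remains to maximize over the fake messages. Summing over the target set, $G(\bmy')=\sum_{i\in\calT}\E[\Delta\hf_i]=\lambda\big(\sum_{i\in\calT}g_i-f_T\big)$. Since each fake user reports exactly one item, $(g_1,\dots,g_d)$ is a probability vector, so $\sum_{i\in\calT}g_i\le 1$, and equality is attained by, e.g., having every fake user report one fixed target item. Therefore $\GMGA=\max_{\bmy'}G(\bmy')=\lambda(1-f_T)$, which is~(\ref{eq:GMGA_generalization}).

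I expect the delicate part to be the modeling rather than the arithmetic: making precise that a fake user contributes a single submission that is subject to the shuffler's sampling exactly like a genuine one, and that the analyzer rescales by the observed count $n+n'$ (not by $n$)---with the other convention one would instead get $\GMGA=n'/n$. Once these choices are pinned down, Lemma~\ref{lem:equivalence_add_noise} reduces everything to a one-line expectation and the trivial optimization $\max\sum_{i\in\calT}g_i=1$ over probability vectors.
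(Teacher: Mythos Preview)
Your proof is correct and follows essentially the same route as the paper's: restrict fake messages to $[d]$, use unbiasedness of the estimator under the $(n+n')$-normalization to get $\E[\hf'_i]=(1-\lambda)f_i+\lambda g_i$ (the paper writes this as $\E[\hf'_i]=f'_i$), and then maximize $\sum_{i\in\calT}g_i$ over probability vectors. The only cosmetic difference is that you unpack the expectation explicitly via the histogram decomposition of Lemma~\ref{lem:equivalence_add_noise}, whereas the paper simply invokes unbiasedness (Theorem~\ref{thm:generalization_utility}) applied to the enlarged dataset of size $n+n'$.
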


Note that $\GMGA$ in (\ref{eq:GMGA_generalization}) does not depend on the privacy budget $\epsilon$. 
In other words, $\calS_{\calD,\beta}$ does not suffer from the fundamental security-privacy trade-off explained in Section~\ref{sec:intro}. 
This also comes from the fact that $\calS_{\calD,\beta}$ obfuscates input data on the shuffler side rather than the user side. 
Consequently, the same amount of noise is added to the messages of both genuine and fake users. 
Thus, $\GMGA$ in $\calS_{\calD,\beta}$ does not depend on $\epsilon$. 

In Section~\ref{sub:comparison}, we also show that $\GMGA$ in (\ref{eq:GMGA_generalization}) is \textit{always} smaller than that of the existing shuffle protocols.

\smallskip{}
\noindent{\textbf{Utility.}}~~Next, we analyze the utility 
of $\calS_{\calD,\beta}$: 

\begin{theorem}
\label{thm:generalization_utility}
$\calS_{\calD,\beta}$ outputs an unbiased estimate; i.e., for any $i\in[d]$, 
$\E[\hf_i] = f_i$. 
In addition, $\calS_{\calD,\beta}$ achieves the following expected $l_2$ loss: 
\begin{align}
\textstyle{\E\left[ \sum_{i=1}^d (\hf_i - f_i)^2 \right] = \frac{1-\beta}{\beta n}+\frac{\sigma^2d}{\beta^2 n^2}.}
\label{eq:l2_loss_generalization}
\end{align}
\end{theorem}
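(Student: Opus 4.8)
The plan is to leverage Lemma~\ref{lem:equivalence_add_noise}, which already reduces the distribution of the data collector's histogram $\bmth$ to the clean form $\bmth = \left(\sum_{i=1}^n a_i \bmx_i\right) + \bmz$ with $a_1,\ldots,a_n \sim \mathrm{Ber}(\beta)$ independent and $\bmz = (z_1,\ldots,z_d)$ with $z_i \sim \calD$ independent of everything else. From this, for each item $i \in [d]$ I would write the $i$-th bin explicitly as $\tih_i = \sum_{j=1}^n a_j \mathone_{x_j = i} + z_i$, noting that the number of indices $j$ with $x_j = i$ is exactly $n f_i$ by the definition of $f_i$ in Section~\ref{sub:notations}.

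First I would establish unbiasedness. Taking expectations and using $\E[a_j] = \beta$ and $\E[z_i] = \mu$ gives $\E[\tih_i] = \beta n f_i + \mu$, so by the estimator definition~(\ref{eq:generalized_protocol_hf_i}), $\E[\hf_i] = \frac{1}{n\beta}(\E[\tih_i] - \mu) = f_i$. Since $\calS_{\calD,\beta} = (\calG_{\calD,\beta}, \calA_{\calD,\beta})$ and the analyzer only applies the affine post-processing in~(\ref{eq:generalized_protocol_hf_i}), this is the whole argument for the first claim.

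Next I would compute the variance. Because the $a_j$ are independent of the $z_i$, $\V[\tih_i] = \V\!\left[\sum_{j:\,x_j=i} a_j\right] + \V[z_i]$; the first term is a sum of $n f_i$ i.i.d.\ $\mathrm{Ber}(\beta)$ variables, hence equals $n f_i \beta(1-\beta)$, and the second is $\sigma^2$. Rescaling, $\V[\hf_i] = \frac{1}{n^2\beta^2}\bigl(n f_i \beta(1-\beta) + \sigma^2\bigr) = \frac{f_i(1-\beta)}{\beta n} + \frac{\sigma^2}{\beta^2 n^2}$. Finally, since $\bmhf$ is unbiased, the bias-variance decomposition recalled in Section~\ref{sub:utility_communication} gives $\E\!\left[\sum_{i=1}^d (\hf_i - f_i)^2\right] = \sum_{i=1}^d \V[\hf_i] = \frac{1-\beta}{\beta n}\sum_{i=1}^d f_i + \frac{\sigma^2 d}{\beta^2 n^2}$, and $\sum_{i=1}^d f_i = 1$ yields~(\ref{eq:l2_loss_generalization}).

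I do not expect a genuine obstacle here: the only subtlety is making the independence structure from Lemma~\ref{lem:equivalence_add_noise} explicit before splitting the variance, and being careful that the dummy counts $z_i$ across distinct items are independent (so there is no cross-term when summing $\V[\hf_i]$ over $i$) — both of which are built into the sampling model of Algorithm~\ref{alg:framework}. Everything else is a routine moment computation, so the write-up can be kept to a few lines.
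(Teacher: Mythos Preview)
Your proposal is correct and follows essentially the same approach as the paper: compute $\E[\tih_i]=\beta n f_i+\mu$ for unbiasedness, split $\V[\tih_i]$ into the Bernoulli-sampling contribution $n f_i\beta(1-\beta)$ and the dummy contribution $\sigma^2$, and sum over $i$ using $\sum_i f_i=1$. The only cosmetic difference is that the paper writes the variance split via the law of total variance (conditioning on $z_i$) rather than invoking independence directly, but this is the same computation.
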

Theorem~\ref{thm:generalization_utility} states that 
the expected $l_2$ loss of $\calS_{\calD,\beta}$ can be calculated from the sampling probability $\beta$ and the variance $\sigma^2$ of the \colorB{dummy-count} distribution $\calD$. 
A larger $\beta$ and a smaller $\sigma^2$ give a smaller expected $l_2$ loss. 
\colorB{Theorem~\ref{thm:generalization_utility} also shows that the utility of our generalized protocol depends on the choice of $\calD$. 
In fact, our \SAGeo{} carefully chooses $\calD$ to provide higher utility than our \SBin{}.}

\smallskip{}
\noindent{\textbf{Communication Cost.}}~~Finally, we analyze the communication cost of $\calS_{\calD,\beta}$. 
Let $\alpha \in \nats$ be the number of bits required to encrypt each input value (e.g., $\alpha=2048$ when the $2048$-bit RSA is used). 
The amount of communication between users and the shuffler is $n$ ciphertexts, and that between the shuffler and the data collector is $\tn+\sum_{i=1}^d z_i$ ciphertexts, where $\tn$ is the number of input values after sampling. 
Since $\E[\tn+\sum_{i=1}^d z_i] = \beta n+\mu d$, 
we have 
$C_{U-S}=\alpha n$, $C_{S-D}=\alpha(\beta n+\mu d)$, and 
\begin{align}
C_{tot}=\alpha((1+\beta)n+\mu d).
\label{eq:C_tot_generalization}
\end{align}
Thus, $C_{tot}$ can be calculated from the sampling probability $\beta$ and the mean $\mu$ of the \colorB{dummy-count} distribution $\calD$. 
Smaller $\beta$ and $\mu$ give a smaller communication cost. 
In addition, by (\ref{eq:l2_loss_generalization}) and (\ref{eq:C_tot_generalization}), the trade-off between the $l_2$ loss and communication cost can be controlled by changing $\beta$. 

\subsection{\SBin{} (Sample, Binomial Dummies, and Shuffle)}
\label{sub:SBin}

\noindent{\textbf{Protocol.}}~~A natural way to add 
differentially private non-negative discrete noise is to use a binomial mechanism~\cite{Agarwal_NeurIPS18,Dwork_EUROCRYPT06}. 
Our first protocol \SBin{}, denoted by $\calS_{\Bin,\beta}$, is based on this. 
Specifically, \SBin{} instantiates a \colorB{dummy-count} distribution $\calD$ with a binomial distribution $B(M,\frac{1}{2})$ with the number $M \in \nats$ of trials and success probability $\frac{1}{2}$. 
That is, for $k\in\nnints$, the probability mass function at $z_i = k$ is given by 
\begin{align*}
\textstyle{\Pr(z_i = k) = \frac{\binom{M}{k}}{2^M}.}
\end{align*}

\smallskip{}
\noindent{\textbf{Privacy and Robustness.}}~~Let $\calM_{\Bin,\beta}$ be 
a binary input mechanism $\calM_{\calD,\beta}$ 
instantiated with the binomial distribution $B(M,\frac{1}{2})$. 
($\epsilon,\delta$)-DP of $\calM_{\Bin,\beta}$ is immediately derived from the existing analysis of the binomial mechanism~\cite{Agarwal_NeurIPS18,Dwork_EUROCRYPT06}. 
However, the theoretical results in~\cite{Agarwal_NeurIPS18,Dwork_EUROCRYPT06} do not provide tight bounds on the number $M$ of trials to provide ($\epsilon,\delta$)-DP. 
Therefore, we prove a tighter bound 
for $\calM_{\Bin,\beta}$. 

Specifically, we prove the following result: 

\begin{theorem}
\label{thm:SBin_DP}
Let $\epsilon_0 \in [\log(\frac{2}{M}+1), \infty)$ and $\eta = \frac{e^{\epsilon_0} - 1}{e^{\epsilon_0} + 1} - \frac{2}{M(e^{\epsilon_0} + 1)}$. 
$\calM_{\Bin,\beta}$ provides $(\frac{\epsilon}{2},\frac{\delta}{2})$-DP, where
\begin{align}
\epsilon &= 2 \log (1 + \beta (e^{\epsilon_0} - 1)) \label{eq:SBin_epsilon}\\
\delta &= 4 \beta e^{-\frac{\eta^2 M}{2}}. 
\label{eq:SBin_delta}
\end{align}
\end{theorem}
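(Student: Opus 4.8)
The plan is to analyze the binary input mechanism $\calM_{\Bin,\beta}(x) = ax + z$ directly, where $a \sim \mathrm{Ber}(\beta)$ and $z \sim B(M,\frac{1}{2})$, and show it satisfies $(\frac{\epsilon}{2},\frac{\delta}{2})$-DP with the stated parameters. The two neighboring inputs are $x=0$ and $x=1$; when $x=0$ the output is simply $z \sim B(M,\frac12)$, and when $x=1$ the output is $a+z$, a mixture: with probability $1-\beta$ it is $B(M,\frac12)$ and with probability $\beta$ it is $1 + B(M,\frac12)$. So the output distribution for $x=1$ is a $\beta$-weighted mixture of the $x=0$ distribution and its shift by one. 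The privacy loss of $\calM_{\Bin,\beta}$ is therefore controlled by (i) the privacy loss of the plain binomial mechanism comparing $B(M,\frac12)$ against $1+B(M,\frac12)$, and (ii) the mixing weight $\beta$.

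First I would isolate the ``inner'' binomial comparison. Write $p_k = \binom{M}{k}/2^M$. Comparing $z$ versus $1+z$ amounts to bounding the ratio $p_k / p_{k-1} = (M-k+1)/k$ over the region where both are nonzero. The natural privacy-loss parameter for this one-step shift is some $\epsilon_0$, and the hypothesis $\epsilon_0 \in [\log(\tfrac2M + 1),\infty)$ together with the definition of $\eta = \frac{e^{\epsilon_0}-1}{e^{\epsilon_0}+1} - \frac{2}{M(e^{\epsilon_0}+1)}$ suggests the following: the ratio $p_k/p_{k-1} \le e^{\epsilon_0}$ fails only when $k$ is too small (equivalently $z$ too small), and one quantifies this bad event via a concentration/tail bound on the binomial. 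Concretely, $p_k/p_{k-1} \le e^{\epsilon_0}$ holds iff $k \ge \frac{M+1}{e^{\epsilon_0}+1}$, and the value of $\eta$ is engineered so that $\eta M$ is (half of) the gap between the binomial mean $M/2$ and this threshold. A Chernoff/Hoeffding bound on $\Pr[z < \frac{M+1}{e^{\epsilon_0}+1}]$ then yields a tail of order $e^{-\eta^2 M/2}$, which — after accounting for the two-sided nature of the privacy-loss set and a symmetric argument for the other direction of the inequality — produces the $\delta$-part as a multiple of $e^{-\eta^2 M/2}$. This is the step I expect to be the main obstacle: getting the threshold-crossing condition $k \ge \frac{M+1}{e^{\epsilon_0}+1}$ exactly right, matching it to the definition of $\eta$, and choosing the Chernoff bound variant that gives the clean exponent $\eta^2 M/2$ rather than something weaker; the constant $4$ in front will come from a union over a couple of bad events and the two inputs.

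Second I would fold in the sampling. Because the $x=1$ output is a $\beta$-mixture of $D_0 := B(M,\frac12)$ and $D_1 := 1+B(M,\frac12)$, for any set $S$ we have $\Pr[\calM_{\Bin,\beta}(1)\in S] = (1-\beta)\Pr[D_0\in S] + \beta\Pr[D_1\in S]$, while $\Pr[\calM_{\Bin,\beta}(0)\in S] = \Pr[D_0 \in S]$. Using the inner bound $\Pr[D_1 \in S] \le e^{\epsilon_0}\Pr[D_0\in S] + \delta_0$ (and the reverse), a short computation gives $\Pr[\calM_{\Bin,\beta}(1)\in S] \le (1 + \beta(e^{\epsilon_0}-1))\Pr[\calM_{\Bin,\beta}(0)\in S] + \beta\delta_0$, which is exactly the form $e^{\epsilon/2}\Pr[\cdot] + \delta/2$ with $e^{\epsilon/2} = 1 + \beta(e^{\epsilon_0}-1)$ — matching~\eqref{eq:SBin_epsilon} after squaring — and $\delta/2 = \beta\delta_0$, matching~\eqref{eq:SBin_delta} with $\delta_0 = 2e^{-\eta^2 M/2}$. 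I would also check the opposite direction ($\Pr[\calM_{\Bin,\beta}(0)\in S]$ vs.\ $\Pr[\calM_{\Bin,\beta}(1)\in S]$), which is easier since the mixture has extra mass on $D_0$, so the same $\epsilon_0$ and a symmetric tail suffice. Assembling these gives $(\frac{\epsilon}{2},\frac{\delta}{2})$-DP for $\calM_{\Bin,\beta}$ as claimed; the factor-of-two discrepancy between this and the statement's $\epsilon$ is deliberate, since Theorem~\ref{thm:generalization} consumes a factor of two via group privacy when lifting to $\calS_{\Bin,\beta}$.
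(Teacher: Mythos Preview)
Your proposal is correct and follows essentially the same two-step structure as the paper's proof: first establish $(\epsilon_0,\delta_0)$-DP for the unsampled mechanism $\calM_{\Bin,1}$ by bounding the binomial likelihood ratio $p_k/p_{k-1}$ and controlling its failure region via the multiplicative Chernoff bound for $B(M,\tfrac12)$ (which is exactly how the paper obtains $\delta_0 = 2e^{-\eta^2 M/2}$), and then amplify by the sampling probability $\beta$. The only cosmetic difference is that the paper invokes the subsampling-amplification lemma of Li \textit{et al.}\ as a black box for the second step, whereas you unpack it inline via the mixture identity $\calM_{\Bin,\beta}(1) = (1-\beta)D_0 + \beta D_1$; the content is the same, and your direction-(a) computation reproduces exactly $e^{\epsilon/2} = 1+\beta(e^{\epsilon_0}-1)$ and $\delta/2 = \beta\delta_0$. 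Your hand-wave for direction~(b) (``easier since the mixture has extra mass on $D_0$'') does need one more line---write $P_0 = (1-\beta)P_0 + \beta P_0 \le (1-\beta)P_0 + \beta(e^{\epsilon_0}P_1 + \delta_0)$ and compare---but this is routine and is precisely what the cited lemma covers.
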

It follows from Theorems~\ref{thm:generalization}, \ref{thm:generalization_poisoning}, and \ref{thm:SBin_DP} that \SBin{} $\calS_{\Bin,\beta}$ provides 
$(\epsilon,\delta)$-DP 
and is robust to data poisoning and collusion with users. 
Given the sampling probability $\beta$ and the required values of $\epsilon$ and $\delta$, the number $M$ of trials is uniquely determined by (\ref{eq:SBin_epsilon}) and (\ref{eq:SBin_delta}). 

In the proof of Theorem~\ref{thm:SBin_DP}, we use 
the multiplicative Chernoff bound for 
the binomial random variable 
with success probability $\frac{1}{2}$~\cite{probability_computing}, which is tighter than the 
one for 
a general binomial random variable. 
This brings us a tighter bound than the bounds derived from~\cite{Agarwal_NeurIPS18,Dwork_EUROCRYPT06}.

Specifically, in Appendix~\ref{sec:comparison_bound}, 
we show $(\epsilon,\delta)$-DP of $\calS_{\Bin,1}$ derived from~\cite{Dwork_EUROCRYPT06} and~\cite{Agarwal_NeurIPS18} in Theorems~\ref{thm:SBin_DP_Dwork} and \ref{thm:SBin_DP_Agarwal}, respectively. 
Figure~\ref{fig:bound_comparison} shows  
the smallest number $M$ of trials in $\calS_{\Bin,1}$ required to provide $(\epsilon,\delta)$-DP in each of Theorems~\ref{thm:SBin_DP}, \ref{thm:SBin_DP_Dwork}, and \ref{thm:SBin_DP_Agarwal} (denoted by \textsf{Our bound}, \textsf{DKMMN08}, and \textsf{ASYKM18}, respectively). 
We observe that our bound is tighter than the existing bounds in~\cite{Dwork_EUROCRYPT06,Agarwal_NeurIPS18} in all cases. 

\begin{figure}[t]
  \centering
  \includegraphics[width=0.99\linewidth]{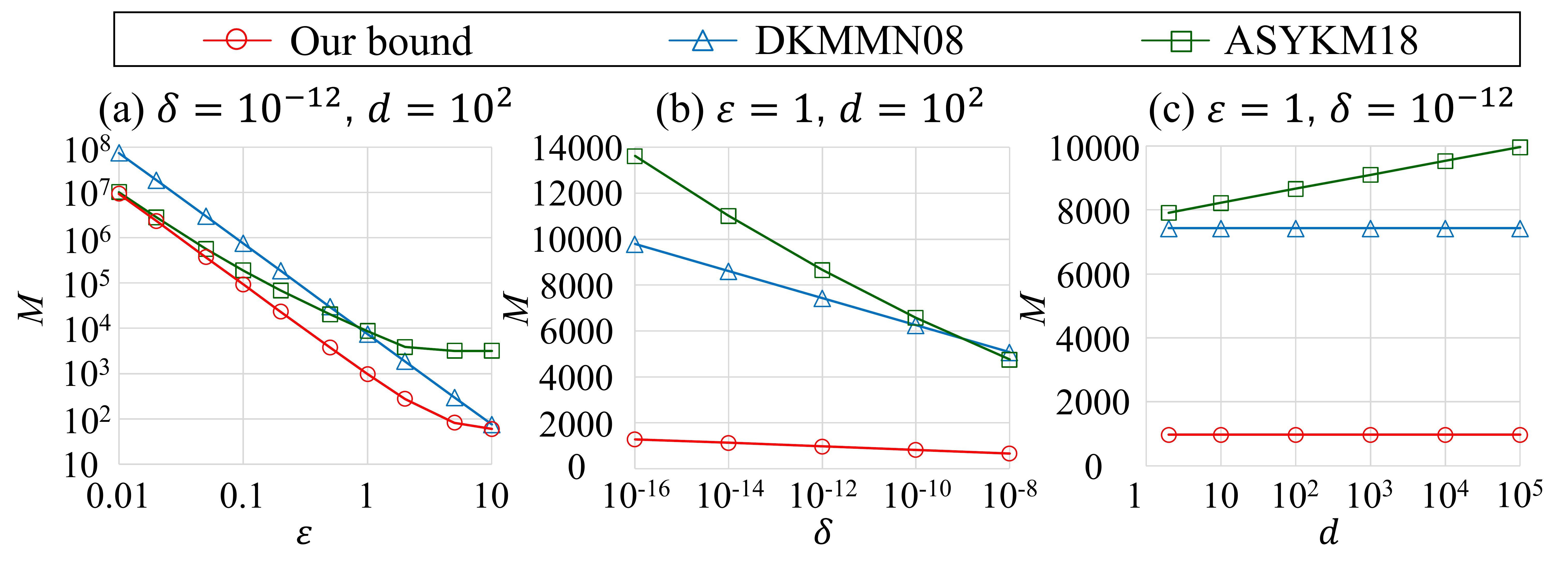}
  \vspace{-6mm}
  \caption{Three bounds on the number $M$ of trials in \SBin{} ($\beta=1$). \textsf{DKMMN08} and \textsf{ASYKM18} are the bounds in~\cite{Dwork_EUROCRYPT06} and \cite{Agarwal_NeurIPS18}, respectively. We set $\epsilon=1$, $\delta=10^{-12}$, and $d=10^2$ as default values.}
  \label{fig:bound_comparison}
\end{figure}

\smallskip{}
\noindent{\textbf{Utility.}}~~The binomial distribution $B(M,\frac{1}{2})$ has variance $\sigma^2 = \frac{M}{4}$. 
Thus, by Theorem~\ref{thm:generalization_utility}, \SBin{} achieves the following expected $l_2$ loss: 
\begin{align*}
\textstyle{\E\left[ \sum_{i=1}^d (\hf_i - f_i)^2 \right] 
= \frac{1-\beta}{\beta n} + \frac{Md}{4 \beta^2 n^2}}.
\end{align*}
When $\epsilon$ is small 
(i.e., $e^\epsilon \approx \epsilon + 1$), 
the expected $l_2$ loss can approximated, using $\epsilon$ and $\delta$, as $\frac{1-\beta}{\beta n} + \frac{8d \log(4\beta/\delta)}{\epsilon^2 n^2}$\conference{.}\arxiv{ (see Appendix~\ref{sub:loss_SBin} for more details).} 
In Sections~\ref{sub:comparison} and \ref{sec:exp}, we show that \SBin{} provides higher utility than the existing shuffle protocols.

\smallskip{}
\noindent{\textbf{Communication Cost.}}~~$B(M,\frac{1}{2})$ has mean $\frac{M}{2}$. 
Thus, 
the total communication cost $C_{tot}$ of $\SBin{}$ is given by (\ref{eq:C_tot_generalization}), where $\mu = \frac{M}{2}$. 
When $\epsilon$ is small, 
$C_{tot}$ 
can be approximated as $C_{tot} \approx \alpha ((1+\beta)n + \frac{16\beta^2 d \log(4\beta/\delta)}{\epsilon^2})$\conference{.}\arxiv{ (see Appendix~~\ref{sub:communication_SBin} for more details).}

\subsection{\SAGeo{} (Sample, Asymmetric Two-Sided Geometric Dummies, and Shuffle)}
\label{sub:SAGeo}

\noindent{\textbf{Protocol.}}~~\colorB{Our second protocol \SAGeo{}, denoted by $\calS_{\AGeo,\beta}$, significantly improves the accuracy of \SBin{}} 
by introducing a novel \colorB{dummy-count} distribution that has not been studied in the DP literature: \textit{asymmetric two-sided geometric distribution}. 

\colorB{The intuition behind this distribution is as follows. 
The geometric and binomial distributions are discrete versions of the Laplace and Gaussian distributions, respectively. 
It is well known that the Laplace mechanism provides higher utility than the Gaussian mechanism for a single statistic, such as a single frequency distribution~\cite{desfontainesblog20201115}. 
Thus, the geometric distribution would provide higher utility than the binomial distribution in our task. 
Moreover, random sampling has the effect of anonymization and can reduce the scale of the left-side of a distribution. 
This is the reason that \SAGeo{} adopts the asymmetric geometric distribution.}

Formally, 
the asymmetric two-sided geometric distribution has parameters $\nu \in \nnints$, $q_l \in (0,1)$, and $q_r \in (0,1)$. 
We denote this distribution by $\AGeo(\nu,q_l,q_r)$. 
$\AGeo(\nu,q_l,q_r)$ has support in 
$\nnints$, 
and the probability mass function at $z_i=k$ is given by 
\begin{align*}
\Pr(z_i = k) = 
\begin{cases}
\frac{1}{\kappa} q_l^{\nu - k}   &   \text{(if $k = 0, 1, \ldots, \nu-1$)}\\
\frac{1}{\kappa} q_r^{k - \nu}   &   \text{(if $k = \nu, \nu+1, \ldots $)},
\end{cases}
\end{align*}
where $\kappa$ is a normalizing constant given by 
\begin{align*}
\textstyle{\kappa = \frac{q_l(1-q_l^\nu)}{1-q_l} + \frac{1}{1 - q_r}.}
\end{align*}
$\AGeo(\nu,q_l,q_r)$ is truncated at $k = 0$ and has a mode at $k = \nu$. 
$\AGeo(\nu,q_l,q_r)$ has mean $\mu$, where 
\begin{align}
\textstyle{\mu = \frac{1}{\kappa} \left( \sum_{k=0}^{\nu-1} k q_l^{\nu-k} + \frac{q_r + (1-q_r)\nu}{(1 - q_r)^2}\right)}\conference{.}
\label{eq:SAGeo_mu_ast}
\end{align}
\arxiv{(see Appendix~\ref{sub:mu_SAGeo} for details).} 
The non-truncated asymmetric two-sided geometric distribution with mode zero is 
studied for data compression 
in~\cite{Ding_EUSIPCO12}. 
$\AGeo(\nu,q_l,q_r)$ can be regarded as a \textit{shifted} and \textit{zero-truncated} version of the distribution in~\cite{Ding_EUSIPCO12}. 

\SAGeo{} $\calS_{\AGeo,\beta}$ 
samples each input value with probability $\beta \in [1 - e^{-\epsilon/2}, 1]$ and 
instantiates a \colorB{dummy-count} distribution $\calD$ with $\AGeo(\nu,q_l,q_r)$ with 
\begin{align}
q_l = \textstyle{\frac{e^{-{\epsilon/2}} - 1 + \beta}{\beta}}, ~~
q_r = \textstyle{\frac{\beta}{e^{\epsilon/2} - 1 + \beta}}. 
\label{eq:SAGeo_q_r}
\end{align}
The parameter $\nu$ is determined so that $\delta$ in ($\epsilon,\delta$)-DP is smaller than or equal to a required value, as explained later. 

Note that we always have $q_l \leq q_r$, where the equality holds if and only if $\beta=1$ or $\epsilon=0$. 
When $\beta = 1$, 
we have $q_l = q_r = \frac{1}{e^{\epsilon/2}}$. In this case, $\AGeo(\nu,q_l,q_r)$ is equivalent to a symmetric two-sided geometric distribution~\cite{Chan_SODA19,Qiu_PVLDB20,Zhou_EUROCRYPT23}. 
As $\beta$ is reduced from $1$ to $1-e^{-\epsilon}$, the parameters $q_l$ and $q_r$ are reduced from $\frac{1}{e^{\epsilon/2}}$ to $\frac{e^{-{\epsilon/2}} - 1 + \beta}{\beta}$ and $\frac{\beta}{e^{\epsilon/2} - 1 + \beta}$, respectively. 
In particular, the left-hand curve of $\AGeo(\nu,q_l,q_r)$ becomes steeper, as shown in the left panel of Figure~\ref{fig:AGeo}. 
In other words, random sampling has the effect of anonymization and, therefore, reduces the variance of the \colorB{dummy-count} distribution required to provide ($\epsilon,\delta$)-DP, as shown in the right panel of Figure~\ref{fig:AGeo}.

\begin{figure}[t]
  \centering
  \includegraphics[width=0.95\linewidth]{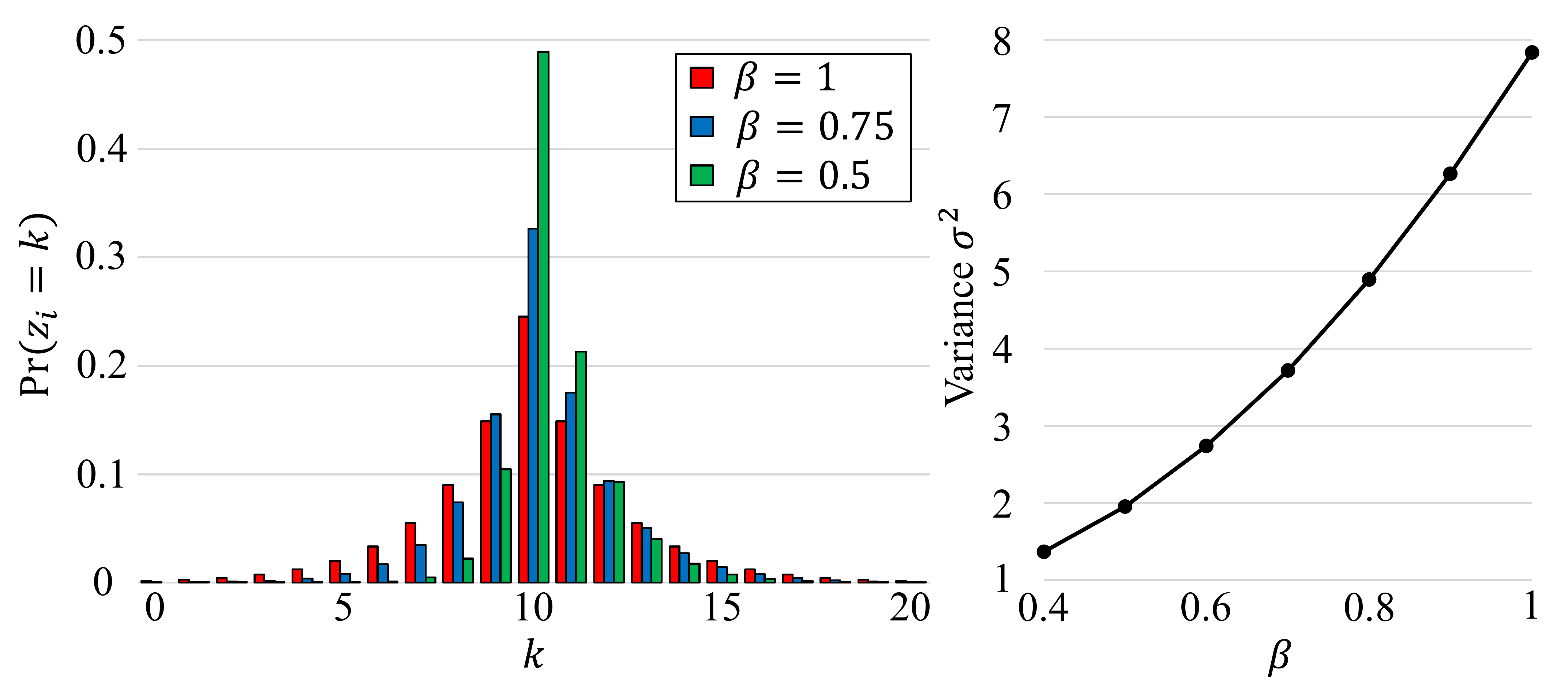}
  \vspace{-4mm}
  \caption{The asymmetric two-sided geometric distribution $\AGeo(\nu,q_l,q_r)$ and its variance $\sigma^2$ (upper bound in (\ref{eq:SAGeo_MSE})) when $\epsilon=1$ and $\nu=10$.}
  \label{fig:AGeo}
\end{figure}

\colorB{Random sampling also has the effect of reducing $\delta$ in DP at the same value of $\nu$, as shown later. 
When $\beta$ reaches $1-e^{-\epsilon/2}$, $q_l$ becomes $0$. 
In this case, $\AGeo(\nu,q_l,q_r)$ becomes a \textit{one-sided} geometric distribution, and $\delta$ becomes $0$.}
We explain this special case in Section~\ref{sub:SOGeo} in detail.

\smallskip{}
\noindent{\textbf{Privacy and Robustness.}}~~Let $\calM_{\AGeo,\beta}$ be a binary input mechanism instantiated with 
$\AGeo(\nu,q_l,q_r)$.
$\calM_{\AGeo,\beta}$ 
provides $(\epsilon,\delta)$-DP: 

\begin{theorem}
\label{thm:SAGeo_DP}
Let $\epsilon \in \nnreals$, 
$q_l = \frac{e^{-{\epsilon/2}} - 1 + \beta}{\beta}$, and $q_r = \frac{\beta}{e^{\epsilon/2} - 1 + \beta}$. 
Then, 
$\calM_{\AGeo,\beta}$ provides $(\frac{\epsilon}{2},\frac{\delta}{2})$-DP, 
where
\begin{align}
\delta = 
\begin{cases}
0   &   \text{(if $\beta = 1 - e^{-\epsilon/2}$)}\\
\frac{2}{\kappa} q_l^\nu (1 - e^{\epsilon/2} + \beta e^{\epsilon/2})    &   \text{(if $\beta > 1 - e^{-\epsilon/2}$)}.
\label{eq:SAGeo_delta}
\end{cases}
\end{align}
\end{theorem}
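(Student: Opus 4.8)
The plan is to work directly with the two output distributions of $\calM_{\AGeo,\beta}$ and bound the pointwise privacy loss. Write $p$ for the pmf of $\AGeo(\nu,q_l,q_r)$, and set $P_0(k)=\Pr[\calM_{\AGeo,\beta}(0)=k]$ and $P_1(k)=\Pr[\calM_{\AGeo,\beta}(1)=k]$. Since $\calM_{\AGeo,\beta}(0)=z$ and $\calM_{\AGeo,\beta}(1)=a+z$ with $a\sim\mathrm{Ber}(\beta)$, $z\sim\AGeo(\nu,q_l,q_r)$, we get $P_0(k)=p(k)$ and $P_1(k)=(1-\beta)p(k)+\beta p(k-1)$ with the convention $p(-1)=0$. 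Because the input space is $\{0,1\}$, proving $(\tfrac{\epsilon}{2},\tfrac{\delta}{2})$-DP reduces to the two inequalities $\Pr[\calM(0)\in S]\le e^{\epsilon/2}\Pr[\calM(1)\in S]+\tfrac{\delta}{2}$ and its symmetric counterpart, and for each direction the smallest valid additive term equals $\sum_k\max\{0,\,P_x(k)-e^{\epsilon/2}P_{x'}(k)\}$, the worst-case event being $\{k:P_x(k)>e^{\epsilon/2}P_{x'}(k)\}$. So the whole argument is a region-by-region comparison of $P_0$ and $P_1$.

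First I would handle the direction $1\to 0$. Splitting $k$ into the three regions $k=0$, $1\le k\le\nu$, and $k\ge\nu+1$, the ratio $p(k-1)/p(k)$ equals $0$, $q_l$, and $1/q_r$ respectively (with the boundary index $k=\nu$ still giving $p(\nu-1)/p(\nu)=q_l$). Hence $\tfrac{P_1(k)}{P_0(k)}=(1-\beta)+\beta\tfrac{p(k-1)}{p(k)}$, and substituting $q_l=\tfrac{e^{-\epsilon/2}-1+\beta}{\beta}$ and $q_r=\tfrac{\beta}{e^{\epsilon/2}-1+\beta}$ makes this equal to $1-\beta$, $e^{-\epsilon/2}$, and $e^{\epsilon/2}$ in the three regions. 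All are $\le e^{\epsilon/2}$, so $P_1(k)\le e^{\epsilon/2}P_0(k)$ for every $k$ and this direction holds with no $\delta$ term.

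Next I would treat the direction $0\to 1$, examining the reciprocal ratios. On $1\le k\le\nu$ we obtain $\tfrac{P_0(k)}{P_1(k)}=\tfrac{1}{(1-\beta)+\beta q_l}=e^{\epsilon/2}$, and on $k\ge\nu+1$ we obtain $e^{-\epsilon/2}$, so neither region contributes. The only problematic atom is $k=0$, where $p(-1)=0$ forces $P_1(0)=(1-\beta)P_0(0)$, hence $\tfrac{P_0(0)}{P_1(0)}=\tfrac{1}{1-\beta}$, which exceeds $e^{\epsilon/2}$ precisely because the standing assumption $\beta\ge 1-e^{-\epsilon/2}$ gives $1-\beta\le e^{-\epsilon/2}$. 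The slack is $\max\{0,\,P_0(0)-e^{\epsilon/2}P_1(0)\}=P_0(0)\bigl(1-e^{\epsilon/2}(1-\beta)\bigr)$, which is $\ge 0$ under the same assumption; using $P_0(0)=p(0)=\tfrac{1}{\kappa}q_l^\nu$ and $1-e^{\epsilon/2}(1-\beta)=1-e^{\epsilon/2}+\beta e^{\epsilon/2}$ gives exactly $\tfrac{\delta}{2}=\tfrac{1}{\kappa}q_l^\nu(1-e^{\epsilon/2}+\beta e^{\epsilon/2})$, i.e.\ the claimed value. For the boundary case $\beta=1-e^{-\epsilon/2}$ we have $q_l=0$, so $p(0)=0$ when $\nu\ge 1$ (and when $\nu=0$ the factor $1-e^{\epsilon/2}+\beta e^{\epsilon/2}$ itself vanishes), yielding $\delta=0$; note the two cases of the stated formula agree, since plugging $\beta=1-e^{-\epsilon/2}$ into the second expression also gives $0$.

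I do not expect a genuine obstacle: the proof is essentially careful bookkeeping of likelihood ratios across the left branch, the peak $k=\nu$ where the two branches of $\AGeo$ meet, and the right branch, verifying that the prescribed $q_l,q_r$ collapse every ratio to $e^{\pm\epsilon/2}$ except at the truncated atom $k=0$. The only points needing a bit of care are the degenerate sub-cases $q_l=0$ and $\nu=0$ (where some $p(k)$ vanish and the ``$\tfrac{P_1}{P_0}$'' expression is $0/0$, but both sides of the required inequality are then $0$), and the clean statement of the tight-$\delta$ characterization $\delta=\sum_k\max\{0,P(k)-e^{\epsilon'}Q(k)\}$, which is what lets us read $\delta$ off the single bad atom.
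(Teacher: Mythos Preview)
Your proposal is correct and follows essentially the same approach as the paper: compute the pointwise likelihood ratios $P_1/P_0$ in the three regions, verify via the chosen values of $q_l,q_r$ that they collapse to $1-\beta$, $e^{-\epsilon/2}$, and $e^{\epsilon/2}$, and read off $\delta/2$ from the single violating atom $k=0$ in the $0\to 1$ direction. Your writeup is in fact slightly cleaner than the paper's (which contains a minor typo in the ratio at $o=0$), and your explicit handling of the tight-$\delta$ characterization and the degenerate cases $q_l=0$, $\nu=0$ is a nice touch.
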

By Theorems~\ref{thm:generalization}, \ref{thm:generalization_poisoning}, and \ref{thm:SAGeo_DP}, \SAGeo{} $\calS_{\AGeo,\beta}$ provides 
$(\epsilon,\delta)$-DP 
and is robust to data poisoning and collusion. 
Given $\beta$, $\epsilon$, and a required value of $\delta$, the parameter $\nu$ is uniquely determined as a minimum value such that $\delta$ in (\ref{eq:SAGeo_delta}) is smaller than or equal to the required value.

\smallskip{}
\noindent{\textbf{Utility.}}~~Next, we show the 
variance of $\AGeo(\nu,q_l,q_r)$:

\begin{proposition}
\label{prop:SAGeo_squared_error}
Let $\kappa^* = \frac{q_l}{1-q_l} + \frac{1}{1 - q_r}$. The variance $\sigma^2$ of $\AGeo(\nu,q_l,q_r)$ is upper bounded as follows:
\begin{align}
\textstyle{\sigma^2 
\leq 
\frac{1}{\kappa^*}\left(\frac{q_l(1+q_l)}{(1-q_l)^3} + \frac{q_r(1+q_r)}{(1-q_r)^3}\right),}
\label{eq:SAGeo_MSE}
\end{align}
\end{proposition}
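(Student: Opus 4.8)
The plan is to compute $\sigma^2 = \E[z_i^2] - \mu^2 \le \E[z_i^2]$ and bound $\E[z_i^2]$ directly from the probability mass function of $\AGeo(\nu,q_l,q_r)$, discarding the $-\mu^2$ term (which only helps) and comparing $\kappa$ against the simpler quantity $\kappa^*$. Writing the second moment as a sum over the two pieces of the support,
\begin{align*}
\E[z_i^2] = \frac{1}{\kappa}\left( \sum_{k=0}^{\nu-1} k^2 q_l^{\nu-k} + \sum_{k=\nu}^{\infty} k^2 q_r^{k-\nu} \right),
\end{align*}
I would handle the two sums separately. For the right tail, substitute $m = k - \nu \ge 0$, expand $k^2 = (m+\nu)^2 = m^2 + 2m\nu + \nu^2$, and use the standard closed forms $\sum_{m\ge0} q_r^m = \frac{1}{1-q_r}$, $\sum_{m\ge0} m q_r^m = \frac{q_r}{(1-q_r)^2}$, and $\sum_{m\ge0} m^2 q_r^m = \frac{q_r(1+q_r)}{(1-q_r)^3}$. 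For the left piece, substitute $j = \nu - k$ so $k = \nu - j$ with $j$ running from $1$ to $\nu$; then $k^2 = (\nu-j)^2 \le \nu^2$ is too lossy, so instead I would bound $k^2 q_l^{\nu-k} = (\nu-j)^2 q_l^j \le$ (something summable). The cleanest route: note $k \le \nu$ on the left piece, but that reintroduces $\nu$; better is to recenter the sum as $\sum_{k=0}^{\nu-1} k^2 q_l^{\nu-k}$ and reparametrize so it looks like a geometric-type moment sum truncated from above, then extend to an infinite sum to get $\sum_{j\ge1} (\nu-j)^2 q_l^j$ — still $\nu$-dependent.

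The key realization that makes (\ref{eq:SAGeo_MSE}) $\nu$-free must be that the target bound $\frac{1}{\kappa^*}\big(\frac{q_l(1+q_l)}{(1-q_l)^3} + \frac{q_r(1+q_r)}{(1-q_r)^3}\big)$ is exactly the variance (or second-moment-type expression) of a \emph{two-sided} geometric distribution centered appropriately, and the truncation at $0$ together with the shift by $\nu$ can only \emph{decrease} the spread. So the cleaner plan is: let $Y$ be a (shifted, non-truncated) two-sided geometric random variable with mode at $\nu$, i.e. $\Pr(Y = \nu + m) \propto q_r^m$ for $m \ge 0$ and $\propto q_l^{-m}$ for $m < 0$, normalized by $\kappa^*$. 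Then $z_i$ is obtained from $Y$ by conditioning on $Y \ge 0$ (truncation), and I would argue that $\V[z_i] \le \V[Y]$. The latter holds because truncating a unimodal distribution to a half-line containing the mode cannot increase the variance — but to keep the proof self-contained I would instead show $\E[z_i^2] \le \E[(Y - \nu)^2 \mid \text{centered appropriately}]$...

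Actually the most robust and least error-prone approach, which I would carry out, is: bound $\sigma^2 \le \E[(z_i - \nu)^2]$ (since the variance is at most the second moment about \emph{any} point, here the mode $\nu$), then
\begin{align*}
\E[(z_i - \nu)^2] = \frac{1}{\kappa}\left( \sum_{k=0}^{\nu-1} (\nu-k)^2 q_l^{\nu-k} + \sum_{k=\nu}^{\infty} (k-\nu)^2 q_r^{k-\nu} \right) \le \frac{1}{\kappa}\left( \sum_{j=1}^{\infty} j^2 q_l^{j} + \sum_{m=0}^{\infty} m^2 q_r^{m} \right),
\end{align*}
where the inequality extends the finite left sum to an infinite one. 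Now both sums are the standard $\sum j^2 q^j = \frac{q(1+q)}{(1-q)^3}$, giving numerator $\frac{q_l(1+q_l)}{(1-q_l)^3} + \frac{q_r(1+q_r)}{(1-q_r)^3}$. Finally I would show $\kappa \ge \kappa^*$: indeed $\kappa = \frac{q_l(1-q_l^\nu)}{1-q_l} + \frac{1}{1-q_r}$ and $\kappa^* = \frac{q_l}{1-q_l} + \frac{1}{1-q_r}$, and since $1 - q_l^\nu \le 1$ we get $\kappa \le \kappa^*$, which is the \emph{wrong} direction — so dividing by $\kappa$ rather than $\kappa^*$ gives a \emph{larger} bound, meaning I need to be careful: the claimed bound uses $\frac{1}{\kappa^*}$ which is \emph{smaller} than $\frac{1}{\kappa}$. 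Hence the displayed inequality as written requires also absorbing the gap, or the intended reading is that $\kappa^*$ replaces $\kappa$ because the left sum was extended (so the "effective normalizer" for the bounding two-sided geometric is $\kappa^*$, and one must re-examine whether $\E[(z_i-\nu)^2] = \frac{1}{\kappa}(\cdots) \le \frac{1}{\kappa^*}(\text{extended }\cdots)$ holds termwise).

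\textbf{Main obstacle.} The delicate point is reconciling the normalizing constants: $\kappa \le \kappa^*$, so one cannot naively both extend the left sum \emph{and} replace $\frac{1}{\kappa}$ by $\frac{1}{\kappa^*}$ — the correct argument must compare the truncated distribution with the full two-sided geometric as a genuine stochastic/variance comparison (truncation-to-the-mode-side reduces variance), for which I would use the representation $z_i \stackrel{d}{=} (Y \mid Y \ge 0)$ with $Y$ two-sided geometric (normalizer $\kappa^*$) and invoke the fact that conditioning a unimodal lattice distribution on a half-line through its mode does not increase the variance. Making that variance-comparison lemma precise — rather than the term-by-term bound, which appears to go the wrong way on the constant — is the crux of the proof.
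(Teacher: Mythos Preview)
Your final approach---introduce the non-truncated two-sided geometric $Y$ with normalizer $\kappa^*$, view $z_i-\nu$ as $Y$ conditioned on $Y\ge -\nu$, argue $\V[z_i]\le\V[Y]$, and then bound $\V[Y]\le\E[Y^2]=\frac{1}{\kappa^*}\bigl(\frac{q_l(1+q_l)}{(1-q_l)^3}+\frac{q_r(1+q_r)}{(1-q_r)^3}\bigr)$---is exactly the route the paper takes. Your diagnosis that the naive termwise bound $\E[(z_i-\nu)^2]\le\frac{1}{\kappa}(\cdots)$ fails because $\kappa\le\kappa^*$ is correct, and your proposed fix is the right one.

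On the step you flag as the crux: the paper justifies ``truncation does not increase variance'' only by the one-line heuristic ``truncation always moves $X$ toward the mean'' (since $q_l\le q_r$ the mean of $Y$ is nonnegative, and the discarded mass sits in the far-left tail). If you want something cleaner, note that $Y$ is log-concave on $\ints$ (linear on each side, and $q_lq_r<1$ handles the join at $0$), and conditioning a log-concave lattice distribution on a half-line cannot increase its variance. Either way, the paper does not make this step more rigorous than you already suspected it would need to be.
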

By Theorem~\ref{thm:generalization_utility} and Proposition~\ref{prop:SAGeo_squared_error}, \SAGeo{} achieves the following expected $l_2$ loss: 
\begin{align*}
\textstyle{\E\left[ \sum_{i=1}^d (\hf_i - f_i)^2 \right] 
\hspace{-0.5mm} \leq \hspace{-0.5mm}
\frac{1-\beta}{\beta n} \hspace{-0.5mm}+\hspace{-0.5mm} \frac{d}{\kappa^* \beta^2 n^2} \hspace{-0.5mm} \left(\frac{q_l(1+q_l)}{(1-q_l)^3} + \frac{q_r(1+q_r)}{(1-q_r)^3}\right).}
\end{align*}
When $\epsilon$ is close to $0$ (i.e., $e^\epsilon \approx \epsilon + 1$), the right-hand side of (\ref{eq:SAGeo_MSE}) can be approximated as $\frac{1-\beta}{\beta n} + \frac{8d}{\epsilon^2 n^2}$\conference{.}\arxiv{ (see Appendix~\ref{sub:loss_SAGeo} for details).} 

\smallskip{}
\noindent{\textbf{Communication Cost.}}~~The total communication cost $C_{tot}$ of $\SAGeo{}$ is given by (\ref{eq:C_tot_generalization}), where $\mu$ is given by (\ref{eq:SAGeo_mu_ast}). 
When $\epsilon$ is close to $0$ and $\beta > 1 - e^{-\epsilon/2}$, 
$C_{tot}$ can be upper bounded as follows: $C_{tot} \leq \alpha ((1 + \beta)n + \frac{2\beta d (\log(\epsilon / 2\delta) + 1)}{\epsilon})$\conference{.}\arxiv{ (see Appendix~\ref{sub:communication_SAGeo} for details).}

\begin{table*}[t]
\caption{Performance guarantees of various shuffle protocols 
($\lambda$ ($= \frac{n'}{n+n'}$): fraction of fake users, $f_T$ ($= \sum_{i \in \calT} f_i$): frequencies over target items, $|\calT|$: \#target items, \colorB{$\alpha$: \#bits required to encrypt each input value}). 
\BC{} and \LWY{} 
assume that $n \geq \frac{400\log(4/\delta)}{\epsilon^2}$ and $n \geq \frac{32d\log(2/\delta)}{\epsilon^2}$, respectively, as described in \cite{Balcer_ITC20,Luo_CCS22}. 
The expected $l_2$ loss, the communication cost, and the overall gain are approximate values when $\epsilon$ is close to $0$.}
\vspace{-2.8mm}
\centering
\hbox to\hsize{\hfil
\begin{tabular}{c||c|c|c|c|c}
\hline
&   Privacy &   Expected $l_2$ loss    &  
\begin{tabular}{c}
\hspace{-2mm}Communication cost\hspace{-2mm} \\ \hspace{-2mm}$C_{tot}$ ($\times \alpha$)\hspace{-2mm}
\end{tabular}
&   Overall gain $\GMGA$ &   
\begin{tabular}{c}
\hspace{-2mm}Robustness to  \hspace{-2mm} \\ \hspace{-2mm}collusion with users\hspace{-2mm}
\end{tabular}\\
\hline
\begin{tabular}{c}
\hspace{-4mm}\SBin{}\hspace{-4mm} \\ \hspace{-4mm}($\beta \in [0,1]$)\hspace{-4mm}
\end{tabular}
&   $(\epsilon,\delta)$-DP &   $\frac{1-\beta}{\beta n} + \frac{8d \log\frac{4\beta}{\delta}}{\epsilon^2 n^2}$    &   
$(1+\beta)n + \frac{16\beta^2 d\log\frac{4\beta}{\delta}}{\epsilon^2}$  &   $\lambda (1 - f_T)$ &   $\checkmark$ (Theorem~\ref{thm:generalization})\\
\hline
\begin{tabular}{c}
\hspace{-4mm}\SAGeo{}\hspace{-4mm} \\ \hspace{-3mm}($\beta \in (1 - e^{-\frac{\epsilon}{2}}, 1]$)\hspace{-3mm}
\end{tabular}
&   $(\epsilon,\delta)$-DP &   $\leq \frac{1-\beta}{\beta n} + \frac{8d}{\epsilon^2 n^2}$   &   $\leq (1 + \beta)n + \frac{2\beta d (\log\frac{\epsilon}{2\delta} + 1)}{\epsilon}$ &   $\lambda (1 - f_T)$ &   $\checkmark$ (Theorem~\ref{thm:generalization})\\
\hline
\SOGeo{}
&   $\epsilon$-DP &   $\frac{2}{\epsilon n} + \frac{8d}{\epsilon^2 n^2}$   &   $n + d$ &   $\lambda (1 - f_T)$ &   $\checkmark$ (Theorem~\ref{thm:generalization})\\
\hline
\begin{tabular}{c}
\hspace{-4mm}\GRRS{}\hspace{-4mm} \\ \hspace{-4mm}(\hspace{-0.1mm}\cite{Kairouz_ICML16,Wang_USENIX17} +~\cite{Feldman_FOCS21})\hspace{-4mm}
\end{tabular}
&   $(\epsilon,\delta)$-DP &   
$\frac{32d\log\frac{4}{\delta}}{\epsilon^2 n^2}$ 
&   $2n$ &   
\begin{tabular}{c}
$\lambda (1 - f_T)$ \\ \hspace{-4mm}$+ \frac{4 \lambda(d - |\calT|)\sqrt{2(d+1)\log\frac{4}{\delta}}}{\epsilon  d\sqrt{n}}$ \hspace{-4mm}
\end{tabular}
&   $\times$ (Proposition~\ref{prop:shuffle_collusion})\\
\hline
\begin{tabular}{c}
\hspace{-4mm}\OUES{}\hspace{-4mm} \\ \hspace{-4mm}(\hspace{-0.1mm}\cite{Wang_USENIX17} +~\cite{Feldman_FOCS21})\hspace{-4mm}
\end{tabular}
&   $(\epsilon,\delta)$-DP &   $\frac{64d\log\frac{4}{\delta}}{\epsilon^2 n^2}$ &   $2n$ &   
\begin{tabular}{c}
$\lambda (2|\calT| - f_T)$ \\ $+ \frac{8 \lambda |\calT|\sqrt{\log\frac{4}{\delta}}}{\epsilon \sqrt{n}}$
\end{tabular}
&   $\times$ (Proposition~\ref{prop:shuffle_collusion})\\
\hline
\begin{tabular}{c}
\hspace{-4mm}\OLHS{}\hspace{-4mm} \\ \hspace{-4mm}(\hspace{-0.1mm}\cite{Wang_USENIX17} +~\cite{Feldman_FOCS21})\hspace{-4mm}
\end{tabular}
&   $(\epsilon,\delta)$-DP &   $\frac{64d\log\frac{4}{\delta}}{\epsilon^2 n^2}$ &   $2n$ &   
\begin{tabular}{c}
$\lambda (2|\calT| - f_T)$ \\ $+ \frac{8 \lambda |\calT|\sqrt{\log\frac{4}{\delta}}}{\epsilon \sqrt{n}}$
\end{tabular}
&   $\times$ (Proposition~\ref{prop:shuffle_collusion})\\
\hline
\begin{tabular}{c}
\hspace{-4mm}\RAPS{}\hspace{-4mm} \\ \hspace{-4mm}(\hspace{-0.1mm}\cite{Erlingsson_CCS14} +~\cite{Feldman_FOCS21})\hspace{-4mm}
\end{tabular}
&   $(\epsilon,\delta)$-DP &   $\frac{64d\log\frac{4}{\delta}}{\epsilon^2 n^2}$ &   $2n$ &   
\begin{tabular}{c}
$\lambda (|\calT| - f_T)$ \\ $+ \frac{8 \lambda |\calT|\sqrt{\log\frac{4}{\delta}}}{\epsilon \sqrt{n}}$
\end{tabular}
&   $\times$ (Proposition~\ref{prop:shuffle_collusion})\\
\hline
\BC{}~\cite{Balcer_ITC20}   &   $(\epsilon,\delta)$-DP &   $\geq \frac{100d\log\frac{4}{\delta}}{\epsilon^2 n^2}$   &   $\geq 2n(1+\frac{d}{2})$    &   
$\lambda (1 - f_T + \frac{200|\calT|}{\epsilon^2 n}\log\frac{4}{\delta})$
&   $\times$ (Proposition~\ref{prop:shuffle_collusion})\\
\hline
\CM{}~\cite{Cheu_SP22}
&   $(\epsilon,\delta)$-DP &   
$\geq \frac{132d\log\frac{4}{\delta}}{5\epsilon^2 n^2}$ 
&   
$\geq 2\left(n+\frac{528 \log\frac{4}{\delta}}{5\epsilon^2}\right)$ 
&   
$\geq \lambda (|\calT| - f_T)$ 
&   $\times$ (Proposition~\ref{prop:shuffle_collusion})\\
\hline
\LWY{}~\cite{Luo_CCS22}   &   $(\epsilon,\delta)$-DP &   $\frac{32d\log\frac{2}{\delta}}{\epsilon^2 n^2}$   &   $2n \left(1+\frac{32d \log\frac{2}{\delta}}{\epsilon^2 n} \right)$    &   
\begin{tabular}{c}
$\geq \lambda (1 - f_T)$ \\ \hspace{-4mm}$+ \frac{32\lambda (d-|\calT|)\log\frac{2}{\delta}}{\epsilon^2 n}$ \hspace{-4mm}
\end{tabular}   
&   $\times$ (Proposition~\ref{prop:shuffle_collusion})\\
\hline
\end{tabular}
\hfil}
\label{tab:performance}
\end{table*}

\subsection{\SOGeo{} (Sample, One-Sided Geometric Dummies, and Shuffle)}
\label{sub:SOGeo}

\noindent{\textbf{Protocol.}}~~Our third protocol \SOGeo{} 
is a special case of \SAGeo{} where $\beta = 1 - e^{-\epsilon/2}$. 
By (\ref{eq:SAGeo_mu_ast}), 
(\ref{eq:SAGeo_q_r}), 
and (\ref{eq:SAGeo_delta}), when we set $\beta = 1 - e^{-\epsilon/2}$ in \SAGeo{}, we have $q_l = 0$, 
$q_r = \frac{1}{1+e^{\epsilon/2}}$, 
$\delta = 0$, $\nu = 0$, and $\mu = \frac{q_r}{1-q_r}$. 
In this case, the number $z_i$ of dummy values 
follows
the one-sided geometric distribution with parameter $q_r$, denoted by $\OGeo(q_r)$. 
In $\OGeo(q_r)$, the probability mass function at $z_i = k$ is given by 
\begin{align*}
\Pr(z_i = k) = 
(1 - q_r) q_r^k ~~~~ \text{($k = 0, 1, \ldots$)}.
\end{align*}

\smallskip{}
\noindent{\textbf{Theoretical Properties.}}~~Since \SOGeo{} is a special case of \SAGeo{}, it inherits the theoretical properties of \SAGeo{}. 
Notably, \SOGeo{} provides pure $\epsilon$-DP ($\delta=0$), as shown in Theorem~\ref{thm:SAGeo_DP}. 
Note that $\OGeo(q_r)$ alone cannot provide pure $\epsilon$-DP, as it cannot reduce the absolute frequency for each item. 
Thanks to random sampling, \SOGeo{} reduces the absolute frequency by $1$ with probability $e^{-\epsilon/2}(1-q_r)$, which brings us pure $\epsilon$-DP.

For the expected $l_2$ loss, $\OGeo(q_r)$ has variance $\sigma^2 = \frac{q_r}{(1-q_r)^2}$. 
Thus, by Theorem~\ref{thm:generalization_utility}, \SOGeo{} achieves the following expected $l_2$ loss: 
\begin{align*}
\textstyle{\E\left[ \sum_{i=1}^d (\hf_i - f_i)^2 \right] 
= \frac{1-\beta}{\beta n} + \frac{q_r d}{(1-q_r)^2 \beta^2 n^2}}.
\end{align*}
When $\epsilon$ is close to $0$ (i.e., $e^\epsilon \approx \epsilon + 1$), 
it 
can be approximated as $\frac{2}{\epsilon n} + \frac{8d}{\epsilon^2 n^2}$\conference{.}\arxiv{ (see Appendix~\ref{sub:loss_SOGeo} for details).} 

For the communication cost, \SOGeo{} achieves a very small value of $C_{tot}$, as $\beta$ is small. 
Specifically, since $\OGeo(q_r)$ has mean $\frac{q_r}{1-q_r}$, the total communication cost $C_{tot}$ of $\SOGeo{}$ is given by (\ref{eq:C_tot_generalization}), where $\mu = \frac{q_r}{1-q_r}$. 
When $\epsilon$ is close to $0$, it can be approximated as 
$C_{tot} \approx \alpha (n + d)$\conference{.}\arxiv{ (see Appendix~\ref{sub:communication_SOGeo} for details).}

\subsection{Comparison with Existing Protocols}
\label{sub:comparison}

Finally, we compare our protocols with 
seven existing shuffle protocols (four single-message protocols and three multi-message protocols). 
For single-message protocols, we consider the single-message protocols based on the GRR~\cite{Kairouz_ICML16,Wang_USENIX17}, OUE~\cite{Wang_USENIX17},  OLH~\cite{Wang_USENIX17}, and RAPPOR~\cite{Erlingsson_CCS14} (denoted by \GRRS{}, \OUES{}, \OLHS{}, and \RAPS{}, respectively). 
For multi-message protocols, we consider the protocols in~\cite{Balcer_ITC20,Cheu_SP22,Luo_CCS22} (denoted by \BC{}, \CM{}, and \LWY{}, respectively). 
For \LWY{}, we use a protocol for a small domain ($d < \tilde{O}(n)$), as this paper deals with such domain size. 
We did not evaluate a single-message protocol based on the Hadamard response~\cite{Acharya_AISTATS19} and the multi-message protocol in~\cite{Ghazi_EUROCRYPT21}, because they are less accurate than \OLHS{} and \LWY{}, respectively, as shown in~\cite{Wang_PVLDB20,Luo_CCS22}. 

For the existing single-message protocols, 
the expected $l_2$ loss and the overall gain $\GMGA$ of the LDP mechanisms 
are shown in~\cite{Wang_USENIX17} and~\cite{Cao_USENIX21}, respectively. 
Based on these results, we calculate the expected $l_2$ loss and $\GMGA$ of 
the shuffle versions 
by using Theorem~\ref{thm:shuffle} for the OUE/OLH/RAPPOR and a tighter bound in~\cite{Feldman_FOCS21} (Corollary IV.2 in~\cite{Feldman_FOCS21}) for the GRR. 
\arxiv{See Appendix~\ref{sec:existing_performance} for details.}

For the existing multi-message protocols, each fake user can send much more messages than genuine users. 
However, such attacks may be easily detected, as genuine users send much fewer messages on average. 
Thus, we consider an attack that maximizes the overall gain while keeping the expected number of messages unchanged to avoid detection. 
See Appendix~\ref{sec:existing_mechanisms} for details. 

Table~\ref{tab:performance} shows 
the performance guarantees. 
Below, we highlight the key findings: 
\begin{itemize}
    \item \SBin{} ($\beta=1$) achieves the expected $l_2$ loss at least $4$ times smaller than the existing protocols. 
    \item \SAGeo{} ($\beta=1$) achieves the expected $l_2$ loss at least $\log \frac{4}{\delta}$ times smaller than \SBin{} ($\beta=1$). 
    \item \SOGeo{} achieves pure $\epsilon$-DP ($\delta=0$). It also achieves the communication cost $C_{tot}$ smaller than the existing protocols when $d<n$. 
    \item In most existing protocols, the overall gain $\GMGA$ goes to $\infty$ as $\epsilon$ approaches $0$. 
    In contrast, $\GMGA$ of our protocols does not depend on $\epsilon$ and is always smaller than that of all the existing protocols. 
    \item Our protocols are robust to collusion attacks by the data collector and users (Theorem~\ref{thm:generalization}), whereas the existing protocols are not (Proposition~\ref{prop:shuffle_collusion}).  
\end{itemize}
In summary, \SAGeo{} is accurate, \SOGeo{} achieves pure $\epsilon$-DP and is communication-efficient, 
and all of our protocols are robust to data poisoning and collusion with users. 
We also show that our experimental results in Section~\ref{sec:exp} are consistent with 
Table~\ref{tab:performance}.

\section{Experimental Evaluation}
\label{sec:exp}

\subsection{Experimental Set-up}
\label{sub:set-up}

\noindent{\textbf{Datasets.}}~~We conducted experiments using 
the following four datasets with a variety of $n$ (\#users) and $d$ (\#items): 
\begin{itemize}
\item \textbf{Census dataset}:
The 1940 US Census data~\cite{ruggles2023ipums}. Following~\cite{Wang_PVLDB20}, we sampled $1\%$ of the users and used urban attributes. 
There were  
$n=602156$ users 
and $d=915$ different types of attribute values.

\item \textbf{Foursquare dataset}: The Foursquare dataset (global-scale check-in dataset) in~\cite{yang2016participatory}. 
We extracted $n=359054$ check-ins in Manhattan, assuming that each check-in is made by a different user. 
We used 
$d=407$ categories for check-in locations, 
such as Gym, Coffee Shop, Church, and Hotel. 

\item \textbf{Localization dataset}:
The dataset for recognizing person activities collected through wearable sensors~\cite{kaluvza2010agent}. It 
includes 
$n=164860$ records, with each record indicating a specific activity, including walking, falling, and sitting on the ground, amounting to $d=11$ different activity types in total.

\item \textbf{RFID dataset}:
The dataset for recognizing activities of elderly individuals based on RFID technology~\cite{torres2013sensor}. 
It comprises $n=75128$ 
records, 
each representing a specific activity, such as sitting or lying on a bed and moving around. 
It categorizes activities into $d=4$ distinct types.
\end{itemize}

\smallskip{}
\noindent{\textbf{Protocols.}}~~Using the four datasets, we compared 
our three protocols with the seven state-of-the-art shuffle protocols shown in Table~\ref{tab:performance}. 
Note that \BC{}~\cite{Balcer_ITC20} (resp.~\LWY{}~\cite{Luo_CCS22}) assumes that $n \geq \frac{400\log(4/\delta)}{\epsilon^2}$ and $\epsilon \in (0,2]$ (resp.~$n \geq \frac{32d\log(2/\delta)}{\epsilon^2}$ and $\epsilon \in (0,3]$). 
Thus, we evaluated their performance only in this case. 

It is shown in \cite{Cheu_SP22} 
that 
\CM{} provides high accuracy when the number $\xi\in\nats$ of dummy values per user is $\xi=10$. 
However, when $\epsilon$ is small (e.g., $\epsilon \leq 0.1)$, the minimum value $\xi_{min}$ of $\xi$ required 
in the privacy analysis in~\cite{Cheu_SP22} 
is larger than $10$. 
Thus, we set $\xi = \max\{10,\xi_{min}\}$ in \CM{}. 
For \BC{}, \CM{}, and \LWY{}, we used the privacy analysis results in~~\cite{Balcer_ITC20,Cheu_SP22,Luo_CCS22} and the unbiased estimators in~~\cite{Balcer_ITC20,Cheu_SP22,Luo_CCS22}. 
For the other existing protocols, 
we used the privacy amplification bound in~\cite{Feldman_FOCS21} (as described in Section~\ref{sub:existing_shuffle}) and the unbiased estimator in~\cite{Wang_USENIX17}. 

Note that the estimates can be negative values. 
Thus, following~\cite{Erlingsson_CCS14,Wang_USENIX17}, we 
kept only estimates above a significance threshold determined by the Bonferroni correction. 
Then, we uniformly assigned the remaining probabilities to each estimate below the threshold. 

\smallskip{}
\noindent{\textbf{Existing Defenses.}}~~We also evaluated three existing defenses against data poisoning and collusion with users. 
For existing defenses against data poisoning, we evaluated the normalization technique in~\cite{Cao_USENIX21} and LDPRecover~\cite{Sun_ICDE24}, 
as described in Section~\ref{sec:related}. 

For an existing defense against collusion with users, we evaluated the defense in~\cite{Wang_PVLDB20}. 
Note that the number of dummy values in their defense affects the utility. 
Specifically, if we add $an$ ($a \in \nnreals$) dummy values in the defense in~\cite{Wang_PVLDB20}, the expected $l_2$ loss is increased by $(1+a)^2$ times. 
To avoid a significant increase in the loss, we set $a$ to $0.5$. 
In this case, the MSE increases by $2.25$ times.

\smallskip{}
\noindent{\textbf{Performance Metrics.}}~~We ran each protocol $100$ times and 
evaluated the MSE as the sample mean of the squared error over the $100$ runs.  
For the communication cost, we 
evaluated $C_{tot}$. 
For data poisoning, we evaluated the average of $\GMGA$ over the $100$ runs. 
For collusion, we evaluated an actual value of $\epsilon$ after the data collector colludes with users. 
\colorB{We calculated the actual $\epsilon$ in the existing protocols, the defense in~\cite{Wang_PVLDB20}, and our protocols based on Proposition~\ref{prop:shuffle_collusion}, Corollary 10 in~\cite{Wang_PVLDB20}, and Theorem~\ref{thm:generalization}, respectively.} 

Note that although Table~\ref{tab:performance} introduces an approximation that holds when $\epsilon$ is close to 0, we 
evaluated the exact values of the MSE, $C_{tot}$, and $G_{MGA}$ in our experiments.

\begin{figure}[t]
\centering
\includegraphics[keepaspectratio=true,width=0.99\linewidth]{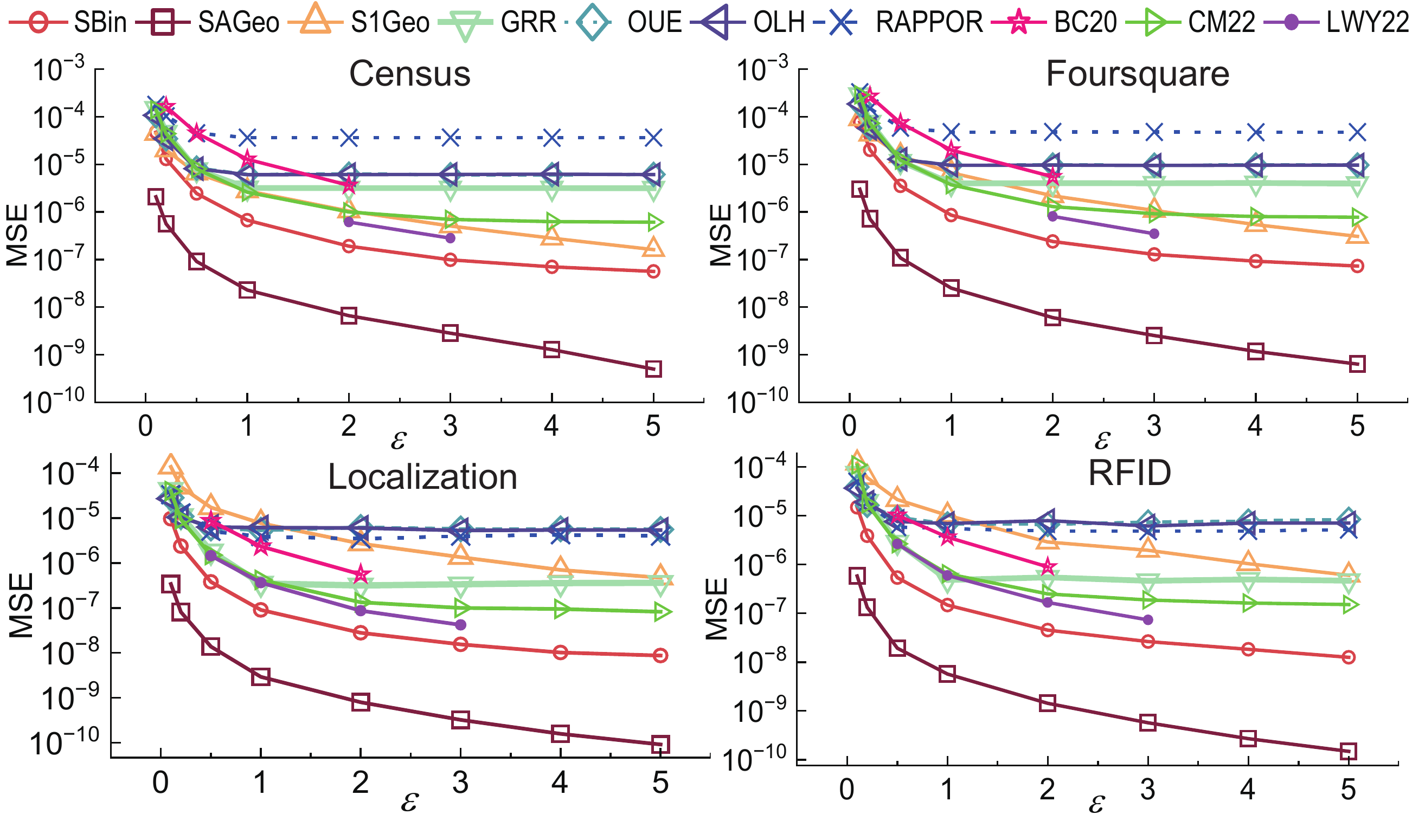}
\vspace{-4mm}
\caption{MSE vs. $\epsilon$ ($\delta = 10^{-12}$, $\beta = 1$).}
\label{fig:epsilon_loss_l2}
\vspace{1mm}
\centering
\includegraphics[keepaspectratio=true,width=0.99\linewidth]{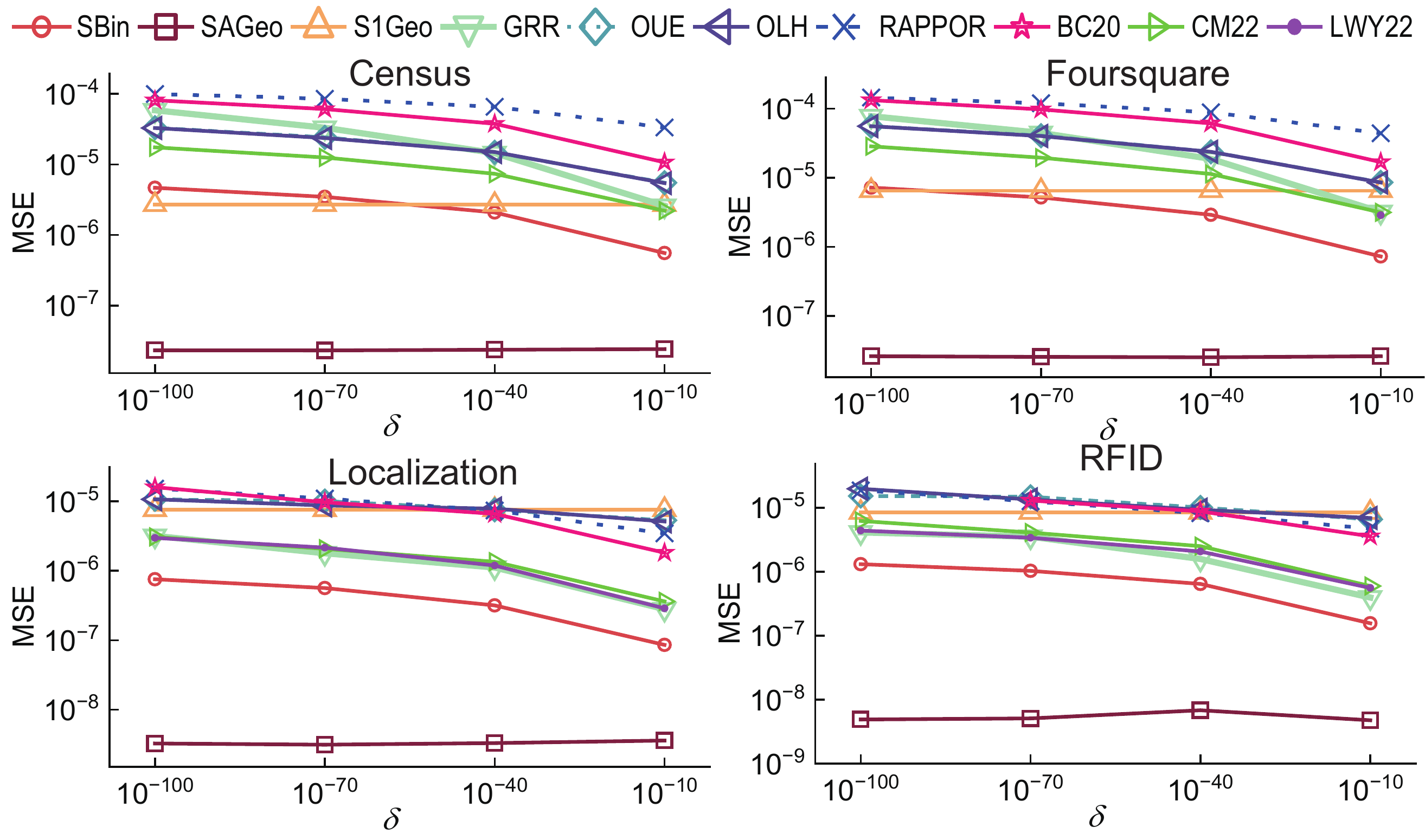}
\vspace{-3mm}
\caption{MSE vs. $\delta$ ($\epsilon = 1$, $\beta = 1$).}
\label{fig:delta_loss_l2}
\vspace{1mm}
  \centering
  \includegraphics[width=0.99\linewidth]{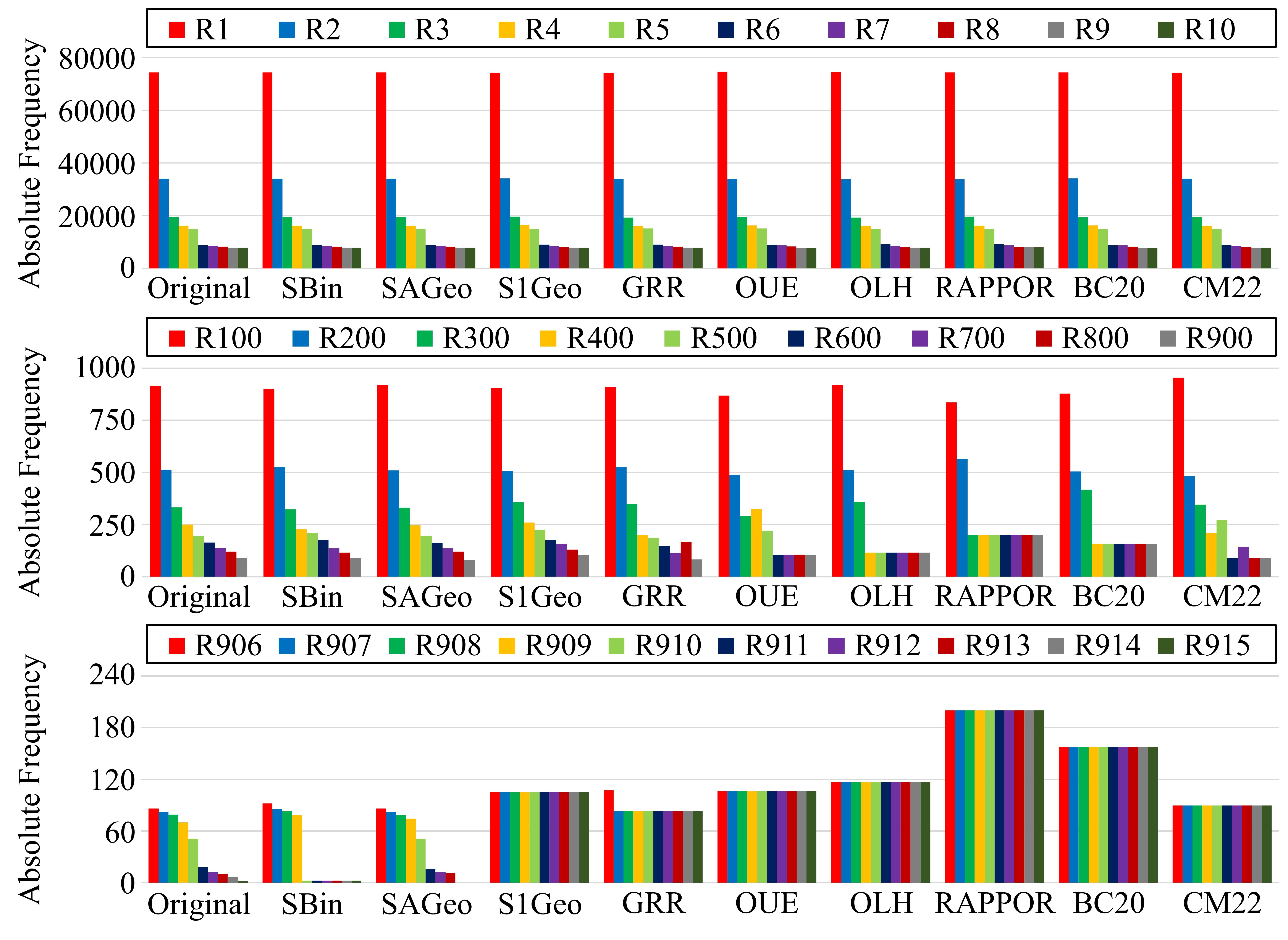}
  \vspace{-7mm}
  \caption{\colorB{Estimated histograms (Census, $\epsilon=1$, $\delta=10^{-12}$, $\beta=1$).}}
  \label{fig:histogram}
\end{figure}

\subsection{Experimental Results}
\label{sub:results}
\noindent{\textbf{Utility.}}~~We first evaluated the MSE for each protocol while changing the values of $\epsilon$ or $\delta$. 
Figure~\ref{fig:epsilon_loss_l2} and~\ref{fig:delta_loss_l2} show 
the results (we omit ``\Shuffle{}'' from the protocol names). 
We observe that \SBin{} ($\beta = 1$) outperforms all of the existing protocols in all cases. 
In addition, \SAGeo{} ($\beta = 1$) significantly outperforms \SBin{} in all cases, demonstrating 
high accuracy of \SAGeo{}. 

Figure~\ref{fig:epsilon_loss_l2} and~\ref{fig:delta_loss_l2} also show that \SOGeo{} is comparable to the existing protocols and is effective especially when $\delta$ is small. 
This is because \SOGeo{} provides pure $\epsilon$-DP. 
In other words, the MSE of \SOGeo{} is independent of the required value of $\delta$. 
Note that the MSE of \SAGeo{} is also independent of $\delta$, as it can decrease the value of $\delta$ by increasing the mode $\nu$ of $\AGeo(\nu,q_l,q_r)$. 
However, the communication cost $C_{tot}$ becomes larger in this case, as shown 
later. 

\colorB{We then examined the original histogram and each protocol's estimate in the Census dataset. 
Figure~\ref{fig:histogram} shows the results ($\epsilon=1$, $\delta=10^{-12}$). 
Here, we sorted $915$ items in descending order of the original frequency and named them as R1, ..., R915. 
We report R1-10 (popular items), R100, 200, ..., 900, and R906-915 (unpopular items).}

\colorB{Figure~\ref{fig:histogram} shows that 
all the existing protocols fail to estimate frequencies for unpopular items. 
In contrast, our \SAGeo{} successfully estimates frequencies until R913. 
\SAGeo{} fails to estimate frequencies for R914 and R915; the estimated frequencies are $0$ for these items. 
This is because the original absolute frequencies are too small 
(only $6$ and $2$ in R914 and R915, respectively). 
In other words, they are \textit{outliers}. 
We believe that this is a fundamental limitation of DP; it is impossible to provide high privacy and utility for outliers at the same time. 
We also emphasize that \SAGeo{} provides higher utility than the existing protocols, especially for less popular items.}

\smallskip{}
\noindent{\textbf{Communication Cost.}}~~We also evaluated the communication cost $C_{tot}$ of each protocol by changing $d$, $\epsilon$, $n$, and $\delta$ to various values. 
Here we set $d=100$, $\epsilon=1$, $n=10^4$, and $\delta = 10^{-12}$ as default values. 

\begin{figure}[t]
\centering
\includegraphics[keepaspectratio=true,width=0.99\linewidth]{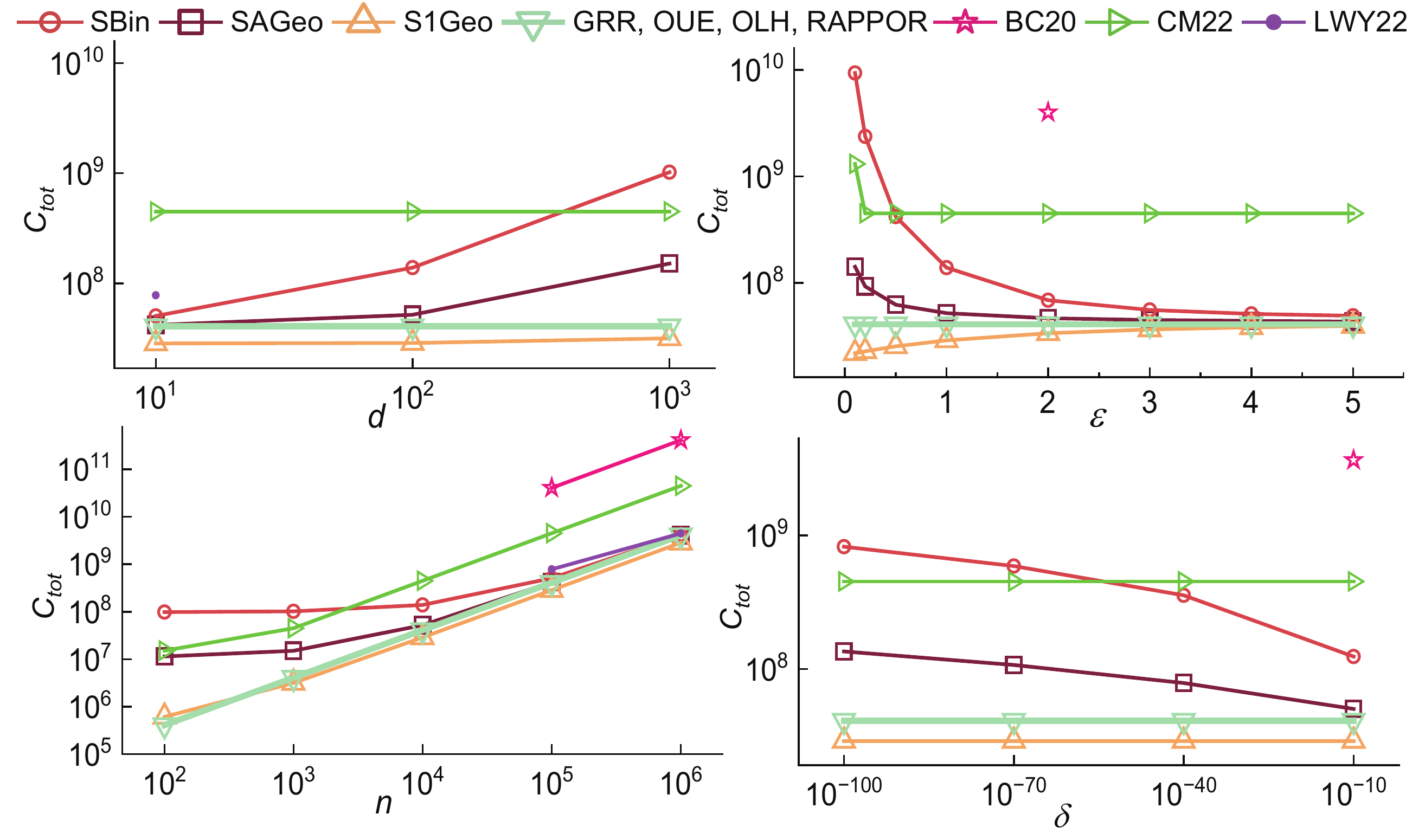}
\vspace{-4mm}
\caption{Communication cost $C_{tot}$ (bits). We set $d=100$, $\epsilon=1$, $n=10^4$, and $\delta = 10^{-12}$ as default values ($\beta=1$, \colorB{2048-bit RSA}). 
\colorB{The GRR, OUE, OLH, and RAPPOR have the same $C_{tot}$ because the size of their obfuscated data is $\leq d$ (resp.~$2048$) bits before (resp.~after) encryption.}
}
\label{fig:communication_cost}
\vspace{2mm}
\centering
\includegraphics[keepaspectratio=true,width=0.99\linewidth]{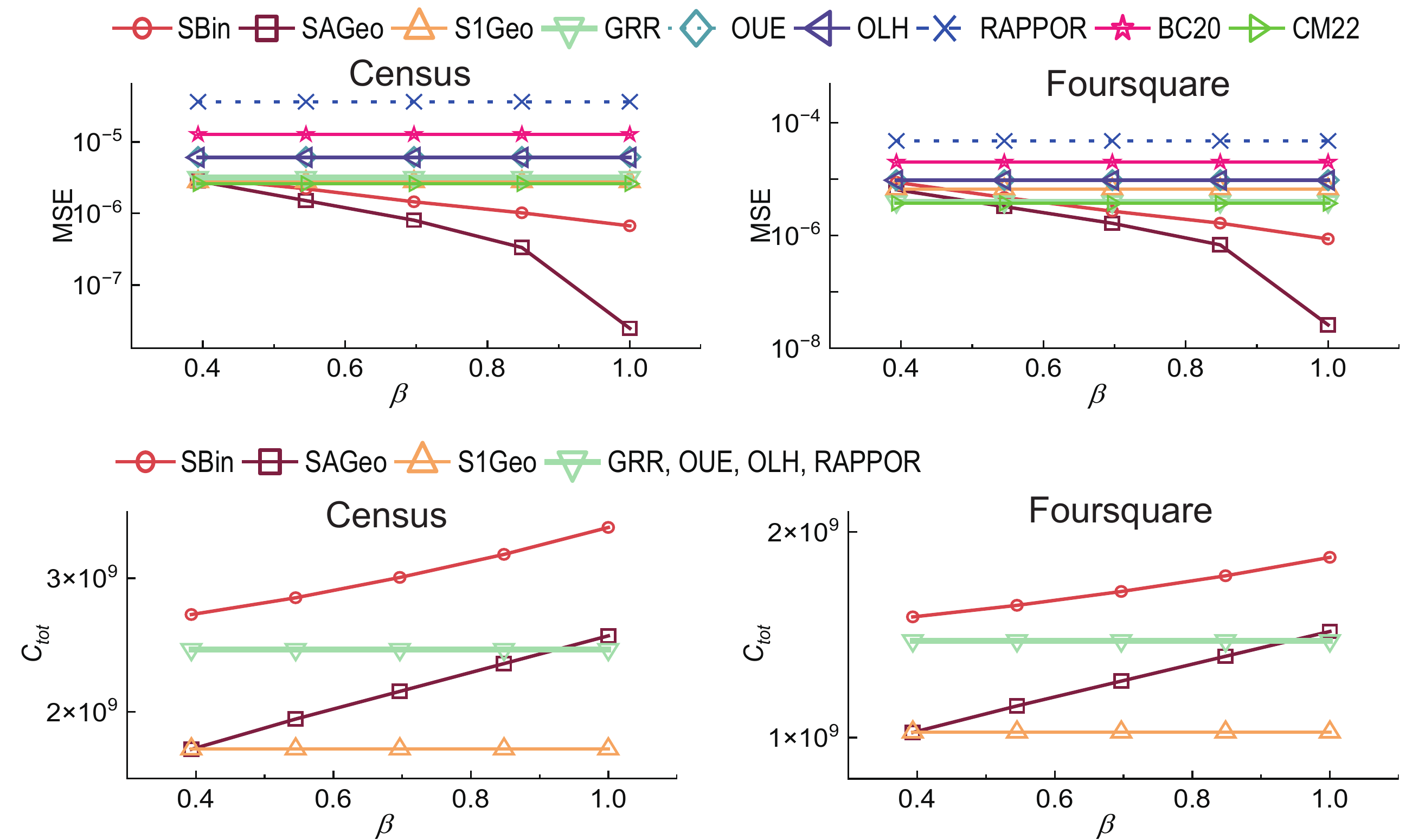}
\vspace{-7mm}
\caption{MSE/$C_{tot}$ vs. $\beta$  ($\epsilon = 1$, $\delta = 10^{-12}$, \colorB{2048-bit RSA}).}
\label{fig:beta}
\end{figure}

Figure~\ref{fig:communication_cost} shows the results. 
\colorB{Here, we used the 2048-bit RSA for encryption. 
If we use ECIES with 256-bit security in the Bouncy Castle library~\cite{bouncy}, the size of encrypted data is $712$ bits. 
Thus, we can reduce $C_{tot}$ by $0.35$ ($=712/2048)$ in this case.} 
Figure~\ref{fig:communication_cost} shows that 
when 
$\epsilon$ is small, $C_{tot}$ of \SBin{} is large. 
$C_{tot}$ of \BC{} and \CM{} is also large because they add a lot of dummy values. 
\SAGeo{} is much more efficient than these protocols and achieves $C_{tot}$ close to the existing single-message protocols. 
Figure~\ref{fig:communication_cost} also shows that \SOGeo{} provides the smallest $C_{tot}$. 

\smallskip{}
\noindent{\textbf{Changing $\beta$.}}~~As described in Section~\ref{sub:theoretical_framework}, the trade-off between the $l_2$ loss and $C_{tot}$ in our protocols can be controlled by changing $\beta$. 
Therefore, we evaluated the MSE and $C_{tot}$ in \SBin{} and \SAGeo{} while changing the value of $\beta$ from $1 - e^{-\frac{\epsilon}{2}}$ to $1$. 
Here, we used two large-scale datasets, i.e., the Census and Foursquare datasets. 

Figure~\ref{fig:beta} shows the results. 
Here, we omit the communication costs of \BC{} and \CM{}, as they are much larger than that of \SBin{}. 
We observe that when $\beta \leq 0.8$, \SAGeo{} provides a smaller $C_{tot}$ than the existing protocols. 
In addition, when $\beta = 0.8$, \SAGeo{} provides a much smaller MSE than the existing protocols. 
This means that \SAGeo{} can outperform the existing protocols in terms of both the MSE and $C_{tot}$. 

\smallskip{}
\noindent{\textbf{Robustness against Data Poisoning.}}~~Next, we evaluated the robustness of each protocol against local data poisoning attacks. 
Specifically, we set $\delta = 10^{-12}$ and the fraction $\lambda$ of fake users to $0.1$. 
Then, we evaluated 
the overall gain $G_{MGA}$ 
while changing $\epsilon$. 
For target items $\calT$, we set $|\calT| = 10$ (resp.~$2$) in the Census and Foursquare (resp.~Localization and RFID) datasets and randomly selected $|\calT|$ target items from $d$ items. 
For the existing protocols, we considered four cases: 
(i) neither the significance threshold~\cite{Erlingsson_CCS14} nor defense is used; (ii) only the significance threshold is used; (iii) the significance threshold and normalization technique~\cite{Cao_USENIX21} are used; (iv) the significance threshold and LDPRecover~\cite{Sun_ICDE24} are used. 
\colorB{For lack of space, we report the results in the Census dataset for (iii) and (iv).} 
For the existing multi-message protocols, we used the attack algorithms in Appendix~\ref{sec:existing_mechanisms} (we evaluated lower bounds on $G_{MGA}$ for \CM{} and \LWY{}). 

\begin{figure}[t]
\centering
\includegraphics[keepaspectratio=true,width=0.99\linewidth]{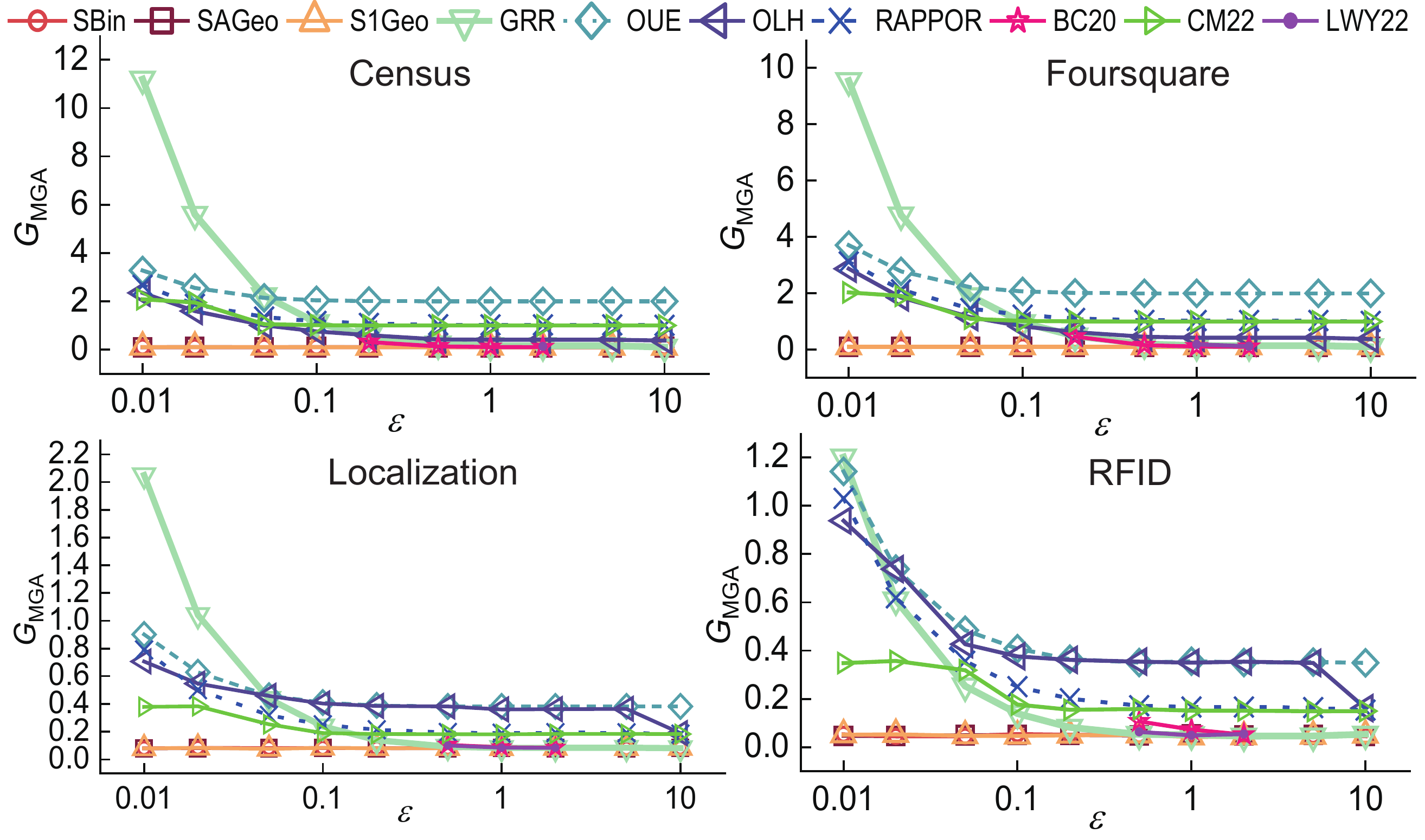}
\vspace{-7mm}
\caption{$G_{MGA}$ vs. $\epsilon$ when neither the significance threshold~\cite{Erlingsson_CCS14} nor defense is used in the existing protocols 
($\delta=10^{-12}$, $\beta=1$, $\lambda=0.1$).}
\label{fig:mga_epsilon_nothresholding}
\vspace{2mm}
\centering
\includegraphics[keepaspectratio=true,width=0.99\linewidth]{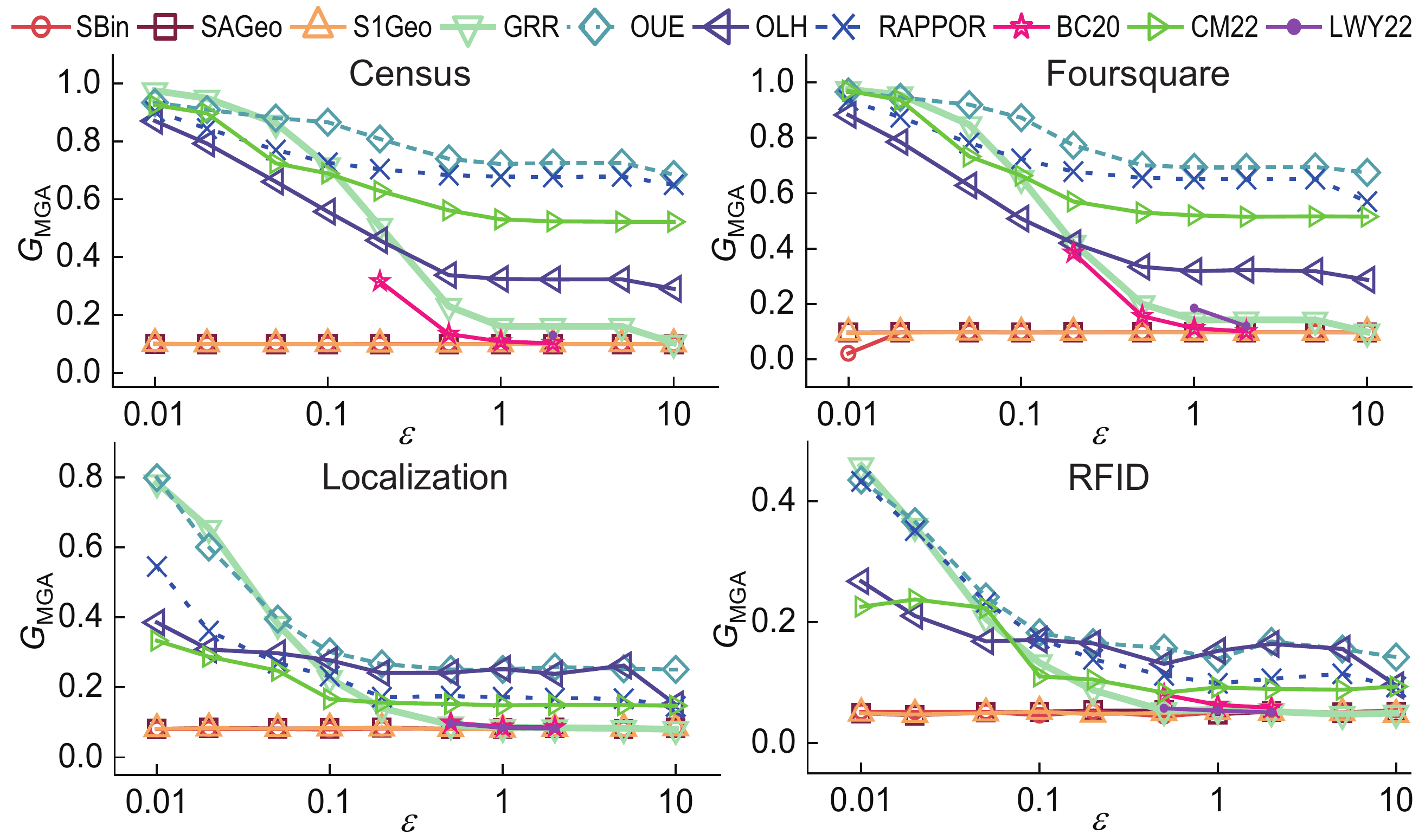}
\vspace{-3mm}
\caption{$G_{MGA}$ vs. $\epsilon$ when only the significance threshold~\cite{Erlingsson_CCS14} is used in the existing protocols ($\delta=10^{-12}$, $\beta=1$, $\lambda=0.1$).}
\label{fig:mga_epsilon}
\vspace{2mm}
\centering
\includegraphics[keepaspectratio=true,width=0.99\linewidth]{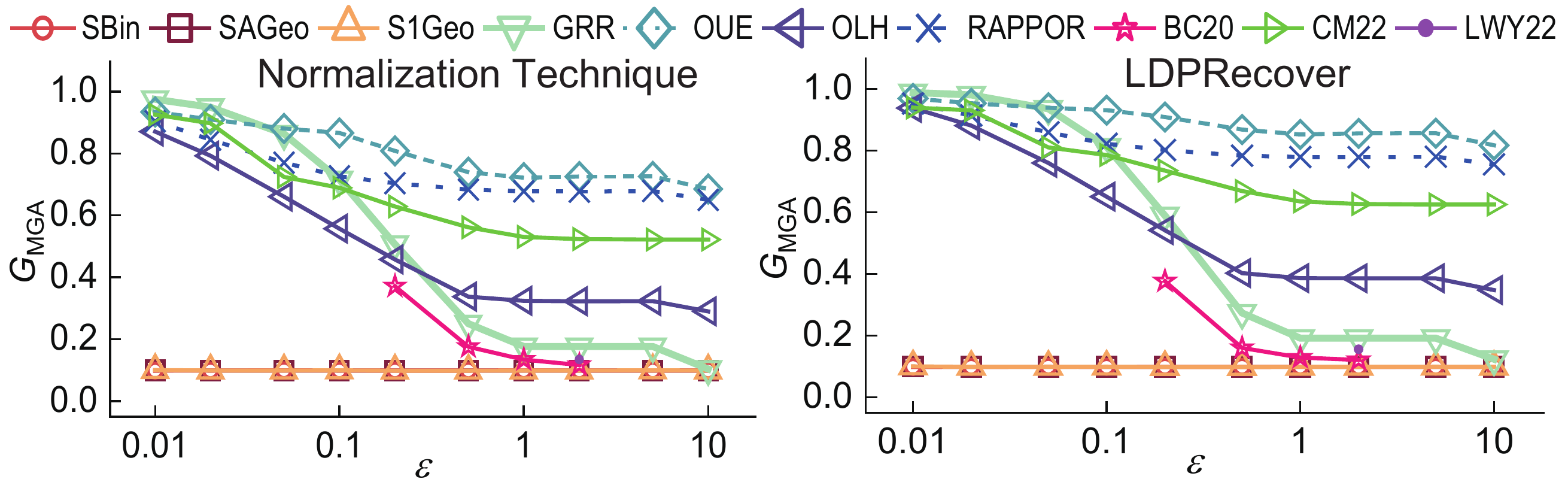}
\vspace{-3mm}
\caption{$G_{MGA}$ vs. $\epsilon$ when 
\colorB{the normalization technique~\cite{Cao_USENIX21} or LDPRecover~\cite{Sun_ICDE24} is used, along with the significance threshold,} 
in the existing protocols 
(\colorB{Census,} $\delta=10^{-12}$, $\beta=1$, $\lambda=0.1$).}
\label{fig:mga_counter_epsilon}
\end{figure}

Figures~\ref{fig:mga_epsilon_nothresholding} to~\ref{fig:mga_counter_epsilon}
show the results. 
Figure~\ref{fig:mga_epsilon} shows that the significance threshold greatly reduces the overall gain $G_{MGA}$. 
\colorB{However, by comparing Figures~\ref{fig:mga_epsilon} and~\ref{fig:mga_counter_epsilon}, we can see that the normalization technique and LDPRecover do not reduce $G_{MGA}$; we also confirmed that they are ineffective in the other datasets.} 
There are two reasons for this. 
First, both the significance threshold and the normalization technique normalize the estimates. 
In other words, they are similar techniques. 
Second, LDPRecover is designed to recover more accurate estimates from heavily poisoned estimates that are not normalized. 
Thus, its effectiveness is limited when the estimates are normalized. 
We also confirmed that when the significance threshold is not used, both the significance threshold and LDPRecover greatly reduce $G_{MGA}$. 
However, they still suffer from large values of $G_{MGA}$. 

Figures~\ref{fig:mga_epsilon_nothresholding} to~\ref{fig:mga_counter_epsilon}
also show that $G_{MGA}$ of the existing protocols increases with decrease in $\epsilon$. 
In contrast, all of our protocols do not suffer from the increase of $G_{MGA}$, which is consistent with our theoretical results in Table~\ref{tab:performance}.

\smallskip{}
\noindent{\textbf{Robustness against Collusion with Users.}}~~We also 
evaluated the robustness against collusion 
with $|\Omega|$ users. 
Specifically, 
we refer to the privacy budget $\epsilon$ when no collusion occurs as a \textit{target $\epsilon$}. 
We set the target $\epsilon$ to $0.1$ (we also varied the target $\epsilon$ in Appendix~\ref{sec:additional_exp}) and $\delta = 10^{-12}$. 
Then, we varied the ratio $|\Omega|/n$ of colluding users from $0$ to $1-\frac{1}{n}$ and evaluated an actual $\epsilon$ after the collusion. 

\begin{figure}[t]
\centering
\includegraphics[keepaspectratio=true,width=0.99\linewidth]{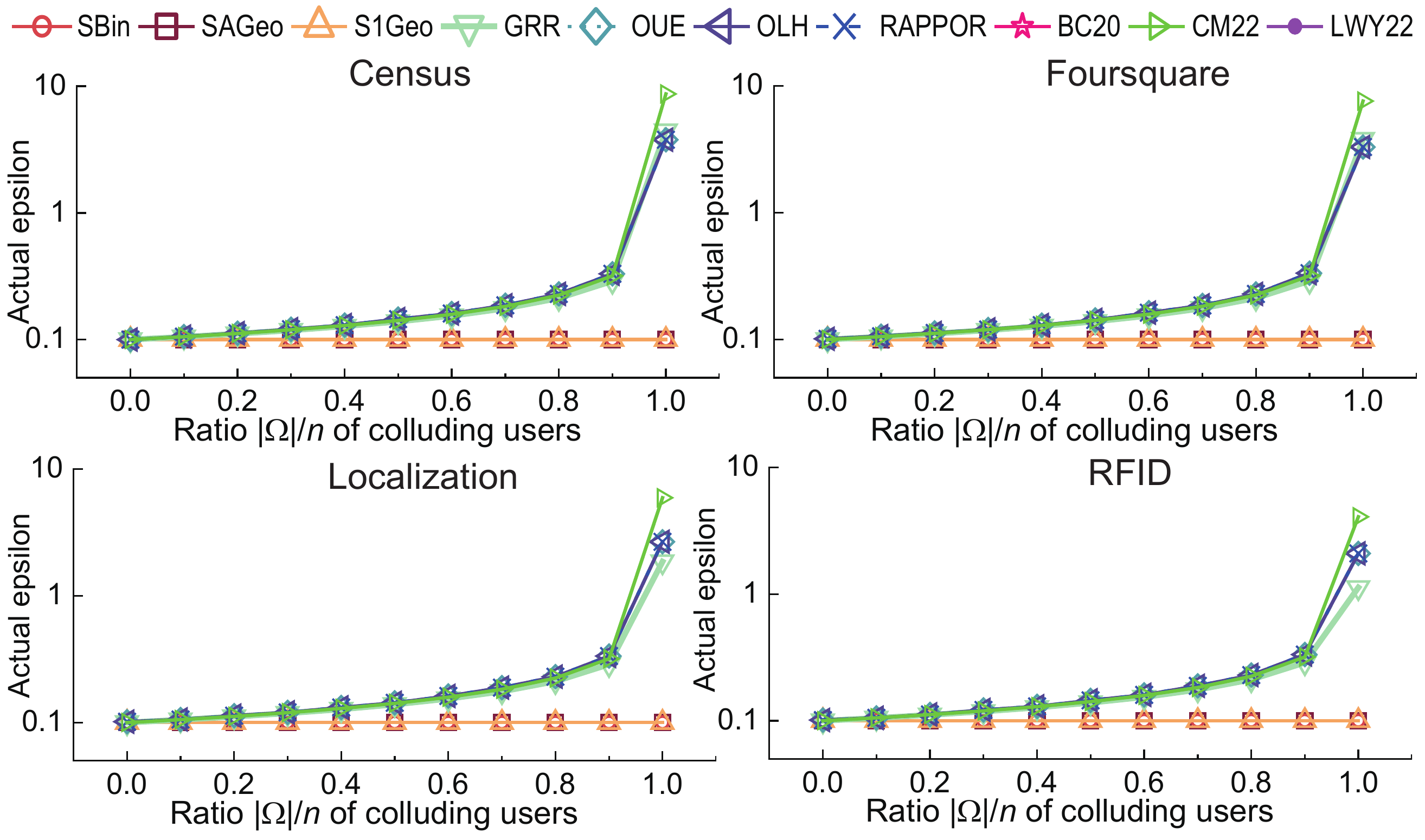}
\vspace{-3mm}
\caption{Actual $\epsilon$ vs. $|\Omega|/n$ 
(target $\epsilon=0.1$, $\delta = 10^{-12}$, $\beta=1$).}
\label{fig:collusion_epsilon01}
\vspace{2mm}
\centering
\includegraphics[keepaspectratio=true,width=0.99\linewidth]{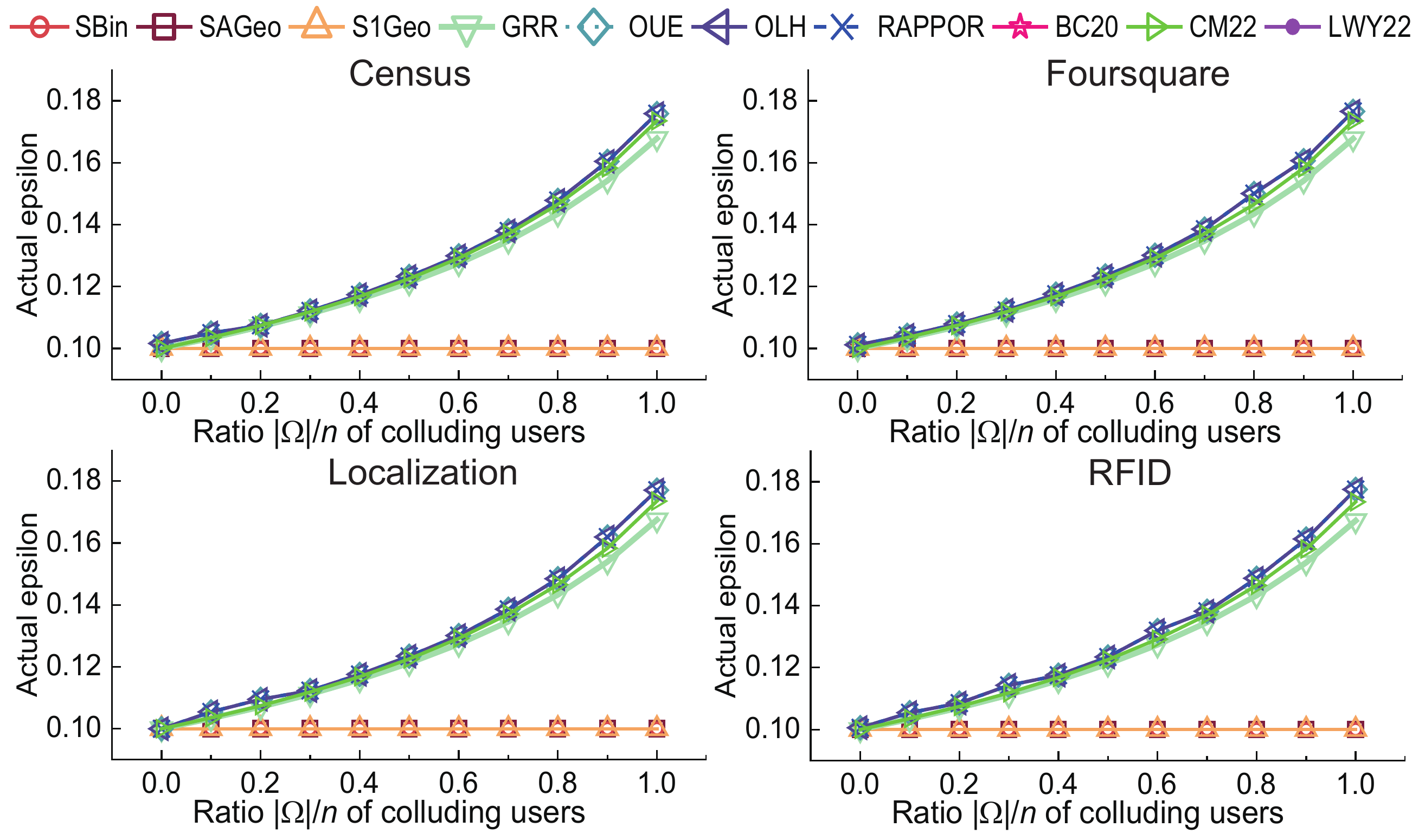}
\vspace{-3mm}
\caption{Actual $\epsilon$ vs. $|\Omega|/n$ when the defense in~\cite{Wang_PVLDB20} is used in the existing protocols (target $\epsilon=0.1$, $\delta = 10^{-12}$, $\beta=1$).}
\label{fig:collusion_counter_epsilon01}
\end{figure}

Figures~\ref{fig:collusion_epsilon01} and \ref{fig:collusion_counter_epsilon01} show the results when no defense and the defense in~\cite{Wang_PVLDB20} are used in the existing protocols, respectively.
The actual $\epsilon$ of the existing protocols increases with increase in 
$|\Omega|/n$. 
Although the defense in~\cite{Wang_PVLDB20} greatly reduces the actual $\epsilon$, 
the actual $\epsilon$ is still increased by the collusion attacks 
\colorB{(the defense in~\cite{Wang_PVLDB20} also increases the MSE by $2.25$ times in our experiments).} 
In contrast, all of our protocols 
always keep the actual $\epsilon$ at the target $\epsilon$ ($=0.1$), demonstrating the robustness of our protocols. 

\smallskip{}
\noindent{\textbf{\colorB{Run Time.}}}~~\colorB{Finally, we measured the run time of the shuffler and the data collector in the Census dataset using a workstation with Intel Xeon W-2295 (3.00 GHz, 18Core) and 128 GB main memory. 
For encryption schemes, we used the 2048-bit RSA or ECIES with 256-bit security in the Bouncy Castle library~\cite{bouncy}. 
We also measured the run time when no encryption scheme was used.}

\begin{figure}[t]
\centering
\includegraphics[keepaspectratio=true,width=0.99\linewidth]{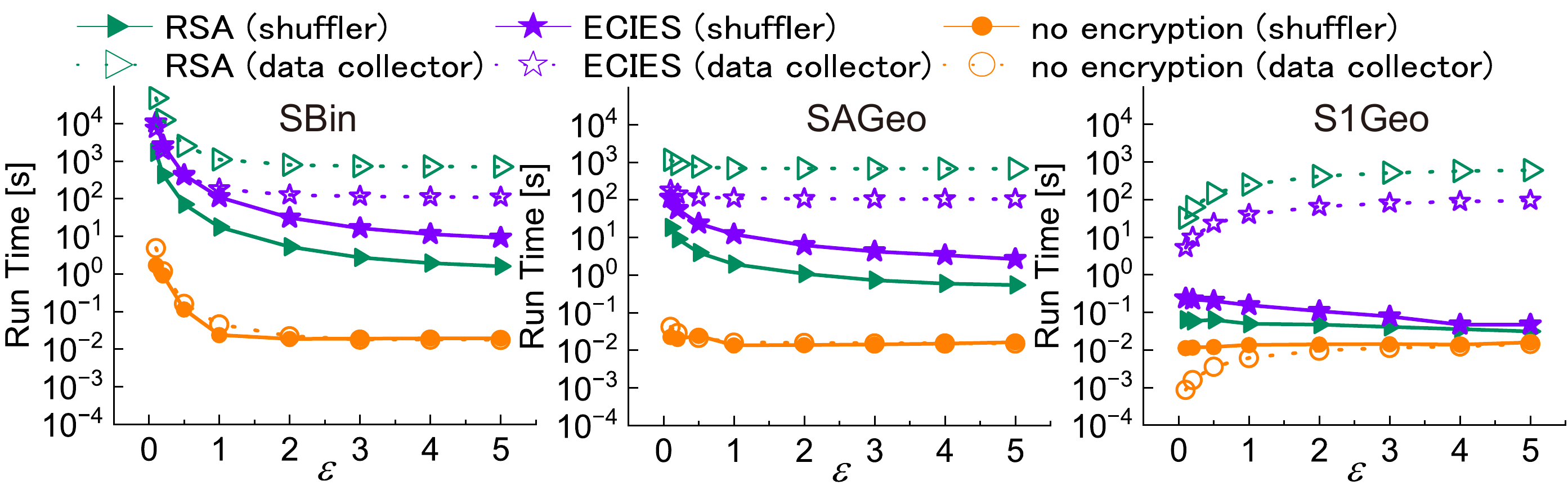}
\vspace{-3mm}
\caption{\colorB{Run time of the shuffler and the data collector (Census, $\delta=10^{-12}$, $\beta=1$).}}
\label{fig:run_time}
\end{figure}

\colorB{Figure~\ref{fig:run_time} shows the results. 
We observe that the processing time for encryption and decryption is dominant. 
Since our protocols encrypt or decrypt dummy values for each item, 
the run time is linear in $d$. 
Thus, when $d$ is very large, the run time can be large\footnote{\colorB{Moreover, the shuffler needs to wait for responses from all users before shuffling in real systems. This applies to all existing shuffle protocols.}}. 
The same applies to the communication cost. 
Improving the efficiency of our protocols is left for future work.}

\section{Conclusions}
\label{sec:conclusions}
We proposed a generalized framework for local-noise-free protocols in the augmented shuffle model and rigorously analyzed its privacy, robustness, utility, and communication cost. 
Then, we proposed three concrete protocols in our framework. 
\colorB{Through theoretical analysis and comprehensive experiments, we showed that none of the existing protocols are sufficiently robust to data poisoning and collusion with users. 
Then, we showed that our protocols are robust to them. 
We also showed that \SAGeo{} achieves the highest accuracy and that \SOGeo{} provides pure $\epsilon$-DP and 
a low communication cost.} 

\ifCLASSOPTIONcompsoc
  \section*{\colorB{Acknowledgments}}
\else
  \section*{Acknowledgment}
\fi

\colorB{This study was supported in part by JSPS KAKENHI 22H00521, 24H00714, 24K20775, JST AIP Acceleration Research JPMJCR22U5, and JST CREST 
JPMJCR22M1.}

\bibliographystyle{IEEEtran}
\bibliography{main}

\appendices

\section{Results of Additional Experiments}
\label{sec:additional_exp}

Figure~\ref{fig:mga_lambda} shows 
the relationship between $G_{MGA}$ and the fraction $\lambda$ of fake users. 
The existing protocols suffers from large $G_{MGA}$ when $\lambda$ is large. 
Figure~\ref{fig:collusion_epsilon} shows the relationship between the actual $\epsilon$ and the target $\epsilon$ ($|\Omega| / n = 0.1$, $\delta = 10^{-12}$). 
The existing protocols rapidly increase the actual $\epsilon$ with increase in $|\Omega| / n$. 
The existing pure shuffle protocols using LDP mechanisms 
converge to the local privacy budget $e_L$. 
For example, when the target $\epsilon$ is $1$ in 
the Census dataset ($n=602156$), 
the actual $\epsilon$ in these existing protocols is as high as $7.2$. 
In contrast, our protocols 
always keep the actual $\epsilon$ at the same value of the target $\epsilon$. 

\begin{figure}[t]
\centering
\includegraphics[keepaspectratio=true,width=0.99\linewidth]{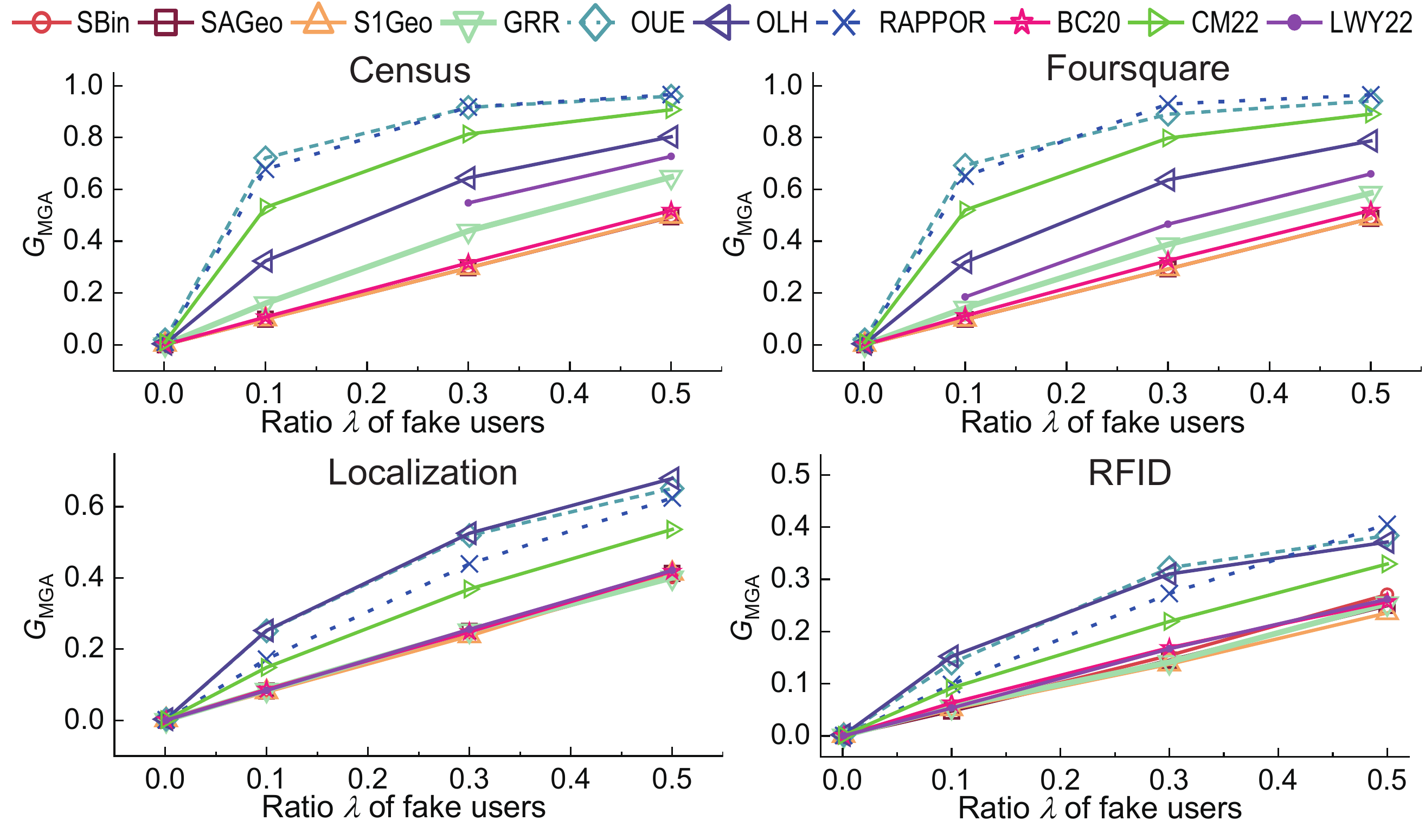}
\vspace{-3mm}
\caption{$G_{MGA}$ vs. $\lambda$ when only the significance threshold~\cite{Erlingsson_CCS14} is used in the existing protocols ($\epsilon=1$, $\delta=10^{-12}$, $\beta=1$).}
\label{fig:mga_lambda}
\vspace{4mm}
\centering
\includegraphics[keepaspectratio=true,width=0.99\linewidth]{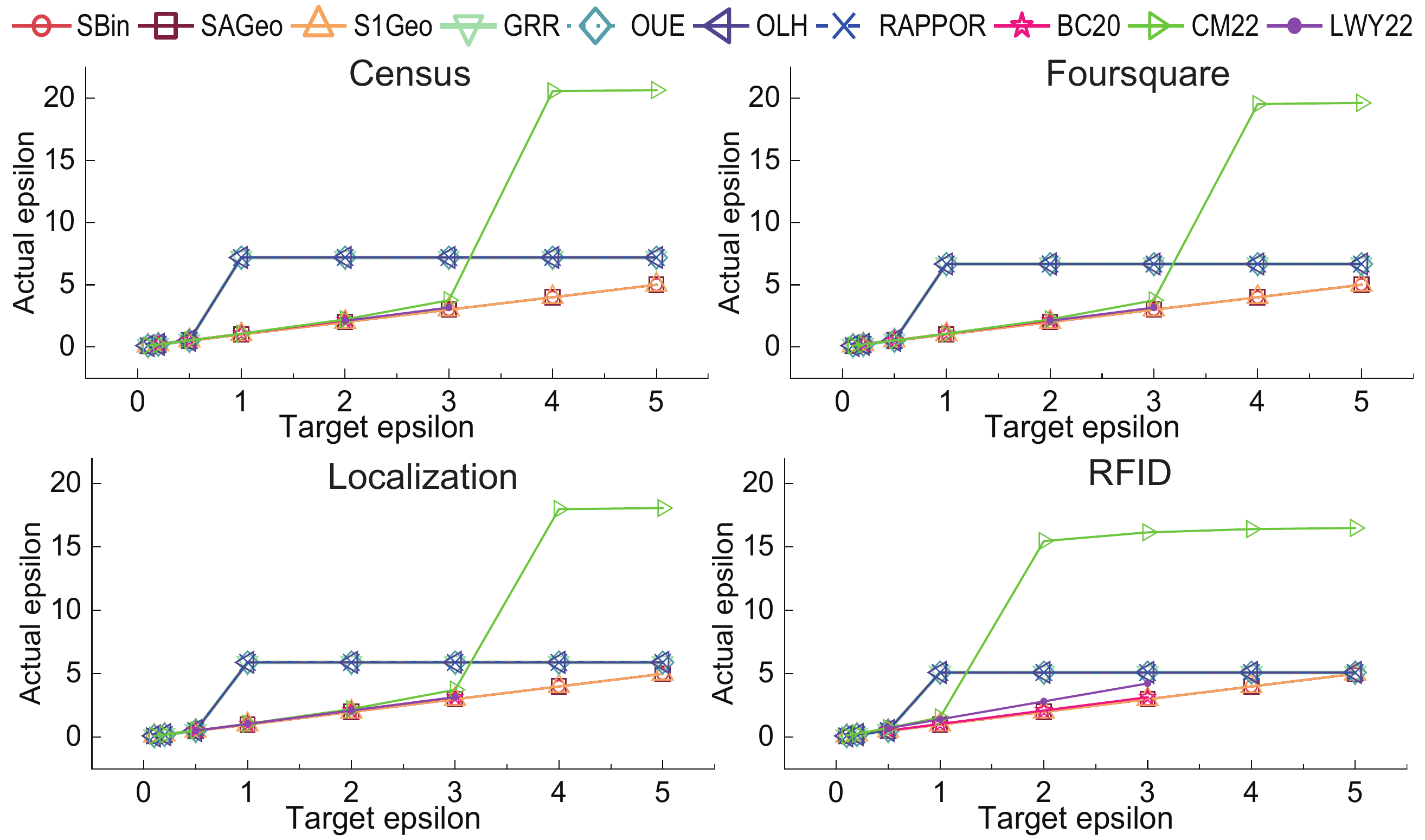}
\vspace{-3mm}
\caption{Actual $\epsilon$ vs. target $\epsilon$ 
($|\Omega| / n = 0.1$, $\delta = 10^{-12}$, $\beta=1$).}
\label{fig:collusion_epsilon}
\end{figure}

\section{Proofs of Statements in Sections~\ref{sec:preliminaries} and \ref{sec:shuffle}}
\label{sec:proofs}

\subsection{Proof of Proposition~\ref{prop:shuffle_collusion}}
\label{sub:prop_shuffle_collusion}
Assume that the data collector obtains 
$\calM_S^*(D) = (\calM_S(D), \colorB{(\calR(x_i))_{i \in \Omega}})$. 
The tuple of shuffled values $\calM_S(D) = 
(\calR(x_{\pi(1)}), \ldots, \calR(x_{\pi(n)}))$ is 
equivalent to the histogram of noisy values $\{\calR(x_i) | i \in [n] \}$~\cite{Balle_CRYPTO19}. 
After obtaining 
$\colorB{(\calR(x_i))_{i \in \Omega}}$, 
the data collector can subtract $\{\calR(x_i) | i \in \Omega \}$ from the histogram of $\{\calR(x_i) | i \in [n] \}$ to obtain the histogram of $\{\calR(x_i) | i \notin \Omega \}$. 
The data collector cannot obtain more detailed information about $\{\calR(x_i) | i \notin \Omega \}$, as 
$\calR(x_{\pi(1)}), \ldots, \calR(x_{\pi(n)})$ are randomly shuffled. 

Thus, the information about $\{\calR(x_i) | i \notin \Omega \}$ available to the data collector is equivalent to the histogram of $\{\calR(x_i) | i \notin \Omega \}$, which is obtained by applying 
$\calM_S$ to $n - |\Omega|$ input values $\{x_i | i \notin \Omega \}$. 
Thus, $\calM_S^*$ provides $(\epsilon^*, \delta)$-DP for $\Omega$-neighboring databases, where $\epsilon^* = g(n - |\Omega|, \delta)$. 
\qed

\subsection{\colorB{Proof of Lemma~\ref{lem:equivalence_add_noise}}}
\label{sub:lem_equivalence_add_noise}
\colorB{The input data's histogram  
is expressed as $\sum_{i=1}^n \bmx_i$. 
The random sampling operation changes the histogram from $\sum_{i=1}^n \bmx_i$ to $\sum_{i=1}^n a_i \bmx_i$. 
The dummy data addition operation adds non-negative discrete noise $\bmz$ to the histogram. 
The shuffling does not change the histogram. 
Thus, the histogram calculated by the data collector is: $\bmth = (\sum_{i=1}^n a_i \bmx_i) + \bmz$. \qed}

\subsection{\colorB{Proof of Lemma~\ref{lem:equivalence_shuffle}}}
\label{sub:lem_equivalence_shuffle}
\colorB{Since the output data $(\tx_{\pi(1)}, \cdots, \tx_{\pi(\tn^*)})$ of $\calG_{\calD,\beta}$ (Algorithm~\ref{alg:framework}) are randomly shuffled, they are equivalent to their histogram $\bmth$. This observation is pointed out in~\cite{Balle_CRYPTO19} and can be explained as follows. 
The histogram $\bmth$ can be obtained from any possible permutation of input and dummy values; e.g.,  $\bmth = (1,1,2)$ can be obtained from $(1,2,3,3)$, $(1,3,2,3)$, $(1,3,3,2)$, $\ldots$, or $(3,3,2,1)$. 
Since the permutation is kept secret, the shuffled values do not leak any information other than $\bmth$. 
Thus, the shuffled input and dummy values are equivalent to $\bmth$, meaning that the privacy of $\calG_{\calD,\beta}$ is equivalent to that of $\calG'_{\calD,\beta}$. \qed}

\subsection{Proof of Theorem~\ref{thm:generalization}}
\label{sub:thm_generalization}
Let $\Omega=[n]\setminus\{1\}=\{2,3,\ldots,n\}$.
It is sufficient to show that 
$\calS_{\calD,\beta}^*$ provides 
$(\epsilon,\delta)$-DP 
for $\Omega$-neighboring databases, \colorB{as it considers the worst-case scenario where the data collector colludes with $n-1$ other users}.  
Let $D=(x_1,\ldots,x_n)$ and $D'=(x_1',\ldots,x_n')$ be $\Omega$-neighboring databases.
That is, $x_k=x_k'$ for 
any $k \in \{2,3,\ldots,n\}$ 
and $x_1=j\neq x_1'=j'$.
Let $\bmx_i\in\{0,1\}^d$ (resp.\ $\bmx_i'\in\{0,1\}^d$) be 
a 
binary vector whose entries are $1$ only at position $x_i$ (resp.\ $x_i'$).

\colorB{By Lemma~\ref{lem:equivalence_add_noise}, the histogram calculated by the data collector is $\bmth=\sum_{i=1}^n a_i\bmx_i+\bmz$ (resp.~$\bmth'=\sum_{i=1}^n a_i\bmx_i'+\bmz$) if $D$ (resp.~$D'$) is input. 
By Lemma~\ref{lem:equivalence_shuffle}, the output of the augmented shuffler part $\calG_{\calD,\beta}$ is equivalent to the histogram $\bmth$ (or $\bmth'$).
Thus, it is sufficient to show that} 
\begin{align*}
    \colorB{\Pr[(\bmth,(x_i)_{i\in\Omega})\in S]\leq e^{\epsilon} \Pr[(\bmth',(x_i')_{i\in\Omega})\in S] + \delta}
\end{align*}
for any set $S$ of outcomes.
It is important to 
note that $D$ and $D'$ are fixed. 
In other words, the randomness comes from only $a_1, \ldots, a_n$ and $\bmz$. 
Since $x_i=x_i'$ for all $i\in\Omega$, 
the above is equivalent to showing that
\begin{align*}
    \colorB{\Pr[\bmth \in S']\leq e^{\epsilon} \Pr[\bmth' \in S'] + \delta}
\end{align*}
for any set 
$S' \subseteq \nnints^d$. 
Since the distributions of $\sum_{i\in\Omega} a_i\bmx_i$ and $\sum_{i\in\Omega} a_i\bmx_i'$ are identical, this 
is equivalent to showing 
\begin{align}\label{eq:general-1}
    \Pr[a_1\bmx_1+\bmz\in S']\leq e^{\epsilon} \Pr[a_1\bmx_1'+\bmz\in S'] + \delta
\end{align}
for any set 
$S' \subseteq \nnints^d$. 
If $k \in [d]$ is not $j$ or $j'$, the $k$-th entries of both $a_1\bmx_1+\bmz$ and $a_1\bmx_1'+\bmz$ are $z_k$ and hence their distributions are identical.
Note that the $j$-th entry of $a_1\bmx_1+\bmz$ is $a_1+z_j$ and hence follows the same distribution as $\calM_{\calD,\beta}(1)$.
On the other hand, the $j$-th entry of $a_1\bmx_1'+\bmz$ is $z_j$ and hence follows the same distribution as $\calM_{\calD,\beta}(0)$.
Similarly, the $j'$-th entry of $a_1\bmx_1'+\bmz$ follows the same distribution as $\calM_{\calD,\beta}(1)$ and the $j'$-th entry of $a_1\bmx_1+\bmz$ follows the same distribution as $\calM_{\calD,\beta}(0)$.
Therefore, 
by group privacy~\cite{DP}, the inequality~(\ref{eq:general-1}) holds. 
\qed

\subsection{Proof of Theorem~\ref{thm:generalization_poisoning}}
\label{sub:thm_generalization_poisoning}
Since $\calS_{\calD,\beta}$ does not add any local noise, a message from each fake user takes a value from $1$ to $d$; i.e., $\bmy' = (y'_1, \cdots, y_{n'}) \in [d]^{n'}$. 
Let $f'_i \in [0,1]$ be the relative frequency after data poisoning; i.e., $f'_i = \frac{1}{n+n'} (\sum_{j=1}^n \mathone_{x_j = i} + \sum_{j=1}^{n'} \mathone_{y'_j = i})$. 
$\calS_{\calD,\beta}$ outputs an unbiased estimate; i.e., for any $i \in [d]$, 
we have
$\E[\hf_i] = f_i$ and $\E[\hf'_i] = f'_i$. 
Thus, the overall gain $G(\bmy')$ can be written as follows: 
\begin{align}
    \textstyle{G(\bmy')
    = \sum_{i \in \calT} (\E[\hf'_i] - \E[\hf_i]) 
    = \sum_{i \in \calT} (f'_i - f_i).}
    \label{eq:Gbmy}
\end{align}
This is maximized when all the fake users send an item in $\calT$. 
In this case, we have 
\begin{align}
\textstyle{(n+n') \left(\sum_{i \in \calT} f'_i \right) = n \left(\sum_{i \in \calT} f_i \right) + n'.}
\label{eq:nnpfip}
\end{align}
By (\ref{eq:Gbmy}) and (\ref{eq:nnpfip}), 
$\calS_{\calD,\beta}$ provides the following guarantee: 
\begin{align*}
    \GMGA 
    = \textstyle{\frac{n}{n+n'} \left(\sum_{i \in \calT} f_i \right) + \frac{n'}{n+n'} - \sum_{i \in \calT} f_i} 
    = \lambda (1 - f_T).
\end{align*}
where $\lambda = \frac{n'}{n+n'}$ and $f_T = \sum_{i \in \calT} f_i$. 
\qed

\subsection{Proof of Theorem~\ref{thm:generalization_utility}}
\label{sub:thm_generalization_utility}
\colorB{Since $\hf_i = \frac{1}{n\beta}(\tih_i - \mu)$, we have 
$\E[\hf_i] 
= \frac{1}{n\beta}(\E[\tih_i] - \mu) 
= \frac{1}{n\beta}(\beta n f_i+\mu - \mu) 
= f_i$, 
which means that $\hf_i$ is an unbiased estimate of $f_i$.}
Next, 
we analyze the expected $l_2$ loss. 
By the law of total variance, it holds that
\begin{align*}
    \V[\tih_i] 
    &= \E[\V[\tih_i | z_i]] + \V[\E[\tih_i | z_i]] \\
    &= \E[nf_i \beta(1-\beta)] + \V[nf_i \beta + z_i] \\
    &= nf_i \beta(1-\beta) + \V[z_i] 
    = nf_i \beta(1-\beta)+\sigma^2.
\end{align*}
Since $\hf_i$ is unbiased and $\V[\hf_i] = \frac{\V[\tih_i]}{\beta^2 n^2}$, 
we have 
\begin{align*}
\textstyle{\E\left[ \sum_{i=1}^d (\hf_i - f_i)^2 \right] 
= \sum_{i=1}^d \V[\hf_i] 
= \frac{1-\beta}{\beta n}+\frac{\sigma^2d}{\beta^2 n^2}.}
\end{align*}
\qed

\subsection{Proof of Theorem~\ref{thm:SBin_DP}}
\label{sub:thm_SBin_DP}
Let $\delta_0 = 2 e^{-\frac{\eta^2 M}{2}}$. 
We first prove that $\calM_{\Bin,1}$ 
($\beta=1$) 
provides 
$(\epsilon_0,\delta_0)$-DP. 
Then, we prove 
Theorem~\ref{thm:SBin_DP}.

\smallskip{}
\noindent{\textbf{$(\epsilon_0,\delta_0)$-DP of $\calM_{\Bin,1}$.}}~~Let $x,x' \in \{0,1\}$. 
Let $o \in \mathrm{Range}(\calM_{\Bin,1})$. 
We show that the following inequalities hold with probability at least $1-\delta_0$:
\begin{align}
&e^{-\epsilon_0} \Pr[\calM_{\Bin,1}(x') = o] \nonumber\\
&\leq \Pr[\calM_{\Bin,1}(x) = o] 
\leq e^{\epsilon_0} \Pr[\calM_{\Bin,1}(x') = o],
\label{eq:double_DP_inequality_2}
\end{align}
which implies that $\calM_{\Bin,1}$ provides $(\epsilon_0,\delta_0)$-DP~\cite{Kasivisiwanathan_JPC14}.

First, assume that 
$x=1$ and $x'=0$. 
Let 
$z \in [0, M]$ be a value sampled from $B(M,\frac{1}{2})$ when running $\calM_{\Bin,1}(x)$. 
Let $\bar{z} \in [-\frac{M}{2}, \frac{M}{2}]$ be a value such that $z = \frac{M}{2} + \bar{z}$. 
Let $b(z;M,\frac{1}{2})$ be the probability of $z$ successes in $B(M,\frac{1}{2})$, i.e., 
$b(z;M,\frac{1}{2}) = b(\frac{M}{2}+\bar{z};M,\frac{1}{2}) = \binom{M}{\frac{M}{2}+\bar{z}} (\frac{1}{2})^M$. 
When $\bar{z} \leq \frac{M}{2}-1$, we have
\begin{align}
\textstyle{\frac{\Pr[\calM_{\Bin,1}(x) = o]}{\Pr[\calM_{\Bin,1}(x') = o]}} 
= \textstyle{\frac{b(\frac{M}{2}+\bar{z};M,\frac{1}{2})}{b(\frac{M}{2}+\bar{z}+1;M,\frac{1}{2})}} 
= \textstyle{\frac{\frac{M}{2}+\bar{z}+1}{\frac{M}{2}-\bar{z}}.}
\label{eq:calM_2_LR}
\end{align}
Note that 
$\frac{\frac{M}{2}+\bar{z}+1}{\frac{M}{2}-\bar{z}} \leq e^{\epsilon_0} \iff \bar{z} \leq \frac{M\eta}{2}$, 
where $\eta = \frac{e^{\epsilon_0} - 1}{e^{\epsilon_0} + 1} - \frac{2}{M(e^{\epsilon_0} + 1)}$. 
Thus, the second inequality in (\ref{eq:double_DP_inequality_2}) does not hold if and only if $\bar{z} > \frac{M \eta}{2}$. 
In addition, 
since $\epsilon_0 \geq \log(\frac{2}{M}+1)$, we have $\eta = \frac{M(e^{\epsilon_0} - 1) - 2}{M(e^{\epsilon_0} + 1)} \geq 0$. 
Thus, 
the probability that $\bar{z}$ is larger than $\frac{M \eta}{2}$ can be upper bounded as follows: 
\begin{align}
    &\textstyle{\Pr\left(\bar{z} > \frac{M \eta}{2}\right) = \Pr \left(z > \frac{M}{2}(1 + \eta)\right) \leq e^{-\frac{\eta^2 M}{2}}} \nonumber\\
    & \text{(by the multiplicative Chernoff bound for $\textstyle{B(M,\frac{1}{2})}$~\cite{probability_computing})}. \nonumber
\end{align}
Similarly, we have 
$\frac{\frac{M}{2}+\bar{z}+1}{\frac{M}{2}-\bar{z}} \geq e^{-\epsilon_0} \iff \bar{z} \geq - \frac{M\eta}{2} - 1$. 
This means that the first inequality in (\ref{eq:double_DP_inequality_2}) does not hold if and only if $\bar{z} < - \frac{M \eta}{2} - 1$. 
The probability that $\bar{z}$ is smaller than $- \frac{M \eta}{2} - 1$ can be upper bounded as follows: 
$\Pr(\bar{z} < - \frac{M \eta}{2} - 1 ) < \Pr(\bar{z} < - \frac{M \eta}{2} ) = \Pr(\bar{z} > \frac{M \eta}{2} ) \text{ (as $B(M,\frac{1}{2})$ is symmetric)} = \Pr (z > \frac{M}{2}(1 + \eta)) \leq e^{-\frac{\eta^2 M}{2}}$.
Thus, 
both the inequalities in (\ref{eq:double_DP_inequality_2}) hold with probability at least $1 - \delta_0$, where 
$\delta_0 = 2 e^{-\frac{\eta^2 M}{2}}$. 

Next, assume that 
$x=0$ and $x'=1$. 
When $\bar{z} \geq - \frac{M}{2}+1$, 
\begin{align}
\textstyle{\frac{\Pr[\calM_{\Bin,1}(x) = o]}{\Pr[\calM_{\Bin,1}(x') = o]} 
= \frac{b(\frac{M}{2}+\bar{z};M,\frac{1}{2})}{b(\frac{M}{2}+\bar{z}-1;M,\frac{1}{2})}} 
&= \textstyle{\frac{\frac{M}{2}-\bar{z}+1}{\frac{M}{2}+\bar{z}}.}
\label{eq:calM_2_LR_2}
\end{align}
(\ref{eq:calM_2_LR_2}) is equivalent to 
(\ref{eq:calM_2_LR}) when the sign of $\bar{z}$ is reversed. 
Since $B(M,\frac{1}{2})$ is symmetric, 
(\ref{eq:double_DP_inequality_2}) holds 
with probability at least $1 - \delta_0$. 
Thus, $\calM_{\Bin,1}$ provides $(\epsilon_0,\delta_0)$-DP. 

\smallskip{}
\noindent{\textbf{$(\frac{\epsilon}{2},\frac{\delta}{2})$-DP of $\calM_{\Bin,\beta}$.}}~~We use the following lemma: 
\begin{lemma}
\label{lem:amp_smpl}
Let $\epsilon_1 \in \nnreals$, $\delta_1 \in [0,1]$, and $\beta_1, \beta_2 \in (0,1]$. 
If $\calM_{\Bin,\beta_1}$ provides $(\epsilon_1,\delta_1)$-DP, then 
$\calM_{\Bin,\beta_2}$ provides $(\epsilon_2,\delta_2)$-DP, where 
$\epsilon_2 = \log(1 + \frac{\beta_2}{\beta_1}(e^{\epsilon_1} - 1))$ and 
$\delta_2 = \frac{\beta_2}{\beta_1}\delta_1$. 
\end{lemma}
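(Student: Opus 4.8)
The plan is to exploit the special structure of the binary input mechanism $\calM_{\Bin,\beta}(x) = ax + z$ with $a \sim \mathrm{Ber}(\beta)$ and $z \sim B(M,\frac12)$. The key observation is that sampling with probability $\beta_2$ can be decomposed as: first sample with probability $\beta_1$, and then, conditionally on having been retained, sample again with probability $\beta_2/\beta_1$ (this requires $\beta_2 \le \beta_1$; I would handle the case $\beta_2 > \beta_1$ separately, or more likely the intended application only needs $\beta_2 \le \beta_1$, so I would state the lemma for that regime or note the subsampling direction). Formally, $\mathrm{Ber}(\beta_2)$ is distributed as the product $a_1 a'$ where $a_1 \sim \mathrm{Ber}(\beta_1)$ and $a' \sim \mathrm{Ber}(\beta_2/\beta_1)$ independently. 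Hence $\calM_{\Bin,\beta_2}(x)$ can be realized as a post-processing-free ``further subsampling'' applied to the Bernoulli coin inside $\calM_{\Bin,\beta_1}(x)$.

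Concretely, I would write $\calM_{\Bin,\beta_2}(x) = a' \cdot (a_1 x) + z$ — but this is not literally a function of $\calM_{\Bin,\beta_1}(x) = a_1 x + z$, since $a'$ multiplies $a_1 x$ and not $a_1 x + z$. So the cleaner route is the standard privacy-amplification-by-subsampling argument adapted to this one-dimensional additive setting. For $x \in \{0,1\}$, the only randomness distinguishing $\calM_{\Bin,\beta}(0)$ from $\calM_{\Bin,\beta}(1)$ is whether the Bernoulli coin lands on $1$ and multiplies the input. Condition on the coin $a'$ of the ``extra'' subsampling step: with probability $1 - \beta_2/\beta_1$ the input is zeroed out and $\calM_{\Bin,\beta_2}(1)$ and $\calM_{\Bin,\beta_2}(0)$ have identical conditional distributions; with probability $\beta_2/\beta_1$ the mechanism behaves exactly like $\calM_{\Bin,\beta_1}$ restricted to the event that its coin was already ``allowed'' — here I need to be careful, and the precise bookkeeping is the one routine computation I would actually carry out.

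The clean way to finish is the classical subsampling lemma: if a mechanism $\calN$ is $(\epsilon_1,\delta_1)$-DP, then the mechanism that with probability $\gamma$ runs $\calN$ on the true input and with probability $1-\gamma$ runs $\calN$ on a fixed reference input (here, input $0$) is $(\log(1+\gamma(e^{\epsilon_1}-1)), \gamma\delta_1)$-DP. I would set $\calN = \calM_{\Bin,\beta_1}$, $\gamma = \beta_2/\beta_1$, and check that ``$\calM_{\Bin,\beta_1}$ run on input $0$'' equals ``$\calM_{\Bin,\beta_1}$ with its own coin forced to $0$'', which coincides in distribution with just outputting $z \sim B(M,\tfrac12)$ — and that this is exactly what $\calM_{\Bin,\beta_2}$ does on the $1-\gamma$ branch. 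Then for any event $S$ and any $x$,
\begin{align*}
\Pr[\calM_{\Bin,\beta_2}(x) \in S] &= \gamma \Pr[\calM_{\Bin,\beta_1}(x) \in S] + (1-\gamma)\Pr[z \in S]\\
&\le \gamma\big(e^{\epsilon_1}\Pr[\calM_{\Bin,\beta_1}(x')\in S] + \delta_1\big) + (1-\gamma)\Pr[z\in S],
\end{align*}
and the standard algebraic manipulation (using $\Pr[z\in S] \le \Pr[\calM_{\Bin,\beta_1}(x')\in S]/\,$? — actually one compares against $\Pr[\calM_{\Bin,\beta_2}(x')\in S]$ directly) yields the factor $1 + \gamma(e^{\epsilon_1}-1)$ in the exponent and $\gamma\delta_1$ in the additive term. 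The main obstacle I anticipate is making the decomposition $\mathrm{Ber}(\beta_2) \stackrel{d}{=} \mathrm{Ber}(\beta_1)\cdot\mathrm{Ber}(\beta_2/\beta_1)$ interact correctly with the additive noise $z$ — i.e., verifying that the ``reference'' branch of the subsampled mechanism is genuinely input-independent and equals the noise-only distribution — and ensuring the argument is stated for the regime $\beta_2 \le \beta_1$ where this subsampling decomposition is valid (which suffices for the paper's use, since $\calM_{\Bin,1}$ is proved DP first and then amplified down to $\calM_{\Bin,\beta}$ with $\beta \le 1$).
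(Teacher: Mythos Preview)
Your proposal is correct and takes essentially the same route as the paper. The paper's own ``proof'' is a one-line citation to the privacy-amplification-by-subsampling lemma (Theorem~1 in Li et al., AsiaCCS 2012); what you sketch is precisely an unfolded version of that lemma's proof in this concrete binary setting, resting on the decomposition $\mathrm{Ber}(\beta_2) \stackrel{d}{=} \mathrm{Ber}(\beta_1)\cdot\mathrm{Ber}(\beta_2/\beta_1)$ for $\beta_2 \le \beta_1$ --- which is the only regime the paper invokes (it takes $\beta_1 = 1$ and amplifies down to $\beta_2 = \beta$).
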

This is an amplification lemma (Theorem 1 in~\cite{Li_AsiaCCS12}) applied to the algorithm $\calM_{\Bin,\beta}$. 
By Lemma~\ref{lem:amp_smpl} with $\epsilon_1 = \epsilon_0$, $\delta_1 = \delta_0$, $\beta_1 = 1$, and $\beta_2 = \beta$, 
$\calM_{\Bin,\beta}$ provides $(\frac{\epsilon}{2},\frac{\delta}{2})$-DP, where $\epsilon = 2 \log (1 + \beta (e^{\epsilon_0} - 1))$ and $\delta = 2 \beta \delta_0$.

\qed

\subsection{Proof of Theorem~\ref{thm:SAGeo_DP}}
\label{sub:thm_SAGeo_DP}
Let $\epsilon_0 = \frac{\epsilon}{2}$ 
and $\delta_0 = \frac{\delta}{2}$. 
Let $o \in \mathrm{Range}(\calM_{\AGeo,\beta})$. 
Assume that $x=0$ and $x'=1$. 
Then, we have
\begin{align*}
&\Pr[\calM_{\AGeo,\beta}(x) = o] = 
\begin{cases}
\frac{1}{\kappa} q_l^{\nu - o}   &   \text{(if $o = 0, 1, \ldots, \nu-1$)}\\
\frac{1}{\kappa} q_r^{o - \nu}   &   \text{(if $o = \nu, \nu+1, \ldots $)}
\end{cases}\\
&\Pr[\calM_{\AGeo,\beta}(x') = o] \\
&= 
\begin{cases}
    (1-\beta)\frac{1}{\kappa} q_l^{\nu} & \text{($o=0$)}\\
    (1-\beta)\frac{1}{\kappa} q_l^{\nu-o} + \beta \frac{1}{\kappa} q_l^{\nu-o+1} & \text{($o=1, 2, \ldots, \nu$)}\\
    (1-\beta)\frac{1}{\kappa} q_r^{o-\nu} + \beta \frac{1}{\kappa} q_r^{o-\nu-1} & \text{($o=\nu,\nu+1,\ldots$)}.
\end{cases}
\end{align*}
Since $q_l = \frac{e^{-{\epsilon_0}} - 1 + \beta}{\beta}$ and $q_r = \frac{\beta}{e^{\epsilon_0} - 1 + \beta}$, we have 
\begin{align*}
\textstyle{\frac{\Pr[\calM_{\AGeo,\beta}(x) = o]}{\Pr[\calM_{\AGeo,\beta}(x') = o]}} 
= 
\begin{cases}
    1 - \beta & \text{($o=0$)}\\
    e^{-\epsilon_0} & \text{($o=1, \ldots, \nu$)}\\
    e^{\epsilon_0} & \text{($o=\nu+1, \nu+2, \ldots$)}.
\end{cases}
\end{align*}
If $\beta = 1 - e^{-\epsilon_0}$, then $1 - \beta = e^{-\epsilon_0}$. 
Thus, 
$\frac{\Pr[\calM_{\AGeo,\beta}(x) = o]}{\Pr[\calM_{\AGeo,\beta}(x') = o]}$ is always $e^{\epsilon_0}$ or $e^{-\epsilon_0}$ (note that this also holds when $x=1$ and $x'=0$). 
This means that $\calM_{\AGeo,\beta}$ provides $\epsilon_0$-DP ($\delta_0 = 0$). 
If $\beta > 1 - e^{-\epsilon_0}$, 
we have the following equations for $o=0$: 
\begin{align*}
&\Pr[\calM_{\AGeo,\beta}(x) = 0] - e^{\epsilon_0} \Pr[\calM_{\AGeo,\beta}(x') = 0] \\
&= \textstyle{\frac{1}{\kappa} q_l^{\nu} - e^{\epsilon_0} (1-\beta)\frac{1}{\kappa} q_l^{\nu}} 
= \textstyle{\frac{1}{\kappa} q_l^{\nu} (1 - e^{\epsilon_0} + \beta e^{\epsilon_0})} 
= \delta_0
\end{align*}
Thus, we have 
$\Pr[\calM_{\AGeo,\beta}(x') = 0] \leq \Pr[\calM_{\AGeo,\beta}(x) = 0]$ and 
$\Pr[\calM_{\AGeo,\beta}(x) = 0] = e^{\epsilon_0} \Pr[\calM_{\AGeo,\beta}(x') = 0] + \delta_0$. 
Similarly, when $x=0$ and $x'=1$, we have 
$\Pr[\calM_{\AGeo,\beta}(x) = 0] \leq \Pr[\calM_{\AGeo,\beta}(x') = 0]$ and 
$\Pr[\calM_{\AGeo,\beta}(x') = 0] = e^{\epsilon_0} \Pr[\calM_{\AGeo,\beta}(x) = 0] + \delta_0$. 
Since $\frac{\Pr[\calM_{\AGeo,\beta}(x) = o]}{\Pr[\calM_{\AGeo,\beta}(x') = o]}$ is always $e^{\epsilon_0}$ or $e^{-\epsilon_0}$ for $o \ne 0$, 
$\calM_{\AGeo,\beta}$ provides $(\epsilon_0,\delta_0)$-DP. 
\qed

\subsection{Proof of Proposition~\ref{prop:SAGeo_squared_error}}
\label{sub:thm_SAGeo_squared_error}

Let $X_\nonageo$ be a random variable with a \textit{non-truncated} asymmetric geometric distribution that has a mode at $0$, i.e., 
\begin{align*}
\Pr(X_\nonageo = k) = 
\begin{cases}
\frac{1}{\kappa^*} q_l^{-k}   &   \text{(if $k = 0, -1, \ldots$)}\\
\frac{1}{\kappa^*} q_r^k   &   \text{(if $k = 1, 2, \ldots$)},
\end{cases}
\end{align*}
where 
$\kappa^* = \frac{q_l}{1-q_l} + \frac{1}{1 - q_r}$. 
Note that the variance of $X_\nonageo$ is reduced by truncating $X_\nonageo$ to $[-\nu,\infty)$ (as the truncation always moves $X_\nonageo$ toward the mean). 
Thus, 
the variance $\sigma^2$ of $\AGeo(\nu,q_l,q_r)$ 
is upper bounded as 
$\sigma^2 
\leq \V[X_\nonageo]$. 
In addition, $\V[X_\nonageo]$ can be upper bounded as 
$\V[X_\nonageo] \leq \E[X_\nonageo^2] = \frac{1}{(1-q_l)\kappa^*} \sum_{k=1}^\infty (1-q_l)q_l^k k^2 + + \frac{1}{(1-q_r)\kappa^*} \sum_{k=1}^\infty (1-q_r)q_r^k k^2$. 
Notice that $\sum_{k=1}^\infty (1-q)q^{k} k^2$ is equal to the second moment of the one-sided geometric distribution. 
Let $X_\onegeo$ be a random variable with the one-sided geometric distribution with parameter $q \in (0,1)$. 
Then, the second moment $\E[X_\onegeo^2]$ can be written as 
$\E[X_\onegeo^2] 
= \V[X_\onegeo^2] + \E[X_\onegeo]^2 
= \frac{q}{(1-q)^2} + \frac{q^2}{(1-q)^2}
= \frac{q(1+q)}{(1-q)^2}$. 
Thus, 
we have 
$\sigma^2 \leq \V[X_\nonageo] \leq \frac{1}{\kappa^*} \left( \frac{q_l(1+q_l)}{(1-q_l)^3} + \frac{q_r(1+q_r)}{(1-q_r)^3} \right)$.
\qed

\section{$(\epsilon,\delta)$-DP of \SBin{} from \cite{Dwork_EUROCRYPT06,Agarwal_NeurIPS18}}
\label{sec:comparison_bound}

\begin{theorem}[$(\epsilon,\delta)$-DP of $\calS_{\Bin,1}$ derived from~\cite{Dwork_EUROCRYPT06}]
\label{thm:SBin_DP_Dwork}
$\calS_{\Bin,1}$ provides $(\epsilon,\delta)$-DP, where 
$\delta = 4 e^{-\frac{\epsilon^2 M}{256}}$. 
\end{theorem}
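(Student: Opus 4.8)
The plan is to route through the binary input mechanism exactly as in the proof of Theorem~\ref{thm:SBin_DP}, but replacing its tight symmetric-binomial concentration argument with the cruder estimates of~\cite{Dwork_EUROCRYPT06}. By Theorem~\ref{thm:generalization}, it suffices to show that the binary input mechanism $\calM_{\Bin,1}$, i.e.\ $\calM_{\Bin,1}(x)=x+z$ with $z\sim B(M,\frac{1}{2})$ and $x\in\{0,1\}$, provides $(\frac{\epsilon}{2},\frac{\delta}{2})$-DP with $\frac{\delta}{2}=2e^{-\epsilon^2 M/256}$; the full protocol $\calS_{\Bin,1}=(\calG_{\Bin,1},\calA_{\Bin,1})$ then inherits $(\epsilon,\delta)$-DP because $\calA_{\Bin,1}$ is post-processing. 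Writing $\epsilon_0=\frac{\epsilon}{2}$ and $\delta_0=\frac{\delta}{2}=2e^{-\epsilon_0^2 M/64}$, the goal reduces to the classical binomial-mechanism bound: $\calM_{\Bin,1}$ is $(\epsilon_0,2e^{-\epsilon_0^2 M/64})$-DP, which is precisely the binomial noise analysis of~\cite{Dwork_EUROCRYPT06} specialized to a binary (sensitivity-$1$) count.

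For the binary mechanism, fix $x\ne x'\in\{0,1\}$ and an output $o$; by symmetry assume $x=1$, $x'=0$. Then $\Pr[\calM_{\Bin,1}(1)=o]$ and $\Pr[\calM_{\Bin,1}(0)=o]$ are the pmf of $B(M,\frac{1}{2})$ evaluated at two consecutive integers, so, parametrizing the noise as $z=\frac{M}{2}+\bar z$ exactly as in the proof of Theorem~\ref{thm:SBin_DP}, the likelihood ratio equals $\frac{M/2+\bar z+1}{M/2-\bar z}$ (or its reciprocal, using symmetry of $B(M,\frac{1}{2})$). The key step is to show this ratio lies in $[e^{-\epsilon_0},e^{\epsilon_0}]$ whenever $|\bar z|$ is below a threshold of order $M\epsilon_0$. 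Here, rather than solving the inequality exactly and invoking the tight multiplicative Chernoff bound for a symmetric binomial as in Theorem~\ref{thm:SBin_DP}, I would use only the elementary bounds available in~\cite{Dwork_EUROCRYPT06} (e.g.\ $\frac{M}{2}-\bar z\ge cM$ on the good event together with $1+y\le e^y$); this is the sole source of the weaker constant (the looser $256$ here against the sharper rate of Theorem~\ref{thm:SBin_DP}). The off-by-one $+1$ in the numerator is absorbed exactly as in the proof of Theorem~\ref{thm:SBin_DP}.

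It then remains to control the probability that $\bar z$ escapes the threshold. I would bound $\Pr[|\bar z|>t]\le 2e^{-2t^2/M}$ by Hoeffding's inequality for a sum of $M$ independent $\mathrm{Ber}(\frac{1}{2})$ variables (the concentration bound used in~\cite{Dwork_EUROCRYPT06}), and verify that, with the threshold $t=\Theta(M\epsilon_0)$ produced above, the right-hand side is at most $2e^{-\epsilon_0^2 M/64}=\delta_0$. By the standard fact that pointwise $\epsilon_0$-indistinguishability failing with probability at most $\delta_0$ implies $(\epsilon_0,\delta_0)$-DP~\cite{Kasivisiwanathan_JPC14} (as invoked in the proof of Theorem~\ref{thm:SBin_DP}), $\calM_{\Bin,1}$ is $(\epsilon_0,\delta_0)$-DP, and Theorem~\ref{thm:generalization} gives the claim with $\delta=2\delta_0=4e^{-\epsilon^2 M/256}$.

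The substance is purely in matching constants: the threshold on $|\bar z|$ coming from the likelihood-ratio estimate and the Hoeffding tail at that threshold must be calibrated so that their product is exactly $2e^{-\epsilon_0^2 M/64}$, using whichever combination of elementary bounds from~\cite{Dwork_EUROCRYPT06} is adopted. The secondary point to watch is that the crude likelihood-ratio inequalities require $\epsilon_0$ (hence $\epsilon$) to be bounded above by a constant, so the statement implicitly carries a mild restriction (e.g.\ $\epsilon\le 1$), as in the original analysis of~\cite{Dwork_EUROCRYPT06}.
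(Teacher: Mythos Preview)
Your proposal is correct and matches the paper's argument in substance: both reduce to the binomial-noise bound $(\epsilon_0,\,2e^{-\epsilon_0^2 M/64})$ from~\cite{Dwork_EUROCRYPT06} for a sensitivity-$1$ count and then double both parameters. The only cosmetic difference is that the paper states the Dwork result as $(\epsilon_0,\delta_0)$-\emph{unbounded} DP of the full histogram mechanism and then applies group privacy directly to pass to bounded DP, whereas you phrase it as $(\frac{\epsilon}{2},\frac{\delta}{2})$-DP of the binary input mechanism $\calM_{\Bin,1}$ and invoke Theorem~\ref{thm:generalization}; since Theorem~\ref{thm:generalization} is itself proved via the same two-coordinate group-privacy step, the two routes coincide. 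Your detailed sketch of the likelihood-ratio and Hoeffding steps is simply unpacking what the paper cites as a black box, and your caveat about a mild upper bound on $\epsilon$ is consistent with the restrictions in~\cite{Dwork_EUROCRYPT06} that the paper leaves implicit.
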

\begin{proof}
It is proved in~\cite{Dwork_EUROCRYPT06} that 
adding noise generated from $B(M,\frac{1}{2})$ provides $(\epsilon_0,\delta_0)$-unbounded DP, where $\delta_0 = 2 e^{-\frac{\epsilon_0^2 M}{64}}$. 
By group privacy~\cite{DP}, $(\epsilon_0,\delta_0)$-unbounded DP implies $(2\epsilon_0,2\delta_0)$-bounded DP, which shows Theorem~\ref{thm:SBin_DP_Dwork}.  
\end{proof}

\begin{theorem}[$(\epsilon,\delta)$-DP of $\calS_{\Bin,1}$ derived from~\cite{Agarwal_NeurIPS18}]
\label{thm:SBin_DP_Agarwal}
When $M \geq 4 \max\{23 \log(10d/\delta), 4\}$, $\calS_{\Bin,1}$ provides $(\epsilon,\delta)$-DP, where 
$\epsilon = \frac{c_1(\delta)}{\sqrt{M}} + \frac{c_2(\delta)}{M}$, 
$c_1(\delta) = 4\sqrt{2\log \frac{1.25}{\delta}}$, and 
$c_2(\delta) = \frac{8}{1 - \delta/10}\left( \frac{7\sqrt{2}}{4} \sqrt{\log\frac{10}{\delta}} + \frac{1}{3} \right) + \frac{16}{3}\left( \log\frac{1.25}{\delta} + \log\frac{20d}{\delta} \log\frac{10}{\delta} \right)$. 
\end{theorem}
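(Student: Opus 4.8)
The plan is to recognize $\calS_{\Bin,1}$ as the $d$-dimensional binomial mechanism applied to the histogram query, and then invoke the DP analysis of the binomial mechanism from Agarwal \emph{et al.}~\cite{Agarwal_NeurIPS18}. First I would use Lemmas~\ref{lem:equivalence_add_noise} and~\ref{lem:equivalence_shuffle} with $\beta=1$ (so no sampling takes place) to reduce the privacy of $\calS_{\Bin,1}$ to that of the noisy histogram $\bmth=\sum_{i=1}^n\bmx_i+\bmz$, where $\bmz=(z_1,\ldots,z_d)$ has independent coordinates $z_i\sim B(M,\tfrac12)$. This is exactly a binomial mechanism with $p=\tfrac12$ applied to the query $D\mapsto\sum_{i=1}^n\bmx_i$.

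Next I would bound the sensitivity of that query for (bounded) neighboring databases: changing one record moves one unit of mass from bin $j$ to bin $j'$, so the difference vector is $\bme_{j'}-\bme_j$, with $\ell_1$-sensitivity $2$, $\ell_2$-sensitivity $\sqrt2$, and $\ell_\infty$-sensitivity $1$. To obtain a clean closed form I would upper bound all three by $\Delta=2$. I then plug $p=\tfrac12$ (hence per-coordinate noise variance $\sigma^2=Mp(1-p)=M/4$) and $\Delta_1=\Delta_2=\Delta_\infty=2$ into the binomial-mechanism bound of~\cite{Agarwal_NeurIPS18}. Its prerequisite, of the form $\sigma^2\ge\max\{23\log(10d/\delta),\,2\Delta_\infty\}$, becomes $M\ge 4\max\{23\log(10d/\delta),4\}$, which is exactly the hypothesis of the theorem; and its $\epsilon$ bound, after substituting the constants appearing there evaluated at $p=\tfrac12$ and collecting the $M^{-1/2}$ and $M^{-1}$ contributions, yields $\epsilon=\frac{c_1(\delta)}{\sqrt M}+\frac{c_2(\delta)}{M}$ with $c_1,c_2$ as stated. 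Since the analyzer part $\calA_{\Bin,1}$ only post-processes $\bmth$, $\calS_{\Bin,1}$ inherits $(\epsilon,\delta)$-DP.

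The hard part is bookkeeping rather than conceptual: faithfully transcribing the multi-term bound of~\cite{Agarwal_NeurIPS18} (stated there for general $p$ and general $\ell_1/\ell_2/\ell_\infty$ sensitivities) and checking that specializing to $p=\tfrac12$ and $\Delta=2$ reproduces \emph{both} the displayed $c_1(\delta),c_2(\delta)$ and the displayed condition on $M$ exactly — in particular noting that the over-bounding $\Delta_2,\Delta_\infty\le\Delta_1=2$ is precisely what makes the result match the quoted closed form (using the tight sensitivities $(\sqrt2,1)$ would give a slightly smaller but less tidy $\epsilon$). An alternative route I would keep in reserve is to apply the \emph{unbounded}-DP version of~\cite{Agarwal_NeurIPS18} with all sensitivities equal to $1$ and then pass to bounded DP via group privacy~\cite{DP}, as in the proof of Theorem~\ref{thm:SBin_DP_Dwork}; this produces the same functional form, and I would use whichever bookkeeping lands the stated constants most directly.
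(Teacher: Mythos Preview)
Your proposal is correct and follows essentially the same approach as the paper: the paper's one-line proof invokes Theorem~1 of~\cite{Agarwal_NeurIPS18} with $p=\tfrac12$, $s=1$, and sensitivity bounds $\Delta_1=\Delta_2=\Delta_\infty=2$ (justified ``as we consider bounded DP''), which is exactly your main route including the deliberate over-bounding of the $\ell_2$ and $\ell_\infty$ sensitivities. Your explicit reduction through Lemmas~\ref{lem:equivalence_add_noise} and~\ref{lem:equivalence_shuffle} is more careful than what the paper writes out but is implicit in the paper's framework.
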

\begin{proof}
Immediately derived from Theorem 1 of~\cite{Agarwal_NeurIPS18} with parameters $p=\frac{1}{2}$ and $s=1$ and sensitivity bounds $\Delta_1 = \Delta_2 = \Delta_\infty = 2$ (as we consider bounded DP).
\end{proof}

\arxiv{
\section{Performance Guarantees of Our Protocols}
\label{sec:loss_communication_proposed}
\subsection{Expected $l_2$ Loss of \SBin{}}
\label{sub:loss_SBin}

By (\ref{eq:SBin_delta}), when $\calS_{\Bin,\beta}$ provides $(\epsilon,\delta)$-DP, $M$ can be expressed 
as $M = \frac{2\log(4\beta / \delta)}{\eta^2}$. 
Thus, the expected $l_2$ loss of \SBin{} is 
$\frac{1-\beta}{\beta n} + \frac{d \log(4\beta/\delta)}{2 \eta^2 \beta^2 n^2}$. 

When $\epsilon$ is close to $0$ (i.e., $e^\epsilon \approx \epsilon + 1$), we have 
$\eta \approx \frac{\epsilon_0}{\epsilon_0 + 2} - \frac{2}{M(\epsilon_0 + 2)} \approx \frac{\epsilon_0}{2}$. 
By (\ref{eq:SBin_epsilon}), 
we also have 
$\frac{\epsilon}{2} \approx \beta \epsilon_0$. 
Thus, $\eta$ can be approximated as $\eta \approx \frac{\epsilon}{4\beta}$, and the expected $l_2$ loss of \SBin{} is approximated as 
$\frac{1-\beta}{\beta n} + \frac{8d \log(4\beta/\delta)}{\epsilon^2 n^2}$. 

\subsection{Communication Cost of \SBin{}}
\label{sub:communication_SBin}
Since $M = \frac{2\log(4\beta / \delta)}{\eta^2}$ and $\eta \approx \frac{\epsilon}{4\beta}$ (see Appendix~\ref{sub:loss_SBin}), $C_{tot}$ of \SBin{} can be approximated as: $C_{tot}=\alpha((1+\beta)n+\frac{Md}{2}) 
\approx \alpha ((1+\beta)n + \frac{16\beta^2 d \log(4\beta/\delta)}{\epsilon^2})$. 

\subsection{$\mu$ in \SAGeo{}}
\label{sub:mu_SAGeo}
Let $X_\ageo$ be a random variable with the asymmetric geometric distribution. 
We show that $\E[X_\ageo] = \mu$, where $\mu$ is given by (\ref{eq:SAGeo_mu_ast}). 
$\E[X_\ageo]$ can be written as follows: 
\begin{align}
\textstyle{\E[X_\ageo] = \frac{1}{\kappa} \left(\sum_{k=0}^{\nu-1} k q_l^{\nu - k} + \sum_{k=\nu}^\infty k q_r^{k - \nu} \right).}
\label{eq:E_X_ageo}
\end{align}
Let $X_\onegeo$ be a random variable with the one-sided geometric distribution with parameter $q \in (0,1)$, i.e.,  
\begin{align*}
\Pr(X_\onegeo = k) = (1-q) q^k ~~~~ \text{($k \in \nnints$)}.
\end{align*}
Since the expectation of $X_\onegeo$ is $\E[X_\onegeo] = \sum_{k=0}^\infty k(1-q)q^k = \frac{q}{1-q}$, the following equality holds for any $q \in (0,1)$: 
\begin{align}
\textstyle{\sum_{k=0}^\infty k q^k = \frac{q}{(1-q)^2}.}
\label{eq:k_q_k_q_1_q_2}
\end{align}
In addition, $\sum_{k=0}^\infty q^k = \frac{1}{1-q}$ for any $q \in (0,1)$. 
Thus, 
\begin{align}
\textstyle{\sum_{k=\nu}^\infty k q_r^{k - \nu}} 
&= \textstyle{\sum_{k=0}^\infty (k + \nu) q_r^k} \nonumber\\
&= \textstyle{\sum_{k=0}^\infty k q_r^k + \sum_{k=0}^\infty \nu q_r^k} \nonumber\\
&= \textstyle{\frac{q_r}{(1-q_r)^2} + \frac{\nu}{1-q_r}} \nonumber\\
&= \textstyle{\frac{q_r + (1-q_r)\nu}{(1 - q_r)^2}.}
\label{eq:sum_k_qr_kmu}
\end{align}
By (\ref{eq:E_X_ageo}) and (\ref{eq:sum_k_qr_kmu}), we have 
\begin{align}
\textstyle{\E[X_\ageo] = \frac{1}{\kappa} \left( \sum_{k=0}^{\nu-1} x q_l^{\nu-k} + \frac{q_r + (1-q_r)\nu}{(1 - q_r)^2}\right) = \mu.}
\label{eq:E_X_ageo_mu_ast}
\end{align}

\subsection{Expected $l_2$ Loss of \SAGeo{}}
\label{sub:loss_SAGeo}
When $\epsilon$ is close to $0$ (i.e., $e^\epsilon \approx \epsilon + 1$), the parameters $q_l$ and $q_r$ 
can be approximated as follows: 
$q_l = \frac{e^{-\epsilon/2} - 1 + \beta}{\beta} \approx \frac{2\beta - \epsilon}{2\beta}$, $q_r = \frac{\beta}{e^{\epsilon/2} - 1 + \beta} \approx \frac{2\beta}{\epsilon + 2\beta}$. 
Thus, we have 
\begin{align}
\kappa^* 
&= \textstyle{\frac{q_l}{1-q_l} + \frac{1}{1 - q_r} \approx \frac{2\beta - \epsilon}{\epsilon} + \frac{\epsilon + 2\beta}{\epsilon} = \frac{4\beta}{\epsilon}}
\label{eq:SAGeo_kappa_approx} \\
\textstyle{\frac{q_l(1+q_l)}{(1-q_l)^3}} 
&\approx \textstyle{\frac{\left( \frac{2\beta - \epsilon}{2\beta} \right)\left( \frac{4\beta - \epsilon}{2\beta} \right)}{\left( \frac{\epsilon}{2\beta} \right)^3} = \frac{2\beta(2\beta-\epsilon)(4\beta-\epsilon)}{\epsilon^3} \approx \frac{16\beta^3}{\epsilon^3}} \label{eq:SAGeo_q_l_1_q_l_approx} \\
\textstyle{\frac{q_r(1+q_r)}{(1-q_r)^3}} 
&\approx \textstyle{\frac{\left( \frac{2\beta}{\epsilon + 2\beta} \right)\left( \frac{\epsilon + 4\beta}{\epsilon + 2\beta} \right)}{\left( \frac{\epsilon}{\epsilon + 2\beta} \right)^3} = \frac{(\epsilon+2\beta)2\beta(\epsilon + 4\beta)}{\epsilon^3} \approx \frac{16\beta^3}{\epsilon^3}}. \label{eq:SAGeo_q_r_1_q_r_approx}
\end{align}
By (\ref{eq:SAGeo_kappa_approx}), (\ref{eq:SAGeo_q_l_1_q_l_approx}), and (\ref{eq:SAGeo_q_r_1_q_r_approx}), the expected $l_2$ loss can be approximated as follows: 
\begin{align*}
\textstyle{\E\left[ \sum_{i=1}^d (\hf_i - f_i)^2 \right]} 
&\leq \textstyle{\frac{1-\beta}{\beta n} \hspace{-0.5mm}+\hspace{-0.5mm} \frac{d}{\kappa^* \beta^2 n^2} \hspace{-0.5mm} \left( \frac{q_l(1+q_l)}{(1-q_l)^3} + \frac{q_r(1+q_r)}{(1-q_r)^3} \right)} \\
&\approx \textstyle{\frac{1-\beta}{\beta n} + \frac{\epsilon}{4\beta} \cdot \frac{d}{\beta^2 n^2} \cdot \frac{32\beta^3}{\epsilon^3}} \\
&\approx \textstyle{\frac{1-\beta}{\beta n} + \frac{8d}{\epsilon^2 n^2}}.
\end{align*}

\subsection{Communication Cost of \SAGeo{}}
\label{sub:communication_SAGeo}

By (\ref{eq:sum_k_qr_kmu}), $\mu$ in (\ref{eq:SAGeo_mu_ast}) can be upper bounded as follows:
\begin{align}
\mu 
&= \textstyle{\frac{1}{\kappa} \left( \sum_{k=0}^{\nu-1} x q_l^{\nu-k} + 
\frac{q_r + (1-q_r)\nu}{(1 - q_r)^2} \right)} \nonumber\\
&= \textstyle{\frac{1}{\kappa} \left( \sum_{k=0}^{\nu-1} x q_l^{\nu-k} + 
\sum_{k=0}^\infty k q_r^k + \sum_{k=0}^\infty \nu q_r^k \right)} \nonumber\\
&\leq \textstyle{\frac{1}{\kappa} \left( \nu \sum_{k=0}^{\nu-1} q_l^{\nu-k} + \nu \sum_{k=\nu}^\infty q_r^{k-\nu} + 
\sum_{k=0}^\infty k q_r^k \right)} \nonumber\\
&= \textstyle{\nu + \frac{1}{\kappa} \cdot \frac{q_r}{(1-q_r)^2} \hspace{5mm} \text{(by (\ref{eq:k_q_k_q_1_q_2}))}} \nonumber\\
&\leq \textstyle{\nu + \frac{q_r}{1-q_r}} \hspace{5mm} \text{(as $\kappa \geq 1/(1-q_r)$)}.
\label{eq:mu_ast_upper}
\end{align}
If $\beta > 1 - e^{-\epsilon/2}$ ($\iff q_l > 0$), then we have
\begin{align}
&\delta = \textstyle{\frac{2}{\kappa} q_l^\nu (1 - e^{\epsilon/2} + \beta e^{\epsilon/2})} \nonumber\\
\iff& \textstyle{\nu = \frac{\log \kappa \delta - \log 2(1 - e^{\epsilon/2} + \beta e^{\epsilon/2})}{\log q_l}.}
\label{eq:mu_q_l_positive}
\end{align}
Note that $q_l = \frac{e^{-\epsilon/2} - 1 + \beta}{\beta}$, $q_r = \frac{\beta}{e^{\epsilon/2} - 1 + \beta}$, and $\kappa = \frac{q_l(1-q_l^\nu)}{1-q_l} + \frac{1}{1 - q_r} \leq \frac{q_l}{1-q_l} + \frac{1}{1 - q_r}$. 
Thus, when $\epsilon$ is close to $0$ (i.e., $e^\epsilon \approx \epsilon + 1$), $\nu$ in (\ref{eq:mu_q_l_positive}) can be upper bounded as 
\begin{align}
\nu 
&\leq \textstyle{\frac{\log (\frac{q_l}{1-q_l} + \frac{1}{1 - q_r}) \delta - \log 2(1 - e^{\epsilon/2} + \beta e^{\epsilon/2})}{\log q_l}} \nonumber\\
&= \textstyle{\frac{\log (\frac{e^{-\epsilon/2} - 1 + \beta}{1-e^{-\epsilon/2}} + \frac{e^{\epsilon/2} - 1 + \beta}{e^{\epsilon/2} - 1}) \delta - \log 2(1 - e^{\epsilon/2} + \beta e^{\epsilon/2})}{\log (\frac{e^{-\epsilon/2} - 1 + \beta}{\beta})}} \nonumber\\
&\approx \textstyle{\frac{\log (\frac{\beta -\epsilon/2}{\epsilon/2} + \frac{\epsilon/2 + \beta}{\epsilon/2}) \delta - \log 2(-\epsilon/2 + \beta(\epsilon/2 + 1))}{\log (1 - \frac{\epsilon}{2\beta})}} \nonumber\\
&\approx \textstyle{\frac{\log \frac{4\beta\delta}{\epsilon} - \log 2\beta}{-\frac{\epsilon}{2\beta}} 
= \frac{2\beta \log \frac{\epsilon}{2\delta}}{\epsilon}.}
\label{eq:mu_q_l_positive_upper}
\end{align}
In addition, we have 
\begin{align}
\textstyle{\frac{q_r}{1-q_r} 
\approx \frac{\beta - \epsilon/2}{\epsilon/2}
\approx \frac{2\beta}{\epsilon}.}
\label{eq:q_r_1_q_r}
\end{align}
By (\ref{eq:mu_ast_upper}), (\ref{eq:mu_q_l_positive_upper}), and (\ref{eq:q_r_1_q_r}), we have 
\begin{align*}
\textstyle{\mu \leq \frac{2\beta (\log \frac{\epsilon}{2\delta} + 1)}{\epsilon}.}
\end{align*}
By (\ref{eq:C_tot_generalization}), $C_{tot}$ can be upper bounded as follows: 
$C_{tot} \leq \alpha ((1 + \beta)n + \frac{2\beta d (\log(\epsilon / 2\delta) + 1)}{\epsilon})$.

\subsection{Expected $l_2$ Loss of \SOGeo{}}
\label{sub:loss_SOGeo}
Let $\beta = 1 - e^{-\epsilon/2}$ and $q_r = \frac{1}{1+e^{\epsilon/2}}$. 
When $\epsilon$ is close to $0$ (i.e., $e^\epsilon \approx \epsilon + 1$), $\beta$ and $q_r$ can be approximated as 
\begin{align}
\beta \approx \textstyle{\frac{\epsilon}{2}}, ~
q_r \approx \textstyle{\frac{1}{\epsilon/2 + 2} \approx \frac{1}{2}.} \label{eq:SOGeo_qr_approx}
\end{align}
Thus, the expected $l_2$ loss can be approximated as follows: 
\begin{align*}
\textstyle{\E\left[ \sum_{i=1}^d (\hf_i - f_i)^2 \right]} 
&\leq \textstyle{\frac{1-\beta}{\beta n} + \frac{q_r d}{(1-q_r)^2 \beta^2 n^2}} \\
&\approx \textstyle{\frac{1-\epsilon/2}{\epsilon n/2} + \frac{d/2}{\epsilon^2 n^2 / 16}} \\
&\approx \textstyle{\frac{2}{\epsilon n} + \frac{8d}{\epsilon^2 n^2}.}
\end{align*}

\subsection{Communication Cost of \SOGeo{}}
\label{sub:communication_SOGeo}
By 
(\ref{eq:SOGeo_qr_approx}), when $\epsilon$ is close to $0$, 
we have 
\begin{align*}
\textstyle{C_{tot} = \alpha(n\beta + \frac{q_r d}{1-q_r}) \approx \frac{\epsilon n}{2} + d.}
\end{align*}
}

\arxiv{
\section{Performance Guarantees of the Existing Single-Message Shuffle Protocols}
\label{sec:existing_performance}

\begin{table*}[t]
\caption{Performance guarantees of the existing LDP mechanisms 
($\lambda$ ($= \frac{n'}{n+n'}$): fraction of fake users, $f_T$ ($= \sum_{i \in \calT} f_i$): frequencies over target items, $|\calT|$: \#target items).} 
\centering
\hbox to\hsize{\hfil
\begin{tabular}{c||c|c|c|c}
\hline
&   GRR &   OUE &   OLH &   RAPPOR\\
\hline
Expected $l_2$ loss 
&   $\frac{d(e^{\epsilon_L}+d-2)}{n(e^{\epsilon_L}-1)^2}$ & $\frac{4de^{\epsilon_L}}{n(e^{\epsilon_L}-1)^2}$ &   $\frac{4de^{\epsilon_L}}{n(e^{\epsilon_L}-1)^2}$ & $\frac{d e^{\epsilon_L / 2}}{n(e^{\epsilon_L / 2}-1)^2}$\\
\hline
Overall gain 
$\GMGA$    &   $\lambda (1 - f_T) + \frac{\lambda(d - |\calT|)}{e^{\epsilon_L}-1}$ &   $\lambda (2|\calT| - f_T) + \frac{2 \lambda |\calT|}{e^{\epsilon_L}-1}$ &   $\lambda (2|\calT| - f_T) + \frac{2 \lambda |\calT|}{e^{\epsilon_L}-1}$ & $\lambda (|\calT| - f_T) + \frac{\lambda |\calT|}{e^{\epsilon_L / 2}-1}$\\
\hline
\end{tabular}
\hfil}
\label{tab:performance_LDP}
\end{table*}

The expected $l_2$ loss of the GRR, OUE, OLH, and RAPPOR is shown in~\cite{Wang_USENIX17}. 
In addition, the overall gain $\GMGA$ of the GRR, OUE, and OLH is shown in~\cite{Cao_USENIX21}. 
Both~\cite{Wang_USENIX17} and~\cite{Cao_USENIX21} assume that the empirical estimator in~\cite{Wang_USENIX17} is used to calculate an unbiased estimate of $\bmf$. 
Table~\ref{tab:performance_LDP} summarizes 
the performance guarantees of these LDP mechanisms.\footnote{Note that~\cite{Wang_USENIX17} considers the absolute frequency, whereas we consider the relative frequency. 
We can obtain $\V[\hf_i]$ by dividing the values ($=\V[n \hf_i]/n$) in Table 1 of~\cite{Wang_USENIX17} by $n$. 
Then we can calculate the expected $l_2$ loss 
as: $\E[ \sum_{i=1}^d (\hf_i - f_i)^2 ] = \sum_{i=1}^d \V[\hf_i]$.} 
Since the overall gain $\GMGA$ of the RAPPOR is not shown in~\cite{Cao_USENIX21}, we calculate it 
in Proposition~\ref{prop:GMGA_RAPPOR}. 

Note that the values in Table~\ref{tab:performance_LDP} depend on the local privacy budget $\epsilon_L$, as the GRR, OUE, OLH, and RAPPOR are LDP mechanisms. 
Therefore, we first revisit the privacy amplification results in the existing shuffle model~\cite{Feldman_FOCS21} 
and express 
the local privacy budget $\epsilon_L$ 
using $\epsilon$, $\delta$, and $n$. 
Then, we calculate the expected $l_2$ loss and the overall gain $\GMGA$ of the existing shuffle protocols using $\epsilon$, $\delta$, and $n$.

\smallskip{}
\noindent{\textbf{$\epsilon_L$ in the General Case.}}~~By Theorem~\ref{thm:shuffle}, the existing shuffle protocols provide $(\epsilon, \delta)$-DP, where
\begin{align*}
\textstyle{\epsilon = \log \left( 1 + \frac{e^{\epsilon_L}-1}{e^{\epsilon_L}+1} \left( \frac{8\sqrt{e^{\epsilon_L} \log(4/\delta)}}{\sqrt{n}} + \frac{8 e^{\epsilon_L}}{n} \right) \right).}
\end{align*}
When $\epsilon_L$ is close to $0$, $e^{\epsilon_L}$ and $\epsilon$ can be approximated as $e^{\epsilon_L} \approx \epsilon_L + 1$ and $e^\epsilon \approx \epsilon + 1$, respectively. 
Then, for large $n$, we have
\begin{align*}
\epsilon 
&\approx \textstyle{\frac{\epsilon_L}{\epsilon_L+2} 
\left(\frac{8\sqrt{(\epsilon_L+1) \log(4/\delta)}}{\sqrt{n}} + \frac{8 (\epsilon_L+1)}{n} \right)} \\
&\approx \textstyle{\frac{\epsilon_L}{2} 
\left(\frac{8\sqrt{\log(4/\delta)}}{\sqrt{n}} + \frac{8}{n} \right)} 
\approx \textstyle{\frac{4\epsilon_L\sqrt{\log(4/\delta)}}{\sqrt{n}}.} 
\end{align*}
Thus, $\epsilon_L$ can be expressed as follows:
\begin{align}
\epsilon_L \approx \textstyle{\frac{\epsilon \sqrt{n}}{4 \sqrt{\log(4/\delta)}}.}
\label{eq:epsilon_L_general}
\end{align}

\smallskip{}
\noindent{\textbf{$\epsilon_L$ in \GRRS{}.}}~~Feldman \textit{et al.}~\cite{Feldman_FOCS21} show a tighter bound than Theorem~\ref{thm:shuffle} for the GRR (Corollary IV.2 in~\cite{Feldman_FOCS21}). 
Specifically, they show that a shuffle protocol based on the GRR provides $(\epsilon, \delta)$-DP, where
\begin{align*}
\textstyle{\epsilon = \log \left( 1 + (e^{\epsilon_L} - 1) \left(\frac{4\sqrt{2(d+1)\log(4/\delta)}}{\sqrt{(e^{\epsilon_L}+d-1)dn}} + \frac{4(d+1)}{dn} \right) \right).}
\end{align*}
When $\epsilon_L$ is close to $0$, we have
\begin{align*}
\epsilon 
&\approx \textstyle{\epsilon_L \left(\frac{4\sqrt{2(d+1)\log(4/\delta)}}{\sqrt{(\epsilon_L+d)dn}} + \frac{4(d+1)}{dn} \right)} \\
&\approx \textstyle{\frac{4\epsilon_L\sqrt{2(d+1)\log(4/\delta)}}{d\sqrt{n}}.}
\end{align*}
Thus, for the GRR, $\epsilon_L$ can be expressed as follows:
\begin{align}
\epsilon_L \approx \textstyle{\frac{\epsilon d\sqrt{n}}{4\sqrt{2(d+1)\log(4/\delta)}}.}
\label{eq:epsilon_L_GRR}
\end{align}

\smallskip{}
\noindent{\textbf{Expected $l_2$ Loss of \GRRS{}.}}~~By (\ref{eq:epsilon_L_GRR}) and Table~\ref{tab:performance_LDP}, when $\epsilon_L$ is close to $0$ (i.e., $e^{\epsilon_L} \approx \epsilon_L + 1$), the expected $l_2$ loss of \GRRS{} can be approximated as follows: 
\begin{align*}
\textstyle{\frac{d(e^{\epsilon_L}+d-2)}{n(e^{\epsilon_L}-1)^2}} 
\approx \textstyle{\frac{d(d - 1)}{\epsilon_L^2 n}} 
&= \textstyle{\frac{32(d+1)(d-1)\log(4/\delta)}{d \epsilon^2 n^2}} \\
&\approx \textstyle{\frac{32d\log(4/\delta)}{\epsilon^2 n^2} \hspace{5mm} \text{(for large $d$)}.}
\end{align*}

\smallskip{}
\noindent{\textbf{Expected $l_2$ Loss of \OUES{} and \OLHS{}.}}~~By Table~\ref{tab:performance_LDP}, \OUES{} and \OLHS{} achieve the same expected $l_2$ loss. 
By (\ref{eq:epsilon_L_general}), when $\epsilon_L$ is close to $0$, 
the expected $l_2$ loss is 
\begin{align*}
\textstyle{\frac{4de^{\epsilon_L}}{n(e^{\epsilon_L}-1)^2}}
\approx \textstyle{\frac{4d}{\epsilon_L^2 n}} 
\approx \textstyle{\frac{64d\log(4/\delta)}{\epsilon^2 n^2}.}
\end{align*}

\smallskip{}
\noindent{\textbf{Expected $l_2$ Loss of \RAPS{}.}}~~By (\ref{eq:epsilon_L_general}) and Table~\ref{tab:performance_LDP}, when $\epsilon_L$ is close to $0$, the expected $l_2$ loss of \RAPS{} is 
\begin{align*}
\textstyle{\frac{de^{\epsilon_L/2}}{n(e^{\epsilon_L/2}-1)^2}}
\approx \textstyle{\frac{d(\epsilon_L/2 +1)}{(\epsilon_L/2)^2 n}} 
\approx \textstyle{\frac{4d}{\epsilon_L^2 n}} 
\approx \textstyle{\frac{64d\log(4/\delta)}{\epsilon^2 n^2}.}
\end{align*}

\smallskip{}
\noindent{\textbf{$\GMGA$ of \GRRS{}.}}~~By (\ref{eq:epsilon_L_GRR}) and Table~\ref{tab:performance_LDP}, when $\epsilon_L$ is close to $0$ (i.e., $e^{\epsilon_L} \approx \epsilon_L + 1$), $\GMGA$ of \GRRS{} is 
\begin{align*}
\GMGA 
&= \textstyle{\lambda (1 - f_T) + \frac{\lambda(d - |\calT|)}{e^{\epsilon_L}-1}} \\
&\approx \textstyle{\lambda (1 - f_T) + \frac{4 \lambda(d - |\calT|)\sqrt{2(d+1)\log(4/\delta)}}{\epsilon  d\sqrt{n}}.}
\end{align*}

\smallskip{}
\noindent{\textbf{$\GMGA$ of \OUES{} and \OLHS{}.}}~~By Table~\ref{tab:performance_LDP}, \OUES{} and \OLHS{} achieve the same $\GMGA$. 
By (\ref{eq:epsilon_L_general}), when $\epsilon_L$ is close to $0$, 
$\GMGA$ is 
\begin{align*}
\GMGA 
&= \textstyle{\lambda (2|\calT| - f_T) + \frac{2 \lambda |\calT|}{e^{\epsilon_L}-1}} \\
&\approx \textstyle{\lambda (2|\calT| - f_T)+ \frac{8 \lambda |\calT|\sqrt{\log(4/\delta)}}{\epsilon \sqrt{n}}.}
\end{align*}

\smallskip{}
\noindent{\textbf{$\GMGA$ of \RAPS{}.}}~~We first show the overall gain $\GMGA$ of the RAPPOR in the local model. 
For $i\in[n]$, let $y_i = (y_i[1], \ldots, y_i[d]) \in \{0,1\}^d$ be a message sent from the $i$-th user. 
In the RAPPOR, the data collector calculates an unbiased estimate $\bmhf = (\hf_1, \cdots, \hf_d)$ of $\bmf$ 
as 
\begin{align}
\textstyle{\forall j \in [d]: \hf_j = \frac{(\sum_{i=1}^n y_i[j]) - nq}{n(1-2q)},}
\label{eq:RAPPOR_estimate}
\end{align}
where $q = \frac{1}{e^{\epsilon_L / 2} + 1}$ and $\epsilon_L$ is the privacy budget in the local model (see~\cite{Wang_USENIX17} for the proof that (\ref{eq:RAPPOR_estimate}) is unbiased). 
Then, $\GMGA$ can be written as follows:

\begin{proposition}
\label{prop:GMGA_RAPPOR}
Let $\lambda = \frac{n'}{n+n'}$ and $f_T = \sum_{i \in \calT} f_i$. 
$\GMGA$ of the RAPPOR with the privacy budget $\epsilon_L$ is 
\begin{align*}
\textstyle{\GMGA = \lambda (|\calT| - f_T) + \frac{\lambda |\calT|}{e^{\epsilon_L / 2}-1}.}
\end{align*}
\end{proposition}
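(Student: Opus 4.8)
The plan is to write $G(\bmy')$ explicitly as an affine function of the fake users' reports, and then read off the maximizer. Since Proposition~\ref{prop:GMGA_RAPPOR} is a statement purely about the local RAPPOR estimator (\ref{eq:RAPPOR_estimate}) with parameter $\epsilon_L$, the argument does not involve shuffling; it is just bookkeeping with expectations.

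First I would set up the poisoned execution. After $n'$ fake users are injected, the data collector applies (\ref{eq:RAPPOR_estimate}) to all $n+n'$ reports, so the poisoned estimate is $\hf'_j = \frac{(\sum_{i=1}^{n+n'} y_i[j]) - (n+n')q}{(n+n')(1-2q)}$ with $q=\frac{1}{e^{\epsilon_L/2}+1}$. Because (\ref{eq:RAPPOR_estimate}) is unbiased in the honest setting, the RAPPOR perturbation must satisfy $\Pr[y_i[j]=1\mid x_i=j]=1-q$ and $\Pr[y_i[j]=1\mid x_i\neq j]=q$; hence, since exactly $nf_j$ genuine users hold item $j$, $\E[\sum_{i=1}^n y_i[j]] = nf_j(1-q)+n(1-f_j)q = nf_j(1-2q)+nq$. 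Writing $c_j=\sum_{i=n+1}^{n+n'} y_i[j]\in\{0,\dots,n'\}$ for the number of (general-model) fake users who set coordinate $j$ to $1$, and using $\E[\hf_j]=f_j$, one gets after a short cancellation $\E[\Delta\hf_j] = \frac{c_j - n'\big(q(1-f_j)+(1-q)f_j\big)}{(n+n')(1-2q)}$, which (since $1-2q>0$ for $\epsilon_L>0$) is strictly increasing in $c_j$.

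Next I would sum over the target set: $G(\bmy') = \frac{\sum_{j\in\calT}c_j - n'\big(q(|\calT|-f_T)+(1-q)f_T\big)}{(n+n')(1-2q)}$, where $f_T=\sum_{j\in\calT}f_j$. Because $G$ is a sum of terms each depending on a single $c_j$ with a positive coefficient and the only constraint is $c_j\le n'$, the maximization over $\bmy'$ decouples across coordinates, and the MGA is realized by every fake user reporting the all-ones vector on $\calT$, giving $\sum_{j\in\calT}c_j=n'|\calT|$. Plugging this in and simplifying with $1-q=\frac{e^{\epsilon_L/2}}{e^{\epsilon_L/2}+1}$, $1-2q=\frac{e^{\epsilon_L/2}-1}{e^{\epsilon_L/2}+1}$, so that $\frac{1-q}{1-2q}=1+\frac{1}{e^{\epsilon_L/2}-1}$, yields $\GMGA = \lambda\big(\tfrac{(1-q)|\calT|}{1-2q}-f_T\big) = \lambda(|\calT|-f_T)+\tfrac{\lambda|\calT|}{e^{\epsilon_L/2}-1}$ with $\lambda=\frac{n'}{n+n'}$, as claimed.

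No step here is genuinely difficult; the routine calculation is the cancellation in $\E[\Delta\hf_j]$. The only conceptual point worth stating carefully is the decoupling of the optimization across the target coordinates under the general (i.e., arbitrary binary vector) attack model, which is what makes the all-ones report simultaneously optimal for every $c_j$ — and this is immediate from the form of $G$.
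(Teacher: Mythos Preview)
Your proof is correct and follows essentially the same approach as the paper: both set up the poisoned estimator, compute the expected gain as an affine function of the fake users' bits on the target coordinates, observe that it is maximized by setting all those bits to $1$, and simplify using $q=\frac{1}{e^{\epsilon_L/2}+1}$ to reach $\GMGA=\lambda(|\calT|-f_T)+\frac{\lambda|\calT|}{e^{\epsilon_L/2}-1}$. Your write-up is slightly more explicit about the decoupling of the optimization across coordinates, but the argument is the same.
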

\begin{proof}
For $i\in[n']$, let $y'_i \in \{0,1\}^d$ be a message sent from the $i$-th fake user and $\bmy' = (y'_1, \cdots, y'_{n'})$. 
By (\ref{eq:RAPPOR_estimate}), the estimate $\bmhf' = (\hf'_1, \cdots, \hf'_d)$ after data poisoning can be written as follows:
\begin{align}
\textstyle{\forall j \in [d]: \hf'_j = \frac{(\sum_{i=1}^{n+n'} y'_i[j]) - (n+n')q}{(n+n')(1-2q)}.}
\label{eq:RAPPOR_estimate_poisoning}
\end{align}
By (\ref{eq:RAPPOR_estimate}) and (\ref{eq:RAPPOR_estimate_poisoning}), we have
\begin{align*}
\E[\hf'_j] 
&= \textstyle{\frac{\E[\sum_{i=1}^{n} y'_i[j]] - nq + \E[\sum_{i=1}^{n'} y'_i[j]] - n'q}{(n+n')(1-2q)}} \\
&= \textstyle{\frac{n}{n+n'} \E[\hf_j] + \frac{\E[\sum_{i=1}^{n'} y'_i[j]] - n'q}{(n+n')(1-2q)},}
\end{align*}
and therefore, 
\begin{align*}
G(\bmy') 
&= \textstyle{\sum_{j \in \calT} \E[\Delta \hf_j]} \\
&= \textstyle{\sum_{j \in \calT} \E[\hf'_j - \hf_j]} \\
&= \textstyle{\sum_{j \in \calT} 
\left( \frac{\E[\sum_{i=1}^{n'} y'_i[j]] - n'q}{(n+n')(1-2q)} - \lambda \E[\hf_j] \right)} \\
&= \textstyle{\sum_{j \in \calT} 
\left( \frac{\E[\sum_{i=1}^{n'} y'_i[j]] - n'q}{(n+n')(1-2q)} \right) - \lambda f_T}~ \text{(as $\E[\hf_j] = f_j$)}.
\end{align*}
This is maximized when all the fake users set $y'_i[j] = 1$ for all target items $j \in \calT$. 
Thus, we have
\begin{align*}
\GMGA 
= \textstyle{\max_{\bmy'} G(\bmy')} 
&= \textstyle{\frac{n'|\calT|(1 - q)}{(n+n')(1-2q)} - \lambda f_T} \\
&= \textstyle{\frac{\lambda |\calT|(1 - q)}{1-2q} - \lambda f_T} \\
&= \textstyle{\frac{\lambda q |\calT|}{1-2q} + \lambda (|\calT| - f_T)} \\
&= \textstyle{\lambda (|\calT| - f_T) + \frac{\lambda |\calT|}{e^{\epsilon_L / 2} - 1}.}
\end{align*}
\end{proof}
By Proposition~\ref{prop:GMGA_RAPPOR} and  (\ref{eq:epsilon_L_general}), when $\epsilon_L$ is close to $0$, $\GMGA$ of \RAPS{} is 
\begin{align*}
\GMGA 
&= \textstyle{\lambda (|\calT| - f_T) + \frac{\lambda |\calT|}{e^{\epsilon_L / 2} - 1}} \\
&\approx \textstyle{\lambda (|\calT| - f_T) + \frac{2 \lambda |\calT|}{\epsilon_L}} \\
&\approx \textstyle{\lambda (|\calT| - f_T) + \frac{8 \lambda |\calT|\sqrt{\log(4/\delta)}}{\epsilon \sqrt{n}}.}
\end{align*}

}

\section{Existing Multi-Message Shuffle Protocols}
\label{sec:existing_mechanisms}

\subsection{\BC{}~\cite{Balcer_ITC20}}
\label{sub:BC20}
\noindent{\textbf{Protocol.}}~~In 
\BC{}, 
each user $u_i$ first sends $x_i$ to the shuffler. 
Then, for each item, 
user $u_i$ sends one dummy value to the shuffler with probability $q_1 \in [\frac{1}{2},1)$. 

\smallskip{}
\noindent{\textbf{Function $g$.}}~~If $q_1 = 1 - \frac{200}{\epsilon^2 n}\log\frac{4}{\delta}$ and $\epsilon \in (0,2]$, then \BC{} provides $(\epsilon, \delta)$-DP
(see Theorem 12 in~\cite{Balcer_ITC20}). 
Thus, $\epsilon$ can be expressed as $\epsilon = g(n,\delta)$, where 
$g(n,\delta) = \sqrt{\frac{200\log(4/\delta)}{(1-q_1)n}}$. 

\smallskip{}
\noindent{\textbf{Expected $l_2$ Loss.}}~~Let $a_{i,j}$ (resp.~$b_{i,j}$) $\in \{0,1\}$ be the number of input (resp.~dummy) values of user $u_i$ in the $j$-th item. 
Then, the unbiased estimate $\hf_j$ in \cite{Balcer_ITC20} is 
$\hf_j = \frac{1}{n}\sum_{i=1}^n \left( a_{i,j} + b_{i,j} \right) - q_1$. 
Since $q_1 = 1 - \frac{200}{\epsilon^2 n}\log\frac{4}{\delta} \geq \frac{1}{2}$, 
\begin{align*}
\textstyle{\E\left[ \sum_{i=1}^d (\hf_i - f_i)^2 \right]} 
= \textstyle{\frac{dnq_1(1-q_1)}{n^2}} 
\geq \textstyle{\frac{100d\log(4/\delta)}{\epsilon^2 n^2}.}
\end{align*}

\smallskip{}
\noindent{\textbf{Communication Cost.}}~~Since $q_1 \geq \frac{1}{2}$, we have 
$C_{tot} 
= 2\alpha n(1+dq_1) 
\geq 2\alpha n(1+\frac{d}{2})$. 

\smallskip{}
\noindent{\textbf{$\GMGA$.}}~~Assume that each fake user performs the following attack: (i) send one input value for an item randomly selected from target items $\calT$; (ii) send a dummy value for each target item in $\calT$; (iii) send a dummy value for each non-target item with probability $q_1$. 
This attack maximizes the overall gain.
Moreover, when $|\calT| \ll d$ or $n' \ll n$, the expected number of 
messages 
after poisoning is almost the same as that before poisoning, i.e., 
$n(1+dq_1) + n'(1+|\calT|+(d-|\calT|)q_1) \approx (n+n')(1+dq_1)$. 
Thus, this attack can avoid detection based on the number of 
messages. 
For $i \in \calT$, we have 
\begin{align*}
\E[\hf'_i] 
&= \textstyle{\frac{1}{n+n'}\left( n f_i + nq_1 + \frac{n'}{|\calT|} + n' \right) - q_1}\\
&= \textstyle{\frac{n}{n+n'}f_i + \frac{n'}{n+n'}\left(\frac{1}{|\calT|} + 1 - q_1 \right)}.
\end{align*}
Thus, by using $\lambda = \frac{n'}{n+n'}$ and $f_T = \sum_{i \in \calT} f_i$, we have 
\begin{align*}
\GMGA 
= \textstyle{\sum_{i \in \calT} (\E[\hf'_i - \hf_i])} 
= \textstyle{\lambda (1 - f_T + \frac{200|\calT|}{\epsilon^2 n}\log\frac{4}{\delta}).}
\end{align*}

\subsection{\CM{}~\cite{Cheu_SP22}}
\label{sub:CM22}
\noindent{\textbf{Protocol.}}~~In \CM{}, 
each user $u_i$ first maps $x_i \in [d]$ to 
a vector 
$y_{i,1} \in \{0,1\}^d$ with a $1$ in the $x_i$-th element and $0$'s elsewhere. 
Then, user $u_i$ adds $\xi \in \nats$ dummy values $y_{i,2}, \ldots, y_{i,\xi+1}$. 
Each dummy value is a $d$-dim zero vector; i.e., $0^d$. 
User $u_i$ flips each bit of 
$y_{i,1}, \ldots, y_{i,\xi+1}$ 
with probability 
$q_2$, 
where 
$q_2 = \frac{1}{e^{\epsilon_L/2}+1} (\leq \frac{1}{2})$ and $\epsilon_L \in \nnreals$. 
User $u_i$ sends $\tilde{y}_{i,1}, \ldots, \tilde{y}_{i,\xi+1}$, the flipped versions of $y_{i,1}, \ldots, y_{i,\xi+1}$, to the shuffler. 

\smallskip{}
\noindent{\textbf{Function $g$.}}~~If $q_2(1-q_2) \geq \frac{33}{5n\xi}(\frac{e^\epsilon+1}{e^\epsilon-1})^2 \log\frac{4}{\delta}$, then \CM{} provides $(\epsilon, \delta)$-DP
(see Claim III.1 in~\cite{Cheu_SP22}). 
Thus, $\epsilon$ can be expressed as $\epsilon = g(n,\delta)$, where 
$g(n,\delta) = \log (1 + \frac{2}{c_0-1})$ and $c_0 = \sqrt{\frac{5n\xi q_2(1-q_2)}{33 \log(4/\delta)}}$.

\smallskip{}
\noindent{\textbf{Expected $l_2$ Loss.}}~~There are $n(\xi+1)$ $d$-dim vectors in $\tilde{y}_{1,1}, \ldots, \tilde{y}_{n,\xi+1}$. 
For $i\in[n(\xi+1)]$ and $j\in[d]$, let $z_{i,j} \in \{0,1\}$ be the $j$-th element of the $i$-th vector. 
Then, the unbiased estimate $\hf_j$ in \cite{Cheu_SP22} is 
$\hf_j = \frac{1}{n(1-2q_2)} \{(\sum_{i=1}^{n(\xi + 1)} z_{i,j}) - n(\xi+1)q_2\}$. 
Thus, we have 
\begin{align*}
\textstyle{\V[\hf_i]} 
= \textstyle{\frac{1}{n^2(1-2q_2)^2}} n(\xi+1)q_2(1-q_2) 
= \textstyle{\frac{(\xi+1)q_2(1-q_2)}{n(1-2q_2)^2}}.
\end{align*}
When $\epsilon$ is close to $0$ (i.e., $e^\epsilon \approx \epsilon + 1$), we have $q_2(1-q_2) \geq \frac{33}{5n\xi}(\frac{e^\epsilon+1}{e^\epsilon-1})^2 \log\frac{4}{\delta} \approx \frac{132}{5 \epsilon^2 n \xi}\log\frac{4}{\delta}$. 
Thus, we have 
\begin{align*}
\textstyle{\E\left[ \sum_{i=1}^d (\hf_i - f_i)^2 \right]} 
= \textstyle{\frac{d(\xi+1)q_2(1-q_2)}{n(1-2q_2)^2}} 
\geq \textstyle{\frac{132d \log(4/\delta)}{5 \epsilon^2 n^2}.}
\end{align*}

\smallskip{}
\noindent{\textbf{Communication Cost.}}~~When $\epsilon$ is close to $0$, $\xi$ can be written as 
$\xi \approx \frac{132}{5q_2(1-q_2)n \epsilon^2} \log\frac{4}{\delta} \geq \frac{528}{5n\epsilon^2} \log\frac{4}{\delta}$. 
Thus, 
$C_{tot} 
= 2\alpha n(\xi+1) 
\geq 2\alpha (n + \frac{528}{5\epsilon^2} \log\frac{4}{\delta})$. 

\smallskip{}
\noindent{\textbf{$\GMGA$.}}~~Assume that each fake user $u_i$ ($n+1 \leq i \leq n+n'$) performs the following attack: (i) send $\tilde{y}_{i,1} \in \{0,1\}^d$ such that all elements in $\calT$ are 1's and there are $dq_2 - |\calT|$ 1's in $[d]\setminus\calT$; (ii) send $\tilde{y}_{i,2}, \ldots, \tilde{y}_{i,\xi+1} \in \{0,1\}^d$ honestly (i.e., flip each bit from $0$ with probability $q_2$). 
The expected number of 1's in noisy values ($= (n+n')dq_2(\xi+1)$) is the same before and after poisoning. 
Thus, this attack avoids detection based on the number of 1's or noisy values. 
The number of messages ($=(n+n')(\xi+1)$) is also the same. 
For $j \in \calT$, 
\begin{align*}
&\E[\hf'_j] 
= \textstyle{\frac{\sum_{i=1}^{(n+n')(\xi + 1)} \E[z_{i,j}] - (n+n')(\xi+1)q_2}{(n+n')(1-2q_2)}}\\
&= \textstyle{\frac{n \E[\hf_j]}{n+n'} + \frac{n' + n' \xi q_2 - n'(\xi+1)q_2}{(n+n')(1-2q_2)}}
= \textstyle{\frac{n f_j}{n+n'} + \frac{n'(1 - q_2)}{(n+n')(1-2q_2)}.} 
\end{align*}
Thus, by using $\lambda = \frac{n'}{n+n'}$ and $f_T = \sum_{i \in \calT} f_i$, we have
\begin{align*}
\GMGA 
\geq 
\textstyle{\frac{\lambda(1 - q_2)|\calT|}{1-2q_2} - \lambda f_T}
\geq \textstyle{\lambda (|\calT| - f_T).}
\end{align*}
The first inequality holds because the above attack may not maximize the overall gain. 

\subsection{\LWY{}~\cite{Luo_CCS22}}
\label{sub:LWY22}
\noindent{\textbf{Protocol.}}~~In \LWY{}, 
each user $u_i$ first sends $x_i$ to the shuffler. 
Then, user $u_i$ adds one dummy value randomly selected from $[d]$ to the shuffler with probability $q_3 \in [0,1]$. 

\smallskip{}
\noindent{\textbf{Function $g$.}}~~If $q_3 = \frac{32d \log(2/\delta)}{\epsilon^2 n}$ and $\epsilon \in (0,3]$, then \LWY{} provides $(\epsilon, \delta)$-DP 
(see Lemma 3.2 in~\cite{Luo_CCS22}). 
Thus, 
$\epsilon$ can be expressed as $\epsilon = g(n,\delta)$, where 
$g(n,\delta) = \sqrt{\frac{32d\log(2/\delta)}{q_3 n}}$. 

\smallskip{}
\noindent{\textbf{Expected $l_2$ Loss.}}~~Let $a_{i,j}$ (resp.~$b_{i,j}$) $\in \{0,1\}$ be the number of input (resp.~dummy) values of user $u_i$ in the $j$-th item. 
Then, the unbiased estimate $\hf_j$ in \cite{Luo_CCS22} is 
$\hf_j = \frac{1}{n}\sum_{i=1}^n \left( a_{i,j} + b_{i,j} \right) - \frac{q_3}{d}$. 
Since $q_3 = \frac{32d \log(2/\delta)}{\epsilon^2 n} \ll d$, 
\begin{align*}
\textstyle{\E\left[ \sum_{i=1}^d (\hf_i - f_i)^2 \right]} 
= \textstyle{\frac{dn \frac{q_3}{d}(1 - \frac{q_3}{d})}{n^2}} 
\approx \textstyle{\frac{32d \log(2/\delta)}{\epsilon^2 n^2}}.
\end{align*}

\smallskip{}
\noindent{\textbf{Communication Cost.}}~~$C_{tot} 
= 2\alpha n(1+q_3) 
= 2\alpha n(1+\frac{32d \log(2/\delta)}{\epsilon^2 n})$.

\smallskip{}
\noindent{\textbf{$\GMGA$.}}~~Assume that each fake user performs the following attack: (i) send one input value for an item randomly selected from target items $\calT$; (ii) send a dummy value for an item randomly selected from target items 
$\calT$ with probability $q_3$. 
Then, the expected number of messages after poisoning ($= (n+n')(1+q_3)$) is the same as that before poisoning. 
Thus, this attack avoids detection. 
For $i \in \calT$, we have 
\begin{align*}
\E[\hf'_i] 
&= \textstyle{\frac{1}{n+n'}\left( n f_i + \frac{nq_3}{d} + \frac{n'}{|\calT|} + \frac{n'q_3}{|\calT|} \right) - \frac{q_3}{d}}\\
&= \textstyle{\frac{n}{n+n'}f_i + \frac{n'}{n+n'}\left(\frac{1+q_3}{|\calT|} - \frac{q_3}{d} \right)}.
\end{align*}
Thus, by using $\lambda = \frac{n'}{n+n'}$ and $f_T = \sum_{i \in \calT} f_i$, we have 
\begin{align*}
&\GMGA 
\geq 
\textstyle{\lambda \sum_{i \in \calT} \left(\frac{1+q_3}{|\calT|} - \frac{q_3}{d}  - f_i\right)} \\
&\text{(as the above attack may not maximize the overall gain)} \\
&= \textstyle{\lambda (1 - f_T) + \frac{32 \lambda (d-|\calT|) \log(2/\delta)}{\epsilon^2 n}.}
\end{align*}

\arxiv{
\section{Remark on the Use of Public-key Encryption}

In our protocol, we assume that messages between users and the shuffler and those between the shuffler and the data collector are encrypted by using a public-key encryption (PKE) scheme.
We briefly remark on how DP of our protocol relates to the security of the underlying PKE scheme.

First, if the PKE scheme satisfies indistinguishability under chosen-plaintext attacks (or IND-CPA security), which is the most standard security requirements for PKE \cite{Katz_07}, our protocol satisfies computational DP \cite{MPRV09} against the shuffler, that is, DP holds as long as the shuffler is computationally bounded.
More specifically, our protocol provides a slightly stronger guarantee that for any two neighboring tuples $(x_i)_{i\in[n]}$, $(x_i')_{i\in[n]}$ of users' inputs, the view of the shuffler during the execution of the protocol with 
one tuple 
is computationally indistinguishable from that with 
the other tuple. 
This can be seen from a straightforward reduction to the IND-CPA security of the PKE scheme:
Given an algorithm $\mathcal{A}$ that distinguishes between the two views of the shuffler, we can construct an algorithm $\mathcal{A}'$ that breaks the IND-CPA security as follows:
$\mathcal{A}'$ chooses $n-1$ inputs $x_1,\ldots,x_{n-1}$ and calls an encryption oracle to obtain ciphertexts $(\mathsf{ct}_1,\ldots,\mathsf{ct}_{n-1})$ of them.
$\mathcal{A}'$ then chooses a pair of messages $x_n$, $x_n'$ and sends them to a challenger.
Given a challenge ciphertext $\mathsf{ct}^*$, $\mathcal{A}'$ gives $(\mathsf{ct}_1,\ldots,\mathsf{ct}_{n-1},\mathsf{ct}^*)$ to $\mathcal{A}$ and outputs the bit output by $\mathcal{A}$.
If $\mathcal{A}$ distinguishes neighboring inputs with non-negligible advantage, then $\mathcal{A}'$ breaks the IND-CPA security with the same advantage.

Second, DP against the data collector holds statistically, independent of the computational security of PKE.
The view of the data collector during the execution of $\mathcal{S}_{\mathcal{D},\beta}$ consists of a pair $(\mathsf{pk},\mathsf{sk})$ of public and secret keys generated by himself, and ciphertexts of messages $(\tx_{\pi(1)},\ldots,\tx_{\pi(\tn^*)})$ computed by the shuffler.
If the dummy distribution $\mathcal{D}$ is appropriately chosen, the distribution of the messages $(\tx_{\pi(1)},\ldots,\tx_{\pi(\tn^*)})$ satisfies DP.
Since the key pair $(\mathsf{pk},\mathsf{sk})$ is independently generated, the view of the data collector can be simulated from the messages $(\tx_{\pi(1)},\ldots,\tx_{\pi(\tn^*)})$ only.
Statistical DP is then guaranteed due to the post-processing property.
}

\newpage 

\section{Meta-Review}

The following meta-review was prepared by the program committee for the 2025
IEEE Symposium on Security and Privacy (S\&P) as part of the review process as
detailed in the call for papers.

\subsection{Summary}
This paper considers an extension of the shuffle model of differential privacy (DP) where the shuffler has two additional abilities: downsampling real records, and generating dummy records (the augmented shuffle model). They propose a general algorithm to compute the relative frequencies of values where no noise is added to individual input records, making the method robust to data poisoning and collusion. 

\subsection{Scientific Contributions}
\begin{itemize}
\item Provides a Valuable Step Forward in an Established Field
\end{itemize}

\subsection{Reasons for Acceptance}
\begin{enumerate}
\item This paper provides a valuable step forward in an established field. The shuffle model of DP has seen a lot of interest recently, but existing solutions in the shuffle model have been shown to be vulnerable to data poisoning and to collusion between the data collector and users. This paper introduces an extended shuffle model, where the shuffler is given additional but realistic capabilities. The authors show that, in this extended shuffle model, they are able to design methods that address these shortcomings of shuffle DP, while also improving the utility of the algorithms.
\end{enumerate}

\subsection{Noteworthy Concerns} 
\begin{enumerate} 
\item The method introduced in the paper can be hard to scale. The computational complexity scales linearly with the cartesian product of the possible values of all attributes; running the algorithm may be impractical when there are many attributes/attribute values. The method also increases the communication costs to the server in a way that might not be practical in real-world deployments.
\item The utility (i.e., estimated frequencies) depends on the choice of the dummy count distribution $\mathcal{D}$; poor choices of $\mathcal{D}$ can lead to poor performance in this sense. The authors have provided suggestions as to how this distribution can be chosen.
\end{enumerate}

\end{document}